\tikzset{
	photon/.style={decorate, decoration={snake}, draw=red},
	electron/.style={draw=blue, postaction={decorate},
		decoration={markings,mark=at position .55 with {\arrow[draw=blue]{>}}}},
	gluon/.style={decorate, draw=magenta,
		decoration={coil,amplitude=4pt, segment length=5pt}},
	sderiv/.style={postaction={decorate},
		decoration={markings,mark=at position .3 with {\arrow{>}}}},
	tderiv/.style={postaction={decorate},
		decoration={markings,mark=at position .7 with {\arrow{<}}}},
	stderiv/.style={postaction={decorate},
		decoration={markings,mark=at position .7 with {\arrow{<}},mark=at position .3 with {\arrow{>}}}}
}
\definecolor{see}{RGB}{67,75,179}
\definecolor{darksee}{RGB}{42,44,148}
\definecolor{honey}{RGB}{232,180,129}
\definecolor{lighthoney}{RGB}{255,254,220}
\definecolor{citecol}{rgb}{0.5,0,0} 
\definecolor{blue1}{RGB}{130,150,209}
\newenvironment{narrow}[2]{%
	\begin{list}{}{%
			\setlength{\topsep}{0pt}%
			\setlength{\leftmargin}{#1}%
			\setlength{\rightmargin}{#2}%
			\setlength{\listparindent}{\parindent}%
			\setlength{\itemindent}{\parindent}%
			\setlength{\parsep}{\parskip}}%
		\item[]}{\end{list}}
\DeclareSymbolFont{bbold}{U}{bbold}{m}{n}
\DeclareSymbolFontAlphabet{\mathbbold}{bbold}
\definecolor{see}{RGB}{67,75,179}
\newcommand{\E}{\mathcal{E}}
\newcommand{\fv}{\mathfrak{v}}
\newcommand{\fA}{\mathfrak{A}}
\newcommand{\fc}{\mathfrak{c}}
\newcommand{\F}{\mathfrak{F}}
\newcommand{\fP}{\mathfrak{P}}
\newcommand{\PV}{\mathfrak{PV}}
\newcommand{\Gcal}{\mathcal{G}}  
\newcommand{\Ccal}{\mathcal{C}}
\newcommand{\Dcal}{\mathcal{D}}
\newcommand{\Ecal}{\mathcal{E}} 
\newcommand{\Fcal}{\mathcal{F}} 
\newcommand{\Acal}{\mathcal{A}} 
\newcommand{\Ncal}{\mathcal{N}}
\newcommand{\Mcal}{\mathcal{M}}
\newcommand{\Ocal}{\mathcal{O}}
\newcommand{\Scal}{\mathcal{S}}
\newcommand{\Pcal}{\mathcal{P}}
\newcommand{\Tcal}{\mathcal{T}}
\newcommand{\Ci}{C^\infty} 
\newcommand{\Hom}{\mathrm{Hom}}
\newcommand{\Loc}{\mathrm{\mathbf{Loc}}}  
\newcommand{\bC}{\mathrm{\mathbf{C}}}  
\newcommand{\Sym}{\mathrm{Sym}}     
\newcommand{\TVec}{\mathrm{\mathbf{TVec}}}       
\newcommand{\Alg}{\mathrm{\mathbf{Alg}}}    
\newcommand{\WF}{\mathrm{WF}}         
\newcommand{\id}{\mathrm{id}}               
\newcommand{\dvol}{\mathrm{dvol}_{\sst{M}}} 
\DeclareMathOperator{\im}{\mathrm{Im}}             
\newcommand{\loc}{\mathrm{loc}}
\newcommand{\pol}{\mathrm{pol}}
\newcommand{\reg}{\mathrm{reg}}
\newcommand{\inj}{{\rm inj}}
\newcommand{\NN}{\mathbb{N}}          
\newcommand{\ZZ}{\mathbbmss{Z}}     
\newcommand{\RR}{\mathbb{R}}           
\newcommand{\CC}{\mathbb{C}}           
\newcommand{\al}{\alpha}
\newcommand{\la}{\lambda}
\newcommand{\ph}{\phi}
\newcommand{\T}{\cdot_{{}^\Tcal}}
\newcommand{\TT}{\Tcal}
\newcommand{\Poi}[2]{\left\lfloor#1,#2\right\rfloor}
\newcommand{\Riem}{\textrm{Riem}}
\newcommand{\sst}[1]{\scriptscriptstyle{#1}}  
\newcommand{\1}{\mathds{1}}                         
\newcommand{\be}{\begin{equation}}
\newcommand{\ee}{\end{equation}}
\newcommand{\Lap}{\bigtriangleup}
\DeclareMathOperator{\supp}{\mathrm{supp}}      
\def\normOrd#1{\mathop{:}\nolimits\!#1\!\mathop{:}\nolimits}
\newcommand{\Pei}[2]{\lfloor #1, #2 \rfloor}
\theoremstyle{plain}
\newtheorem{thm}{Theorem}[section]
\newtheorem{df}[thm]{Definition}
\newtheorem{prop}[thm]{Proposition}
\newtheorem{cor}[thm]{Corollary}
\newtheorem{lemma}[thm]{Lemma}
\newtheorem{conj}{Conjecture}
\theoremstyle{definition}
\newtheorem{rem}[thm]{Remark}
\newtheorem{exa}{Example}
\def\d{{\rm d}}
\def\xto{\xrightarrow}
\def\potimes{\widehat{\otimes}}
\def\Pfrak{{\mathfrak P}}
\def\Alg{\mathbf{Alg}}
\def\CAlg{\mathbf{CAlg}}
\def\PAlg{\mathbf{PAlg}}
\def\Ch{\mathbf{Ch}}
\def\Caus{\mathbf{Caus}}
\def\Open{\mathbf{Open}}
\def\PFA{\mathbf{PFA}}
\def\FA{\mathbf{FA}}
\def\Emb{\mathbf{Emb}}
\def\Riem{\mathbf{Riem}}
\def\Nuc{\mathbf{Nuc}}
\def\Nuch{\mathbf{Nuc}_{\hbar}}
\def\Vec{\mathbf{Vec}}
\def\Oscr{\mathscr{O}}
\def\Orm{\mathrm{O}}
\def\cball{\mathbf{CBall}}
\title{Relating nets and factorization algebras of observables:\\ free field theories}
\author[1]{\small{Owen Gwilliam}}
\address{ University of Massachusetts, Amherst,\\ Department of Mathematics,\\
\normalfont{\texttt{gwilliam@math.umass.edu}}}
\author[2]{\small{Kasia Rejzner}}
\address{University of York,  \\
	Department of Mathematics,\\
\normalfont{\texttt{kasia.rejzner@york.ac.uk}}
}
\date{\today}
\begin{document}
 \sloppy


\maketitle
\begin{abstract}
In this paper we relate two mathematical frameworks that  make perturbative quantum field theory rigorous: perturbative algebraic quantum field theory (pAQFT) and the factorization algebras framework developed by  Costello and Gwilliam. To make the comparison as explicit as possible, we use the free scalar field as our running example, while giving proofs that apply to any field theory whose equations of motion are Green-hyperbolic (which includes, for instance, free fermions).
The main claim is that for such free theories, there is a natural transformation intertwining the two constructions. In fact, both approaches encode equivalent information if one assumes the time-slice axiom. The key technical ingredient is to use time-ordered products as an intermediate step between a net of associative algebras and a factorization algebra.
\end{abstract}
\tableofcontents

Recently there have appeared two, rather elaborate formalisms for constructing the observables of a quantum field theory via a combination of the Batalin-Vilkovisky framework with renormalization methods. One \cite{FR}, later referred to as FR, works on Lorentzian manifolds and weaves together (a modest modification of) algebraic quantum field theory (AQFT) with the Epstein-Glaser machinery for renormalization. 
The other \cite{CoGw,CG2}, later referred to as CG, works with elliptic complexes (i.e., ``with Euclidean theories'') and constructs factorization algebras using renormalization machinery developed in \cite{Cos}.
To practitioners of either formalism, the parallels are obvious, in motivation and techniques and goals.
It is thus compelling (and hopefully eventually useful!) to provide a systematic comparison of these formalisms, with hopes that a basic dictionary will lead in time to effortless translation.

The primary goal in this paper is to examine in detail the case of free field theories,
where renormalization plays no role and we can focus on comparing the local-to-global descriptions of observables. 
In other words, in the context of this free theory, 
we show how to relate the key structural features of AQFT and factorization algebras.
In the future we hope to compare interacting field theories, 
which demands an examination of renormalization's role
and deepens the comparison by touching on more technical features.

The key to our comparison result is that while the approach of \cite{Cos,CoGw,CG2} constructs the space of quantum observables by deforming the differential on the classical observables, 
one can equivalently leave the differential unchanged and deform the factorization product instead. This deformation of the factorization product corresponds, in the formalism of \cite{FR,FR3}, to the passage from the pointwise product to the time-ordered product by means of the time-ordering operator $\TT$. Hence one can either work with the pointwise product $\cdot$ and the differential $\hat{s}\doteq \TT^{-1}\circ s\circ \TT$, or with the product $\T$ and the differential $s$. In both cases the differential is a derivation with respect to the corresponding product, but only for arguments with disjoint supports, and in fact one obtains a prefactorization structure valued in  Beilinson-Drinfeld algebras.

In the Lorentzian framework of \cite{FR,FR3}, there is, in addition to the commutative time-ordered product, a non-commutative product $\star$, identified as the operator product of quantum observables. The products $\T$ and $\star$ are related by time-ordering and we show in section~\ref{assocstr} how to reconstruct $\star$ from $\T$ for the algebra of free fields.

A secondary goal of this paper is to facilitate communication between communities,
by providing a succinct treatment of this key example in each formalism.
We expect that interesting results---and questions!---can be translated back and forth.

Indeed, one consequence of this effort at comparison is that it spurred a modest enhancement of each formalism.
On the FR side, we introduce a differential graded (dg) version of the usual axioms for the net of algebras.
Prior work fits nicely into this definition, and in the future we hope to examine its utility in gauge theories.
On the CG side, we show that the free field construction applies to Lorentzian manifolds as well as Euclidean manifolds.
(The case of interacting theories in the CG formalism does not port over so simply, as it exploits features of elliptic complexes in its renormalization machinery.)

As an overview of the paper, we begin by raising key questions about how the formalisms agree and differ.
To sharpen these questions, we give precise descriptions of the \emph{outputs} generated by each formalism,
namely the kinds of structure possessed by observables.
On the FR side, one has a net of algebras; on the CG side, a factorization algebra of cochain complexes.
With these definitions in hand, we can state our main results precisely.
As a brief, imprecise gloss, our main result is that the FR and CG constructions agree where they overlap:
if one restricts the CG factorization algebra of observables to the opens on which the FR net is defined (and takes the zeroth cohomology),
then the factorization algebra and net determine the same functor to vector spaces.
We also explain how one can recover as well the algebraic structures on the nets (Poisson for the classical theory, associative for the quantum) from the constructions.
Next, we turn to carefully describing the \emph{constructions} in each formalism, so that we can prove the comparison results.
We recall in detail how each formalism constructs the observables for the free theory given by a Green-hyperbolic operator, producing on the one hand, a net of  algebras on a globally hyperbolic Lorentzian manifold, and on the other, a factorization algebra.
With the constructions in hand, the proof of the comparison results is straightforward.
Finally, we draw some lessons from the comparison and point out natural directions of future inquiry. 

\section{A preview of the key ideas}
\label{preview}

Before delving into the constructions, we discuss field theory from a very high altitude, ignoring all but the broadest features, and explain how each formalism approaches observables.
With this knowledge in hand, it is possible to raise natural questions about how the formalisms differ.
The rest of the paper can be seen as an attempt to answer these questions.

\subsection{Classical theories}

A classical field theory is specified, loosely speaking, by
\begin{enumerate}
\item a smooth manifold $M$ (the ``spacetime''), 
\item a smooth fiber bundle over the manifold $\pi: E \to M$ whose smooth sections $\Gamma(M,E)$ are the ``fields,'' 
\item and a system of partial differential equations on the fields (the ``equations of motion'' or ``Euler-Lagrange equations'') that are variational in nature.
\end{enumerate}

We will discuss issues of functional analysis later, 
but note that we equip the space $\Gamma(M,E)$ of smooth sections with its natural Fr\'echet topology and use the notation $\Ecal(M)$ for it.

In this paper, the focus is on free fields and
we will write the equations as $P(\phi) = 0$ where $\phi$ is a field and $P$ denotes the equations of motion operator.
(There are many variations and refinements on this loose description, of course, but most theories fit into this framework.)

Here the manifold $M$ is equipped with a metric $g$, 
and an important difference is that the FR formalism requires $g$ to have Lorentzian signature
while the CG formalism requires $g$ to be Riemannian. We use the notation $\Mcal\equiv(M,g)$.

In this paper we focus on the Lorentzian case and
we will assume that $P$ is a \textit{Green-hyperbolic} operator, i.e. it has unique retarded and advanced Green functions (see \cite{GreenBear} for a lucid and extensive discussion of this notion). Note that this class of operators allows one to treat the free scalar field and the free Dirac fermion as special cases.

The running example in this paper is the free scalar field, where the fiber bundle is the trivial rank one vector bundle $E = M \times \RR \to M$ 
so that the fields are simply $C^\infty(M,\RR)$, the smooth functions on $M$. The differential equations can be concisely given, since they play such a central role throughout physics and mathematics:
\[
\Box_g \phi + m^2 \phi = 0,
\]
where $\Box_g$ denotes the d'Alembertian (i.e. Laplace-Beltrami operator for a Lorentzian metric) and $m\in\RR_+$ is called the ``mass.''

A crucial feature of field theory is that it is local on the manifold $M$.
Note, to start, that the fields $\Ecal$ form a sheaf that assigns to an open set $U$,
the set
\[
\Ecal(U) = \Gamma(U, \pi_U: \pi^{-1}(U) \to U)
\]
of smooth sections of the bundle over $U$.
That is, $\Ecal$ defines a contravariant functor $\Ecal: \Open(M)^{op} \to Set$ 
from the poset category $\Open(M)$ of open sets in $M$ to the category of sets.
As global smooth sections are patched together from local smooth sections, $\Ecal$ forms a sheaf of sets on $M$.
(It also forms a sheaf of vector spaces and of topological vector space.)

Consider now $Sol(M)$, the set of solutions to the equations of motion, 
i.e., the configurations (or fields) that are allowed by the physical system described by the classical field theory.
(We ignore here, since we're speaking vaguely, whether we should consider solutions that are not smooth, such as distributional solutions and whether we ought to impose boundary conditions.)
Since differential equations are, by definition, local on $M$,
solutions to the equations of motion actually form a sheaf on $M$ .
That is, if we write
\[
Sol(U) = \{ \phi \in \Ecal(U) \, :\, P(\phi) = 0 \}
\]
for sections on $U$ that satisfy the equations of motion,
then $Sol$ also defines a contravariant functor $Sol: \Open(M)^{op} \to Set$.
As global solutions are patched together from local solutions, $Sol$ forms a sheaf of sets on~$M$.

Any measurement of the system should then be some function of $Sol(M)$, the set of global solutions.
In other words, the algebra of functions $\Ocal(Sol(M))$ constitutes an idealized description of all potential measuring devices for the system.
(An important issue later in the text will be what kind of functions we allow, but we postpone that challenge for now, simply remarking that solutions often form a kind of ``manifold,'' possibly singular and infinite-dimensional, so that $\Ocal$ is not merely set-theoretic.)
Even better, we obtain a covariant functor $\Ocal(Sol(-)): \Open(M) \to \CAlg$ to the category $\CAlg$ of \emph{commutative} algebras.
As $Sol$ is a sheaf, $\Ocal(Sol(-))$ should be a cosheaf, meaning that it satisfies a gluing axiom so that the global observables are assembled from the local observables.

Nothing about this general story depends on the signature of the metric,
and each formalism gives a detailed construction of a cosheaf of commutative algebras for a classical field theory (although some technical choices differ, e.g., with respect to functional analysis).
It is with \emph{quantum} field theories that the formalisms diverge.

\subsection{Quantization}

Loosely speaking, the formalisms describe the observables of a quantum field theory as follows.
\begin{itemize}
\item The CG formalism provides a functor $Obs^q: \Open(M) \to \Ch$, which assigns a cochain complex (or differential graded (dg) vector space) of observables to each open set. This cochain complex is a deformation of a commutative dg algebra $Obs^{cl}$, where $H^0(Obs^{cl}(U)) = \Ocal(Sol(U))$.
\item The FR formalism provides a functor $\fA: \Caus(\Mcal) \to \Alg^*$, which assigns a unital $*$-algebra to each ``causally convex'' open set (so that $\Caus(\Mcal)$ is a special subcategory of $\Open(M)$ depending on the global hyperbolic structure of $\Mcal$). The algebra $\fA(U)$ is, in practice, a deformation quantization of the Poisson algebra $\Ocal(Sol(U))$.
\end{itemize}
In brief, both formalisms deform the classical observables, but they deform it in different ways.
In Section \ref{formal definitions} we give precise descriptions of both formalisms.

Two questions jump out:
\begin{enumerate}\label{questions}
\item Why does the FR formalism (and AQFT more generally) restrict to a special class of opens but the CG formalism does not? And what should the FR formalism assign to a general open? 
\item Why does the FR formalism (and AQFT more generally) assign a $*$-algebra but the CG formalism assigns only a vector space? And can the CG approach recover the algebra structure as well?
\end{enumerate}
Both questions admit relatively simple answers, but those answers require discussion of the context (e.g., the differences between elliptic and hyperbolic PDE) and of the BV framework for field theory.
We will organize our treatment of the free scalar field toward addressing these questions.

\section{Nets versus factorization algebras}
\label{formal definitions}

This section sets the table for this paper.
We begin with some background notation (which is mostly self-explanatory, so we suggest the reader only refer to it if puzzled)
before reviewing quickly the key definitions about nets and factorization algebras.
We made an effort to make the definitions accessible to those from the complementary community.

\subsection{Notations}
\label{subsec:notation}

\subsubsection{Geometry}

We fix throughout a smooth vector bundle over the manifold $\pi: E \to M$.
As we work throughout with manifolds equipped with a metric,
we use the associated volume form of $(M,g)$ to identify smooth functions with densities. We also assume that $E$ is equipped with a nondegenerate bilinear pairing on the fibers, so as to identify sections of $E$ with sections of the dual bundle~$E^*$. 

\subsubsection{Functional analysis}\label{FA}

We will follow the conventions that began with Schwartz for various function spaces.
We denote
\begin{itemize}
\item $\Ecal(M) \doteq \Gamma(M,E)$, $\Ecal^*(M) \doteq \Gamma(M,E^*)$ with their natural Fr{\'e}chet topologies,
\item $\Ecal'(M)$ for the strong topological dual (i.e., the space of continuous linear $\RR$-valued functions on a given topological space), which consists of compactly supported distributions,
\item $\Dcal(M)\doteq \Gamma_c(M,E)$, $\Dcal^*(M)\doteq \Gamma_c(M,E^*)$ with their natural inductive limit topologies, and
\item $\Dcal'(M)$ for the strong  topological dual (i.e., the space of continuous linear $\RR$-valued functions on a given topological space), which consists of non-compactly supported distributions.
\end{itemize}
 The reason for using this notation is that it is quite standard in the literature (e.g. \cite{Hoer1}), where one makes a distinction between $\Gamma(M,E)$ (or $\Ci(M,\RR)$ in the special case of $E=M\times \RR$), which is understood just as a vector space and $\Ecal(M)$, which is  $\Gamma(M,E)$ with its natural Fr{\'e}chet topology.  
  
We will also work with certain natural completions of tensor products,
which arise by geometric constructions.

Given vector bundles $E \to M$ and $E' \to M'$, the exterior tensor product $E \boxtimes E'$ denotes the vector bundle on $M \times M'$ arising by the Whitney tensor product of the  bundles $\pi_1^* E \to M \times M'$ and $\pi_2^* E' \to M \times M'$ arising by pull back along the projections $\pi_1: M \times M' \to M$ and $\pi_2: M \times M' \to M'$.
We then introduce the following notations: 
\begin{itemize}
	\item $\Ecal_n(M)\doteq \Gamma(M^n,E^{\boxtimes n})$, which is equal to the completed projective tensor product $\Ecal(M)^{\widehat{\otimes}n}$,
	\item $\Ecal_n'(M)\doteq \Gamma(M^n,E^{\boxtimes n})'$, with the strong topology.
	\item $\Dcal_{n}(M)\doteq \Gamma_c(M^n,E^{\boxtimes n})$,
	\item $\Dcal_{n}'(M) \doteq \Gamma_c(M^n,E^{\boxtimes n})'$, with the strong topology.
\end{itemize}
Note that  $\Dcal_n'(M)$ is the distributional completion of $\Ecal^*_n(M)$
and $\Ecal_n'(M)$ is the distributional completion of $\Dcal^*_n(M)$.
In particular,  $\Dcal'(M)$ is the distributional completion of $\Ecal^*(M)$ (smooth functions are densely embeded into the space of distributions)
and $\Ecal'(M)$ is the distributional completion of $\Dcal^*(M)$ (compactly supported functions are densely embedded into the space of compactly supported distributions).

Since we fixed an explicit isomorphism $E \cong E^*$ of vector bundles,
we have preferred inclusions $\Ecal(M) \cong \Ecal_n^*(M)\hookrightarrow \Dcal_n'(M)$ and $\Dcal_n(M)\cong \Dcal_n^*(M) \hookrightarrow \Ecal_n'(M)$.

\begin{rem}
We note that these conventions differ from those in \cite{CoGw},
where 
$\Ecal_c(M)$ denotes the compactly supported smooth sections, 
$\overline{\Ecal}(M)$ the distributional sections,
and $\overline{\Ecal}_c(M)$ the compactly supported distributional sections.
\end{rem}

We indicate the complexification of a real vector space $V$ by a superscript~$V^{\sst\CC}$. 

\subsubsection{Categories}

Myriad categories will appear throughout this work, 
and so we introduce some of the key ones, 
as well as establish notations for generating new ones.
Categories will be indicated in bold.

We start with a central player.
Let $\Nuc$ denote the category of nuclear, topological locally convex vector spaces,
which is a subcategory of the category of topological locally convex spaces $\TVec$. 
It is equipped with a natural symmetric monoidal structure via the completed projective tensor product $\widehat{\otimes}$
(although we could equally well say `injective' as the spaces are nuclear). 
A reason to work with nuclear spaces is the fact that injective and projective tensor products are isomorphic for such spaces, hence it is sufficient to work with just one monoidal structure. 
Nuclearity is preserved under taking strong duals, a direct sum, an inductive limit of a countable family of nuclear spaces, a product and a projective limit of any family of nuclear spaces. Moreover, spaces of smooth sections and their strong duals introduced in section~\ref{FA} are nuclear.
(See \cite{Tre67} for an accessible treatment of nuclear spaces.)

We emphasize that we make this choice as it suits our purposes.
Many of our ideas and constructions work with other categories, such as $\TVec$,
but require one to be more attentive to which monoidal structures are in play.

\begin{rem}
Given the spaces appearing in our construction, 
it is often worthwhile to work instead with convenient vector spaces \cite{Michor},
but we will not discuss that machinery here,
pointing the interested reader to~\cite{CoGw, Book}.
\end{rem}

If we wish to discuss the category of unital associative algebras of such vector spaces, 
we write $\Alg(\Nuc)$.
Here the morphisms are continuous linear maps that are also algebra morphims.
Similarly, we write $\CAlg(\Nuc)$ for unital commutative algebras in $\Nuc$
and $\PAlg(\Nuc)$ for unital Poisson algebras therein.
We will typically want $*$structures (i.e., an involution compatible with the multiplication),
and we use $\Alg^*(\Nuc)$, $\CAlg^*(\Nuc)$, and $\PAlg^*(\Nuc)$, respectively.

More generally, for $\bC$ a category with symmetric monoidal structure $\otimes$,
we write $\Alg(\bC, {\otimes})$ for the unital algebra objects in that category.
Often we will write simply $\Alg(\bC)$, 
if there is no potential confusion about which symmetric monoidal structure we mean. 

It is often useful to forget extra structure. 
We use  $\fv: \PAlg^*(\Nuc) \to \Nuc$ and $\fv: \Alg^*(\Nuc) \to \Nuc$ to denote forgetful functors to vector spaces.
We use $\fc: \PAlg^*(\Nuc) \to \CAlg^*(\Nuc)$ to denote the forgetful functor to commutative algebras. 

In a similar manner, if $\bC$ is an additive category,
we write $\Ch(\bC)$ to denote the category of cochain complexes and cochain maps in $\bC$.
Thus $\Ch(\Nuc)$ denotes the category of cochain complexes in $\Nuc$
(which, unfortunately, is not a particular nice place to do homological algebra).
We note that we allow unbounded complexes, 
but in practice our constructions here produce complexes bounded on one side.
(If we treated gauge theories, we would have complexes unbounded in both directions.)

This category admits a symmetric monoidal structure by the usual formula:
the degree $k$ component of the tensor product of two cochain complexes is
\[
(A^\bullet \otimes B^\bullet)^k = \bigoplus_{i+j=k} A^i \,\widehat{\otimes}\, B^j.
\]
Hence we write $\Alg(\Ch(\Nuc))$ for the category of algebra objects, 
also known as dg algebras.

\begin{rem}
This category $\Ch(\Nuc)$ admits a natural notion of weak equivalence:
a cochain map is a weak equivalence if it induces an isomorphism on cohomology.
Thus it is a relative category and presents an $(\infty,1)$-category,
although we will not need such notions here.
\end{rem}

There is another important variant to bear in mind.
In the original axiomatic framework of Haag and Kastler, the notion of subsystems is encoded in the injectivity requirement for algebra morphisms. 
We use the superscript ``$\inj$'', if we want to impose this condition on morphisms, for a given category. 
Hence $\Alg^*(\Nuc)^\inj$ consists of the category whose objects are nuclear, topological locally convex unital $*$-algebras but whose morphisms are \emph{injective} continuous algebra morphisms.

\subsubsection{Dealing with $\hbar$}

In perturbative field theory,
one works with $\hbar$ as a formal variable.
In our situation, since we restrict to free fields,
this is overkill: 
one can actually set $\hbar = 1$ throughout, and all the constructions are well-defined.
But $\hbar$ serves as a helpful mnemonic for what we are deforming 
and as preparation for the interacting case.

We thus introduce categories involving $\hbar$ that emphasize its algebraic role and minimize any topological issues.
As a gesture at the topological issues, note that the ring $\CC[[\hbar]]$ is equipped with an adic topology, 
and so one might want to work with topological vector spaces that are modules over $\CC[[\hbar]]$ in a continuous way.
There are then some nontrivial compatibilities to discuss.
Instead, we will restrict our attention to a special class of objects where we can avoid such discussions,
as follows.

Let $\Nuch$ denote the following category.
The objects are the same as those of $\Nuc$,
but given $V \in \Nuc$ we use $V[[\hbar]]$ to denote the corresponding object in $\Nuch$.
The reader should \emph{think} of this space as $\prod_{n \geq 0} \hbar^n V$ 
so that a vector $v$ would be a formal power series
\[
v = v_0 + \hbar v_1 + \cdots + \hbar^n v_n + \cdots
\]
with coefficients in~$V$.
We want a morphism to encode an $\hbar$-linear map of such modules, 
so it should be determined by where the $\hbar^0 V$ component of $V[[\hbar]]$ would go.
Hence, we define the space of morphisms to be
\[
\Hom_{\Nuch}(V[[\hbar]],W[[\hbar]]) = \prod_{n \geq 0}\hbar^n\Hom_\Nuc(V,W),
\]
where the $\hbar^n$ is just a formal bookkeeping device.
Composition is by precisely the rule one would use for $\hbar$-linear maps.
For instance, given $f = (\hbar^n f_n) \in \Hom_{\Nuch}(V[[\hbar]],W[[\hbar]])$ and $g = (\hbar^n g_n) \in \Hom_{\Nuch}(W[[\hbar]],X[[\hbar]])$, 
the composite $g \circ f$ has 
\begin{align*}
(g \circ f)_0 &= g_0 \circ f_0, \\
(g \circ f)_1 &= g_1 \circ f_0 + g_0 \circ f_1, \\
&\vdots
\end{align*}
since informally we want
\[
(\sum_{n \geq 0} \hbar^n g_n) \circ (\sum_{m \geq 0} \hbar^m f_m) = \sum_{p \geq0} \sum_{m +n = p} \hbar^p g_n \circ f_m.
\]
We equip this category with a symmetric monoidal structure borrowed from~$\Nuc$:
\[
V[[\hbar]] \potimes_\hbar W[[\hbar]] = (V \potimes W)[[\hbar]].
\]
Note that it agrees with the completed tensor product over $\CC[[\hbar]]$,
in the same sense that composition of morphisms does.

The category $\Alg(\Nuch)$ then consists of algebra objects in that symmetric monoidal category,
$\Ch(\Nuch)$ denotes cochain complexes therein, 
and $\Alg(\Ch(\Nuch))$ denotes dg algebras therein.
We use again the notations $\fv$ and $\fc$ as forgetful functors, hopefully without producing confusion.

\subsection{Overview of the pAQFT setting}

The framework of AQFT formalizes rigorously the core ideas of Lorentzian field theory, 
building on the lessons of rigorous quantum mechanics,
but the standard calculational toolkit for interacting QFT does not fit into the framework.
Perturbative AQFT is a natural modification of the framework within which one often can realize a version of the usual calculations,
while preserving the structural insights of AQFT.

\subsubsection{}

Let $\Mcal=(M,g)$ be an $n$-dimensional spacetime, i.e., a smooth $n$-dimensional manifold with the metric $g$ of signature $(+,-,\dots,-)$. We assume $\Mcal$ to be oriented, time-oriented and globally hyperbolic (i.e. it admits foliation with Cauchy hypersurfaces). To make this concept clear, let us recall a few important definitions in Lorentzian geometry.

\begin{df}
Let $\gamma: \RR\supset I\rightarrow M$ be a smooth curve in $M$, for $I$ an interval in $\RR$ and let  $\dot{\gamma}$ be the vector tangent to the curve. We say that $\gamma$ is
\begin{itemize}
	\item  \textbf{timelike}, if  $g(\dot{\gamma},\dot{\gamma})> 0$,
	\item  \textbf{spacelike}, if  $g(\dot{\gamma},\dot{\gamma})< 0$,
	\item  \textbf{lightlike} (or null), if  $g(\dot{\gamma},\dot{\gamma})= 0$,
	\item  \textbf{causal}, if  $g(\dot{\gamma},\dot{\gamma})\geq 0$.	
\end{itemize}
\end{df}

The classification of curves defined above is the \textit{causal structure} of~$\Mcal$.

\begin{df}
A set $\Ocal\subset\Mcal$ is \textbf{causally convex} 
if for any causal curve $\gamma:[a,b]\rightarrow \Mcal$ whose endpoints $\gamma(a),\gamma(b)$ lie in $\Ocal$, 
then every interior point $\gamma(t)$,  for $t\in [a,b]$, also lies in $\Ocal$ for every $t\in [a,b]$.
\end{df}

With these definitions in hand, we can define the category of open subsets on which we specify algebras of observables.

\begin{df}\label{causal}
Let $\Caus(\Mcal)$ be the collection of relatively compact, connected, contractible, causally convex subsets $\Ocal\subset \Mcal$. Note that the inclusion relation $\subset$ is a partial order on $\Caus(\Mcal)$, so $(\Caus(\Mcal),\subset~)$ is a poset (and hence a category).
\end{df}

\subsubsection{}

To formulate a classical theory, we start with making precise what we mean by the model for the space of classical fields.

\begin{df}\label{ClassFT}
	A \textbf{classical field theory model} on a spacetime $\Mcal$ is a functor $\fP : \Caus(\Mcal) \to \PAlg^*(\Nuc)^\inj$ that obeys \textbf{Einstein causality}, i.e.:
for $\Ocal_1,\Ocal_2\in \Caus(\Mcal)$ that are spacelike to each other, we have
		\[
		\Poi{\mathfrak{P}(\Ocal_1)}{\mathfrak{P}(\Ocal_2)}_{\Ocal}=\{0\}\,,
		\]
		where $\Poi{.}{.}_{\Ocal}$ is the Poisson bracket in  any $\fP(\Ocal)$ for an $\Ocal$ that contains both $\Ocal_1$ and $\Ocal_2$.
\end{df}

Note how this definition formalizes the sketch of classical field theory in Section \ref{preview}:
we have a category of open sets --- here, $\Caus(\Mcal)$ --- and a functor to a category of Poisson algebras,
since the observables of a classical system should form such a Poisson algebra. 

In the AQFT community, the underlying commutative algebra of $\Pfrak$ is known as the space of classical fields. In more formal language, we introduce a forgetful functor $\fc: \PAlg^*(\Nuc) \to \CAlg(\Nuc)$ and state the following. 

\begin{df}
The \textbf{space of classical fields} is the functor $\fc\circ \Pfrak$.
\end{df}

\begin{rem}
Note that there is a conflict here with terminology in the CG framework (and with some other communities working in physics),
where a field is an element of $\Ecal$, i.e., a configuration in the AQFT sense.
Thus the space of fields in the CG sense corresponds to the configuration space in the AQFT sense.
In the CG framework, $\fc \circ \Pfrak$ is the commutative algebra of observables on the classical fields, {\em aka} functions on the configuration space.
\end{rem}

It is useful to introduce a further axiom that articulates more precisely how the dynamics of a classical theory should behave.
Here, a time orientation plays an important role.

\begin{df}
Given the global timelike vector field $u$ (the time orientation) on $M$, 
a causal curve $\gamma$ is called \textbf{future-directed} if $g(u,\dot{\gamma}) > 0$ all along $\gamma$. 
It is \textbf{past-directed} if $g(u,\dot{\gamma}) < 0$.
\end{df}

\begin{df}
A causal curve $gamma:(a,b) \rightarrow M$ is \textbf{future inextendible} if $\lim_{t \rightarrow b} \gamma(t)$ does not exist in $M$.
\end{df}

\begin{df}
A \textbf{Cauchy hypersurface} in $\Mcal$ is a smooth subspace of $\Mcal$ such that every inextendible causal curve intersects it exactly once.
\end{df}

\begin{rem}
The significance of Cauchy hypersurfaces lies in the fact that one can use them to formulate the initial value problem for partial differential equations, and for normally hyperbolic equations this problem has a unique solution.
\end{rem}

With this notion in hand, we have a language for enforcing equations of motion at an algebraic level.

\begin{df}
A model is said to be \textbf{on-shell} if in addition it satisfies the \textbf{time-slice axiom}:
for any $\Ncal\in\Caus(\Mcal)$ a neighborhood of a Cauchy surface in the region $\Ocal\in\Caus(\Mcal)$, 
then $\mathfrak{P}$ sends the inclusion $\Ncal \subset \Ocal$ to an isomorphism $\mathfrak{P}(\Ncal)\cong \mathfrak{P}(\Ocal)$. Otherwise the model is called \textbf{off-shell}.
\end{df}

\begin{rem}
Note that being on-shell codifies the idea that the set of solutions is specified by the initial value problem on a Cauchy hypersurface.
\end{rem}

\subsubsection{}

We now turn to the quantum setting.

\begin{df}\label{AQFT}
A \textbf{QFT model} on a spacetime $\Mcal$ is a functor $\mathfrak{A} : \Caus(\Mcal) \to \Alg^*(\Nuc_\hbar)^{\inj}$ that satisfies \textbf{Einstein causality} (Spacelike-separated observables commute). That is, for $\Ocal_1,\Ocal_2\in \Caus(\Mcal)$ that are spacelike to each other, we have
		\[
		[\fA(\Ocal_1),\fA(\Ocal_2)]_\Ocal=\{0\}\,,
		\]
		where $[.,.]_{\Ocal}$ is the commutator in  any $\fA(\Ocal)$ for an $\Ocal$ that contains both $\Ocal_1$ and $\Ocal_2$. 
\end{df}

\begin{df}\label{timeslice}
	A QFT model is said to be \textbf{on-shell} if in addition it satisfies the \textbf{time-slice axiom} (where one simply replaces $\mathfrak{P}$ by $\fA$ in the definition above).
	Otherwise, it is \textbf{off-shell}. 
\end{df}

Often a quantum model arises from a classical one by means of quantization. In order to formalize this, we need some notation. Given a functor $\F$, let $\F[[\hbar]]$ denote the functor sending $\Ocal$ to $\F(\Ocal) \widehat{\otimes} \CC[[\hbar]]$.

\begin{df}\label{df:quantization}
A quantum model $\fA$ is said to be a \textbf{quantization} of a classical model $\Pfrak$, if:
\begin{enumerate}
	\item $\fv\circ \fA \cong \fv \circ \Pfrak[[\hbar]]$,
	\item $\fc\circ \Pfrak \cong \fA/(\hbar)$, and
	\item the brackets $-\frac{i}{\hbar}[.,.]_\Ocal$ coincides with $\{.,.\}_\Ocal$ modulo $\hbar$,
\end{enumerate}
where the isomorphism (2) is induced by the isomorphism~(1).
\end{df}

Later, it will be important to have a generalization of definitions that assigns a \emph{dg algebra} to each $\Ocal \in \Caus(\Mcal)$.
Recall that a dg algebra is a $\ZZ$-graded vector space $A = \bigoplus_n A^n$ equipped with 
\begin{itemize}
\item a grading-preserving associative multiplication $\star$ so that $a \star b \in A^{m+n}$ if $a \in A^m$ and $b \in A^n$, an
\item a differential $\d: A \to A$ that increases degree by one, satisfies $\d^2 = 0$, and is a derivation, so that
\[
\d(a \star b) = \d a \star b + (-1)^{|a|} a \star \d b 
\]
for homogeneous elements $a, b \in A$.
\end{itemize}
This generalization appears naturally when one adopts the BV framework for field theory,
as it uses homological algebra in a serious way.
We introduce these dg models in the next section.

\subsection{A dg version of pAQFT}
\label{sec:dgpaqft}

We articulate here a very minimal generalization of the usual AQFT axioms that allows dg algebras, rather than plain algebras, as the target category.
It will be apparent that free field theories fits these axioms, 
and we intend to show that the perturbative construction of gauge theories does as well.
We forewarn the reader that we do not impose certain conditions (notably isotony) because we do not yet know an appropriate dg generalization.

\begin{rem}
	Others have suggested modifications of AQFT in a dg direction, particularly \cite{BDHS13,BSS14,BS17}, who explore the case of abelian gauge theories in depth and even examine some nonperturbative facets. A generalization to non-abelian gauge theories has been obtained on the classical level in~\cite{BSS17}.
	We expect, based on explicit models constructed in \cite{FR3}, that our minimal, perturbative definitions apply verbatim to gauge theories like Yang-Mills theories and can be seen as the infinitesimal version of the axioms of homotopy AQFT proposed by Benini and Schenkel~\cite{BS17}.
\end{rem}

\subsubsection{}

Before we get to our definition, let us sketch the big picture.
The basic principle is to replace ordinary categories and functors by higher categorical analogues. 
Hence we want to articulate a version of a QFT model as a functor between $\infty$-categories, 
and so the first step is to determine the source and target $\infty$-categories.

Our source category $\Caus(\Mcal)$ does not require modification in any nontrivial way,
so we simply view it as an $\infty$-category.\footnote{If one wants to fix a particular model for $\infty$-categories, such as quasicategories,
	then there is always a standard way to promote an ordinary category to such a higher category.
	For instance, one can take the nerve ${\rm N}(\Caus(\Mcal))$ to obtain a quasicategory.}
On the other hand, the target category $\Alg^*(\Nuch)$ admits several analogues among $\infty$-categories.
Here are two:
\begin{itemize}
	\item Consider the category $\Ch(\Nuch)$ as a {\em relative} category with quasi-isomorphism as the notion of weak equivalence;
	this determines an $\infty$-category we momentarily denote ${\mathcal D}(\Nuch)$. 
	Then take the $\infty$-category of $\ast$-algebras ${\mathcal A}{\rm lg}^*({\mathcal D}(\Nuch))$.
	\item Consider the category $\Alg^*(\Ch(\Nuch))$ as a relative category with quasi-isomorphism as the notion of weak equivalence.
	This determines another $\infty$-category.
\end{itemize}
Note an important difference between these approaches:
whether we take algebras at a 1-categorical level or $\infty$-categorical level.
It is not manifest these constructions agree, nor are these the only ways to form a higher category of homotopy-coherent $\ast$-algebras in some class of topological vector spaces.\footnote{We are not even discussing here whether it would be better to work with some other class of functional-analytic spaces. It should be clear that one might reasonably replace $\Nuc$ by $\textbf{TVS}$, or some other category.}

Once one has fixed a target $\infty$-category $\Ccal$, 
then one can view a functor of $\infty$-categories $\fA: {\rm N}(\Caus(\Mcal)) \to \Ccal$ 
as a higher version of the data of a QFT model:
in a homotopy-coherent fashion, it assigns a $*$-algebra to each $\Ocal$ in~$\Caus(\Mcal)$.

\begin{rem}
	It is an interesting question --- particularly from the perspective of examples and applications --- to determine when such a functor of $\infty$-categories $\fA$ can be represented by a strict functor between explicit (relative) categories, such as $\widetilde{\fA}: \Caus(\Mcal) \to \Alg^*(\Ch(\Nuch))$.
	We do not address that question here, although we expect it admits a clean answer.
	We feel, however, that it is a question distinct from the issue of formulating a good, abstract definition.
\end{rem}

To obtain a higher version of a QFT model, however, 
we need to articulate versions of Einstein causality and the time-slice axiom.
Again there are several approaches.
As yet we do not feel it is clear which approach is most natural or compelling,
so we encourage interested readers to explore and advocate the approach that appeals to them.

Thankfully, as we will see, the constructions from the FR and CG formalisms yield ordinary functors that ought to determine functors of higher categories in almost any imaginable approach,
as will be manifest to those familiar with higher categories.
To flag the provisional nature of the definitions we provide below,
we include the adjective ``semistrict,'' since we mix lower and higher categorical approaches.

\subsubsection{}

Recall that $\Ch(\Nuc)$ denotes the category whose objects are cochain complexes in $\Nuc$  and whose morphisms are continuous cochain maps.
We equip it with the completed projective tensor product $\potimes$ to make it symmetric monoidal.
So far we have only specified an ordinary category, 
but we can view it as presenting an $\infty$-category by making it a relative category:
a morphism is a weak equivalence if it is a quasi-isomorphism.

\begin{df}\label{dgClassFT}
	A \textbf{semistrict dg classical field theory model} on a spacetime $\Mcal$ is a functor $\mathfrak{P} : \Caus(\Mcal) \to \PAlg^*(\Ch(\Nuc))$,  so that each $\mathfrak{P}(\Ocal)$ is a locally convex dg Poisson $*$-algebra satisfying \textbf{Einstein causality}: spacelike-separated observables Poisson-commute at the level of cohomology. That is, for $\Ocal_1,\Ocal_2\in \Caus(\Mcal)$ that are spacelike to each other, the bracket $\Poi{\mathfrak{P}(\Ocal_1)}{\mathfrak{P}(\Ocal_2)}$ is exact (and so vanishes at the level of cohomology)
	in $\Pfrak(\Ocal')$ for any $\Ocal' \in \Caus(\Mcal)$ that contains both $\Ocal_1$ and~$\Ocal_2$.
	
	It satisfies the \textbf{time-slice axiom} if for any $\Ncal\in\Caus(\Mcal)$ a neighborhood of a Cauchy surface in the region $\Ocal\in\Caus(\Mcal)$, 
	then the map $\mathfrak{P}(\Ncal) \to \mathfrak{P}(\Ocal)$ is a quasi-isomorphism.
\end{df}

Note that one can post-compose such a functor with the functor of cohomology.
One then obtains, for instance, a functor
\[
H^0 \Pfrak: \Caus(\Mcal) \to \PAlg^*(\Nuc).
\]
It is \emph{almost} a classical field theory model, as before.
By construction it satisfies Einstein causality, but it need not satisfy isotony.
Hence our definition imposes the usual axioms (excluding isotony) only at the level of cohomology.
This change is natural inasmuch as we view quasi-isomorphic cochain complexes as equivalent,
and so we should only impose conditions that are invariant under quasi-isomorphism.

\begin{rem}
	Isotony holds at the cochain level for the constructions and example with which we are familiar,
	but it may fail at the level of cohomology,
	as it does in the setting of gauge theory.
	(Consider, as a toy model, how ordinary cohomology can be viewed as arising from sheaf cohomology of a locally constant sheaf.
	Locally, the sheaf is simple but its cohomological behavior depends on the topology of each open.)
	One might guess that isotony holds at the level of cohomology for inclusions $\Ocal \to \Ocal'$ between contractible opens, but we hesitate to impose that condition until we have explored more examples. 
\end{rem}

One can further loosen the definition, if one wishes, by asking for associativity of morphisms only up to homotopy coherence.
This is a formal change to implement and not relevant to our focus in this paper.
We will introduce, however, the appropriate notion of weak equivalence of models,
so that we have a relative category implicitly presenting an $\infty$-category.

\begin{df}
	A natural transformation $\eta: \Pfrak \Rightarrow \Pfrak'$ between two semistrict dg classical field theory models is a \textbf{weak equivalence} if the map $\eta_{\Ocal}: \Pfrak(\Ocal) \to \Pfrak'(\Ocal)$ is a quasi-isomorphism for every $\Ocal \in \Caus(\Mcal)$.
\end{df}

We now turn to the quantum setting.

\begin{df}\label{LCQFT}
	A \textbf{semistrict dg QFT model} on a spacetime $\Mcal$ is a functor 
	$\fA : \Caus(\Mcal) \to \Alg^*(\Ch(\Nuch))$,  so that each $\fA(\Ocal)$ is a locally convex unital $*$-dg algebra satisfying \textbf{Einstein causality}: spacelike-separated observables commute at the level of cohomology. 
	That is, for $\Ocal_1,\Ocal_2\in \Caus(\Mcal)$ that are spacelike to each other, the bracket
	$[\fA(\Ocal_1),\fA(\Ocal_2)]$ is exact in $\fA(\Ocal')$ for any $\Ocal' \in \Caus(\Mcal)$ that contains both $\Ocal_1$ and~$\Ocal_2$.
	
	It satisfies the \textbf{time-slice axiom} if for any $\Ncal\in\Caus(\Mcal)$ a neighborhood of a Cauchy surface in the region $\Ocal\in\Caus(\Mcal)$, 
	then the map $\fA(\Ncal) \to \fA(\Ocal)$ is a quasi-isomorphism.
\end{df}

Again, we introduce a notion of weak equivalence.

\begin{df}
	A natural transformation $\eta: \fA \Rightarrow \fA'$ between two semistrict dg classical field theory models is a \textbf{weak equivalence} if the map $\eta_{\Ocal}: \fA(\Ocal) \to \fA'(\Ocal)$ is a quasi-isomorphism for every $\Ocal \in \Caus(\Mcal)$.
\end{df}

\subsection{Overview of factorization algebras}

In their work on chiral conformal field theory,
Beilinson and Drinfeld introduced factorization algebras in an algebro-geometric setting.
These definitions also encompass important objects in geometric representation theory, 
playing a key role in the geometric Langlands program.
Subsequently, Francis, Gaitsgory, and Lurie identified natural analogous definitions in the setting of  manifolds,
which provide novel approaches in, e.g., homotopical algebra and configuration spaces.
Below we describe a version of factorization algebras, developed in \cite{CoGw}, that is well-suited to field theory.

As this brief history indicates, factorization algebras do \emph{not} attempt to axiomatize the observables of a field theory.
Instead, they include examples from outside physics, such as from topology and representation theory, and permit the transport of intuitions and ideas among these fields.
We will explain below further structure on a factorization algebra that makes it behave like the observables of a field theory in the Batalin-Vilkovisky formalism.

\subsubsection{The core definitions}

Let $M$ be a smooth manifold. 
Let $\Open(M)$ denote the poset category whose objects are opens in $M$ and where a morphism is an inclusion.
A factorization algebra will be a functor from $\Open(M)$ to a symmetric monoidal category $\bC$ with tensor product $\otimes$ equipped with further data and satisfying further conditions.
We will explain this extra information in stages.
(Note that almost all the definitions below apply to an arbitrary topological space, or even site with an initial object, and not just smooth manifolds.)

\begin{df}
A \textbf{prefactorization algebra} $\Fcal$ on $M$ with values in a symmetric monoidal category $(\bC, \otimes)$ consists of the following data:
\begin{itemize}
\item for each open $U \subset M$, an object $\Fcal(U) \in \bC $,
\item for each finite collection of pairwise disjoint opens $U_1,\ldots,U_n$, with $n > 0$, and an open $V$ containing every $U_i$, a morphism
\[
\Fcal(\{U_i\}; V): \Fcal(U_1) \otimes \cdots \otimes \Fcal(U_n) \to \Fcal(V),
\]
\end{itemize}
and satisfying the following conditions:
\begin{itemize}
\item composition is associative, so that the triangle
\[
\begin{tikzcd}
\bigotimes_i \bigotimes_j \Fcal(T_{ij}) \arrow{rr} \arrow{rd} &&\bigotimes_i \Fcal(U_{i}) \arrow{ld} \\
&\Fcal(V)&
\end{tikzcd}
\]
commutes for any collection $\{U_i\}$, as above, contained in $V$ and for any collections $\{T_{ij}\}_j$ where for each $i$, the opens $\{T_{ij}\}_j$ are pairwise disjoint and each contained in $U_i$,
\item the morphisms $\Fcal(\{U_i\}; V)$ are equivariant under permutation of labels, so that the triangle
\[
\begin{tikzcd}
\Fcal(U_{1}) \otimes \cdots \otimes \Fcal(U_n) \arrow{rr}{\simeq} \arrow{rd} && \Fcal(U_{\sigma(1)}) \otimes \cdots \otimes \Fcal(U_{\sigma(n)}) \arrow{ld}\\
&\Fcal(V)&
\end{tikzcd}
\]
commutes for any $\sigma \in S_n$.
\end{itemize}
\end{df}

Note that if one restricts to collections that are singletons (i.e., some $U \subset V$), 
then one obtains simply a precosheaf $\Fcal: \Open(M) \to \bC $.
By working with collections, we are specifying a way to ``multiply'' elements living on disjoint opens to obtain an element on a bigger open.
In other words, the topology of $M$ determines the algebraic structure.
(One can use the language of colored operads to formalize this interpretation, but we refer the reader to \cite{CoGw} for a discussion of that perspective.
Moreover, one can loosen the conditions to be homotopy-coherent rather than on-the-nose.)

A factorization algebra is a prefactorization algebra for which the value on bigger opens is determined by the values on smaller opens,
just as a sheaf is a presheaf that is local-to-global in nature.
A key difference here is that we need to be able to reconstruct the ``multiplication maps'' from the local data, and so we need to modify our notion of cover accordingly.

\begin{df}
A \textbf{Weiss cover} $\{U_i\}_{\{i \in I\}}$ of an open subset $U \subset M$ is a collection of opens $U_i \subset U$ such that for any finite set of points $S= \{x_1,\ldots,x_n\} \subset U$, there is some $i \in I$ such that $S \subset U_i$.
\end{df}

\begin{rem}
Note that a Weiss cover is also a cover, simply by considering singletons.
Typically, however, an ordinary cover is not a Weiss cover. 
Consider, for instance, the case where $U = V \sqcup V'$, with $V,V'$ disjoint opens.
Then $\{V,V'\}$ is an ordinary cover by not a Weiss cover, since neither $V$ nor $V'$ contains any two element set $\{x,x'\}$ with $x \in V$ and $x' \in V'$.
Nonetheless, Weiss covers are easy to construct.
For instance, a Weiss cover of an $n$-manifold $M$ is given by the collection of open subsets that are each homeomorphic to a finite union of copies of~$\RR^n$.
\end{rem}

This notion of cover determines a Grothendieck topology on $M$;
concretely, this means it determines a notion of cover for each open of $M$ that behaves nicely with respect to intersection of opens and refinements of covers.
In particular, we can talk about (co)sheaves relative to this Weiss topology on~$M$.

\begin{df}
A \textbf{factorization algebra} $\Fcal$ is a prefactorization algebra on $M$ such that the underlying precosheaf is a cosheaf with respect to the Weiss topology.
That is, for any open $U$ and any Weiss cover $\{U_i\}_{i \in I}$ of $U$, the diagram
\[
\begin{tikzcd}
\coprod_{i,j} \Fcal(U_i \cap U_j) \arrow[r, shift left] \arrow[r, shift right] & \coprod_i \Fcal(U_i) \arrow[r] & \Fcal(U)
\end{tikzcd}
\]
is a coequalizer. 
\end{df}

Typically, our target category $(\bC, \otimes)$ is vector spaces of some kind (such as topological vector spaces),
in which case the coproducts $\coprod$ denote direct sums $\oplus$ and the coequalizer simply means that $\Fcal(U)$ is the cokernel of the difference of the maps for the inclusions $U_i \cap U_j \subset U_i$ and~$U_i \cap U_j\subset U_j$.
Note that we have implicitly assumed that $\bC $ possesses enough colimits,
and we will assume that henceforward. 

\begin{rem}
The prefactorization algebras we construct in this paper use spaces of smooth or distributional sections,
and hence live in nuclear spaces.
In Chapter 6, Section 5 of \cite{CoGw}, it is checked directly that the relevant colimits exist for these functors in the closely related category of differentiable vector spaces.
In short, it is proved there that our main constructions form factorization algebras.
(The arguments mimic the proofs that smooth functions form a sheaf --- partitions of unity play a role --- but exploit the Weiss condition at one key point.)
We do not examine here the colimit condition in nuclear spaces.
\end{rem}

\begin{rem}
In fact, our target category is usually cochain complexes of vector spaces, 
and we want to view cochain complexes as (weakly) equivalent if they are quasi-isomorphic.
Hence, we want to work in an $\infty$-categorical setting.
In such a setting, the cosheaf condition becomes higher categorical too: 
we replace the diagram above by a full simplicial diagram over the \v{C}ech nerve of the cover and we require $\Fcal(U)$ to be the homotopy colimit over this simplicial diagram.
For exposition of these issues, see~\cite{CoGw}.
\end{rem}

In practice, another condition often holds, and it's certainly natural from the perspective of field theory.

\begin{df}
A factorization algebra $\Fcal$ is \textbf{multiplicative} if the map
\[
\Fcal(V) \otimes \Fcal(V') \to \Fcal(V \sqcup V')
\]
is an isomorphism for every pair of disjoint opens~$V,V'$.
\end{df}

In brief, if $\Fcal$ is a multiplicative factorization algebra, one can reconstruct $\Fcal$ if one knows how it behaves on a collection of small opens.
For instance, suppose $M$ is a Riemannian manifold and one knows $\Fcal$ on all balls of radius $\leq 1$,
then one can reconstruct $\Fcal$ on every open of $M$.
(See Chapter 7 of \cite{CoGw} for how to reconstruct from a Weiss basis.)
Our examples are often multiplicative, or at least satisfy the weaker condition that the map is a dense inclusion.

Note that there is a category of prefactorization algebras $\PFA(M,(\bC,\otimes))$ 
where each object is a prefactorization algebra on $M$ and 
where a morphism $\eta: \Fcal \to \Gcal$ consists of a collection of morphisms in~$\bC $,
\[
\{\eta(U): \Fcal(U) \to \Gcal(U) \}_{U \in \Open(M)}
\] 
such that all the multiplication maps intertwine.
The factorization algebras form a full subcategory $\FA(M,(\bC,\otimes))$ of~$\PFA(M,(\bC,\otimes))$.

\begin{rem}
It is natural to wonder if there is a functor adjoint to the forgetful ({\em aka} inclusion) functor of factorization algebras into prefactorization algebras,
by analogy to the sheafification functor from presheaves to sheaves.
We do not know the answer to this question.
There exists a cosheafification functor from precosheaves to Weiss cosheaves,
but the underlying precosheaf of a prefactorization algebra does not know about structure maps involving multiple disjoint opens, 
so it seems unlikely that Weiss cosheafification is sufficient (by itself) to produce a factorization algebra.
\end{rem}

\subsubsection{Relationship with field theory}

By now, the reader may have noticed that there has been no discussion of fields or Poisson algebras or so on.
Indeed, the definitions here are more general and less involved than for the AQFT setting because they aim to apply outside the context of field theory (e.g., there are interesting examples of factorization algebras arising from geometric representation theory and algebraic topology) and because there is no causality structure to track.
By contrast, AQFT aims to formalize precisely the structure possessed by observables of a field theory on Lorentzian manifolds, and hence must take into account both causality and other characterizing features of field theories (e.g., Poisson structures at the classical level).

Let us briefly indicate how to articulate observables of field theory in this setting,
suppressing important issues of homological algebra and functional analysis,
which are discussed below in the context of the free scalar field and in \cite{CoGw,CG2} in a broader context.
The necessary extra ingredient is that on each open $U$, the object $\Fcal(U)$ has an algebraic structure.

\begin{df}
Given prefactorization algebras $\Fcal, \Gcal$ on $M$, let $\Fcal \otimes \Gcal$ denote the prefactorization algebra with 
\[
(\Fcal \otimes \Gcal)(U) = \Fcal(U) \otimes \Gcal(U)
\]
and the obvious tensor product of structure maps.
\end{df}

In other words, the category of prefactorization algebras $\PFA(M,(\bC,\otimes))$ is itself symmetric monoidal.
In many cases the full subcategory $\FA(M, (\bC,\otimes))$ is closed under this symmetric monoidal product.
In particular, if the tensor product $\otimes$ in $\bC $ preserves colimits separately in each variable (or at least geometric realizations), then $\Fcal \otimes \Gcal$ is a factorization algebra when $\Fcal, \Gcal$ are.

Thus, if $\bC$ is some category of vector spaces, 
one can talk about, e.g., a commutative algebra in $\PFA(M,(\bC,\otimes))$.
That means $\Fcal$ is equipped with a map of prefactorization algebras $\cdot: \Fcal \otimes \Fcal \to \Fcal$ satisfying all the conditions of a commutative algebra.
Similarly, one can talk about Poisson or $*$-algebras.

It is equivalent to say that $\Fcal$ is  in $\CAlg(\PFA(M,(\bC,\otimes)))$ 
or to say it is a prefactorization algebra with values in $\CAlg(\bC,\otimes)$, 
the category of commutative algebras in $(\bC,\otimes)$.
This equivalence does not apply, however, to factorization algebras, due to the local-to-global condition:
a colimit of commutative algebras does not typically agree with the underlying colimit of vector spaces.
For instance, in the category of ordinary commutative algebras $\CAlg(\Vec,\otimes)$, 
the coproduct  is $A \otimes B$, but in the category of vector spaces $\Vec$, it is the direct sum~$A \oplus B$.
(This issue is very general: for an operad $\Ocal$, the category $\Ocal\text{-alg}(\bC,\otimes)$ of $\Ocal$-algebras has a forgetful functor to $\bC $ that always preserves limits but rarely colimits.)
Thus, a commutative algebra in factorization algebras $\Fcal \in \CAlg(\PFA(M,(\bC,\otimes)))$ 
assigns a commutative algebra to every open $U$ and a commutative algebra map to every inclusion of disjoint opens $U_1,\ldots, U_n \subset V$, 
but it satisfies the coequalizer condition in $\bC $, not in~$\CAlg(\bC,\otimes)$.

This terminology lets us swiftly articulate a deformation-theoretic view of the Batalin-Vilkovisky framework.

\begin{df}
A \textbf{classical field theory model} is a 1-shifted Poisson (\emph{aka} $P_0$) algebra $\Pcal$ in factorization algebras $\FA(M, \Ch(\Nuc))$.
That is, to each open $U \subset M$, the cochain complex $\Pcal(U)$ is equipped with a commutative product $\cdot$ and a degree 1 Poisson bracket $\{-,-\}$;
moreover, each structure map is a map of shifted Poisson algebras.
\end{df}

Note that we always work with the completed projective tensor product~$\widehat{\otimes}$ with nuclear spaces, 
so we will suppress it from the notation.
In other words, we simply write $\Ch(\Nuc)$ instead of $(\Ch(\Nuc), \widehat{\otimes})$.

In parallel, we have the following.

\begin{df}
A \textbf{quantum field theory model} is a Beilinson-Drinfeld (BD) algebra $\Acal$ in factorization algebras $\FA(M, \Ch(\Nuch))$.
That is, to each open $U \subset M$, the cochain complex $\Acal(U)$ is flat over $\CC[[\hbar]]$ and equipped with
\begin{itemize} 
\item an $\hbar$-linear commutative product $\cdot$, 
\item an $\hbar$-linear, degree 1 Poisson bracket $\{-,-\}$, and
\item a differential such that
\[
\d(a \cdot b) = \d(a) \cdot b + (-1)^a a \cdot \d(b) + \hbar \{a,b\}.
\]
\end{itemize}
Moreover, each structure map is a map of BD algebras.
\end{df}

\begin{rem}
We include the condition of flatness to ensure that tensoring need not be derived. 
All of our examples will be free in the appropriate sense.
\end{rem}

Note that for any BD algebra $A$, there is a \textbf{dequantization} 
\[
A^{cl} = A \otimes_{\CC[[\hbar]]} \CC[[\hbar]]/(\hbar)
\]
that is automatically a 1-shifted Poisson algebra. 
Hence every quantum field theory model dequantizes to a classical field theory model.
Given a classical field theory model $\Pcal$, one can ask if it quantizes, i.e., if there exists a quantum field theory $\Acal$ whose dequantization is~$\Pcal$. 

\begin{rem}
The condition on the differential is an abstract version of a property possessed by the divergence operator for a volume form on a finite-dimensional manifold.
Thus, the differential of a BD algebra behaves like a ``divergence operator,'' 
as explained in Chapter 2 of \cite{CoGw},
and hence encodes (some of) the kind of information that a path integral would.
\end{rem}

\subsection{A variant definition: locally covariant field theories}

Above, we have worked on a fixed manifold, but most field theories are well-defined on some large class of manifolds. 
For instance, the free scalar field theory makes sense on any manifold equipped with a metric of some kind. 
Similarly, (classical) pure Yang-Mills theory makes sense on any 4-manifold equipped with a {\it conformal} class of metric and a principal $G$-bundle. 
One can thus replace $\Open(M)$ by a more sophisticated category whose objects are ``manifolds with some structure'' and whose maps are ``structure-preserving embeddings.'' 
(In the scalar field case, think of manifolds-with-metric and isometric embeddings.)  
In a field theory, the fields restrict along  embeddings and the equations of motion are local (but depend on the local structure), 
so that solutions to the equations $Sol$ forms a contravariant functor out of this category. 
Likewise, one can generalize the models of classical or quantum field theory to this kind of setting, as we now do.

\begin{rem}
This discussion is not necessary for what happens elsewhere in the paper,
so the reader primarily interested in our comparison results should feel free to skip ahead.
\end{rem}

\subsubsection{The Lorentzian case}

We begin by replacing the fixed spacetime $\Mcal$ by a coherent system of all such spacetimes.

\begin{df}
Let $\Loc_n$ be the category where an object is a connected, (time-)oriented globally hyperbolic spacetime of dimension $n$ and where a morphism $\chi:\Mcal\to \Ncal$ is an isometric embedding that preserves orientations and causal structure. The latter means that for any causal curve $\gamma : [a,b]\to N$, if $\gamma(a),\gamma(b)\in\chi(M)$, then for all $t	\in ]a,b[$, we have $\gamma(t)\in\chi(M)$. (That is, $\chi$ cannot create new causal links.)
\end{df}

We can extend $\Loc_n$ to a symmetric monoidal category $\Loc_n^{\otimes}$ by allowing for objects that are disjoint unions of objects in $\Loc_n$. The relevant symmetric monoidal structure is the disjoint union $\sqcup$. Note that a morphism in $\Loc_n^\otimes$ must send disjoint components to spacelike-separated regions. 

We are now ready to state what is meant by a locally covariant field theory in our setting, following the definition proposed in \cite{BFV}. We use here a very minimal version of the axioms for the locally covariant field theory functor. From the physical viewpoint, it might be necessary to require some further properties, e.g. dynamical locality (for more details see \cite{spass1,spass2}).

Note that isotony is implicit in the requirement that morphisms in $\Alg^*(\Nuc)^\inj$ are injective.
It is likewise implicit in the following definitions.

\begin{df}\label{LCClassFT}
	A \textbf{locally covariant classical field theory model} of dimension $n$ is a functor $\mathfrak{P} : \Loc_n \to \PAlg^*(\Nuc)^\inj$ 
	such that the \textbf{Einstein causality} holds: given two isometric embeddings $\chi_1:\Mcal_1\rightarrow\Mcal$ and $\chi_1:\Mcal_1\rightarrow\Mcal$ whose images $\chi_1(\Mcal_1)$ and $\chi_2(\Mcal_2)$ are spacelike-separated, the subalgebras 
		\[
		\mathfrak{P}\chi_1(\mathfrak{P}(\Mcal_1)) \subset \mathfrak{P}(\Mcal) \supset \mathfrak{P}\chi_2(\mathfrak{P}(\Mcal_2))
		\] 
		Poisson-commute, i.e., we have
		\[
		\Poi{\mathfrak{P}\chi_1(a_1)}{\mathfrak{P}\chi_2(a_2)}=\{0\}\,,
		\]
		for any $a_1 \in \mathfrak{P}(\Mcal_1)$ and $a_2 \in \mathfrak{P}(\Mcal_2)$.
\end{df}

\begin{df}
	A \textbf{locally covariant quantum field theory model} of dimension $n$ is a functor $\fA:\Loc_n\rightarrow\Alg^*(\Nuch)^\inj$ such that \textbf{Einstein causality} holds:
	\begin{quote}
Given two isometric embeddings $\chi_1:\Mcal_1\rightarrow\Mcal$ and $\chi_1:\Mcal_1\rightarrow\Mcal$ whose images $\chi_1(\Mcal_1)$ and $\chi_2(\Mcal_2)$ are spacelike-separated, the subalgebras 
		\[
		\fA\chi_1(\fA(\Mcal_1)) \subset \fA(\Mcal) \supset \fA\chi_2(\fA(\Mcal_2))
		\] 
		commute, i.e., we have
		\[
		[\fA\chi_1(a_1),\fA\chi_2(a_2)]=\{0\}\,,
		\]
		for any $a_1 \in \fA(\Mcal_1)$ and $a_2 \in \fA(\Mcal_2)$.
	\end{quote}
\end{df}

\begin{df}\label{OnShell}
A model $\Pfrak$ ($\fA$) is called \textbf{on-shell} if it satisfies in addition the \textbf{time-slice axiom}: If $\chi:\Mcal\rightarrow \Ncal$ contains a neighborhood of a Cauchy surface $\Sigma\subset\Ncal$, then the map $\Pfrak\chi: \Pfrak(\Mcal) \to \Pfrak(\Ncal)$ (respectively, $\fA\chi: \fA(\Mcal) \to \fA(\Ncal)$) is an isomorphism.
\end{df}

\begin{rem}
The category $\Alg^*(\Nuc)^\inj$ has a natural symmetric monoidal structure via the completed tensor product $\widehat{\otimes}$. 
Then Einstein causality can be rephrased as the condition that $\fA$ is a symmetric monoidal functor from $\Loc_n^{\otimes}$ to $\Alg^*(\Nuc)^{\inj,\widehat{\otimes}}$, as discussed in~\cite{BFIR}.
\end{rem}

\subsubsection{The factorization algebra version}

Let us begin with the simplest version.

\begin{df}
Let $\Emb_n$ denote the category whose objects are smooth $n$-manifolds and whose morphisms are open embeddings. It possesses a symmetric monoidal structure under disjoint union.
\end{df}

Then we introduce the following variant of the notion of a prefactorization algebra. Below, we  will explain the appropriate local-to-global axiom.

\begin{df}
A \textbf{prefactorization algebra on $n$-manifolds} with values in a symmetric monoidal category $(\bC, \otimes)$ is a symmetric monoidal functor from $\Emb_n$ to~$\bC$.
\end{df}

This kind of construction works very generally. For instance, if we want to focus on Riemannian manifolds, we could work in the following setting.

\begin{df}
Let $\Riem_n$ denote the category where an object is Riemannian $n$-manifold $(M,g)$ and a morphism is open isometric embedding. It possesses a symmetric monoidal structure under disjoint union.
\end{df}

\begin{df}
A \textbf{prefactorization algebra on Riemannian $n$-manifolds} with values in a symmetric monoidal category $(\bC, \otimes)$ is a symmetric monoidal functor from $\Riem_n$ to $\bC$.
\end{df}

\begin{rem}
In these definitions, the morphisms in $\Riem_n$ form a set, but one can also consider an enrichment so that the morphisms form a space, perhaps a topological space or even some kind of infinite-dimensional manifold. 
This kind of modification can be quite useful.
For instance, this would allow to view isometries (i.e., isometric isomorphisms) as a Lie group, 
rather than as a discrete group.
\end{rem}

In general, let $\Gcal$ denote some kind of local structure for $n$-manifolds, such as a Riemannian metric or complex structure or orientation. In other words, $\Gcal$ is a sheaf on $\Emb_n$. A {\it $\Gcal$-structure} on an $n$-manifold $M$ is then a section $G \in \Gcal(M)$. There is a category $\Emb_\Gcal$ whose objects are $n$-manifolds with $\Gcal$-structure $(M,G_M)$ and whose morphisms are $\Gcal$-structure-preserving embeddings, i.e., embeddings $f : M \hookrightarrow N$ such that $f^* G_N = G_M$. This category is fibered over $\Emb_\Gcal$. One can then talk about prefactorization algebras on $\Gcal$-manifolds.

We now turn to the local-to-global axiom in this context.

\begin{df}
A \textbf{Weiss cover} of a $\Gcal$-manifold $M$ is a collection of $\Gcal$-embeddings $\{ \phi_i : U_i \to M\}_{i \in I}$ such that for any finite set of points $x_1,\ldots,x_n \in M$, there is some $i$ such that $\{x_1,\ldots,x_n\} \subset \phi_i(U_i)$.
\end{df}

With this definition in hand, we can formulate the natural generalization of our earlier definition.

\begin{df}
A \textbf{factorization algebra on $\Gcal$-manifolds} is a symmetric monoidal functor $\Fcal: \Emb_{\Gcal} \to \bC$ that is a cosheaf in the Weiss topology.
\end{df}

One can mimic the definitions of models for field theories in this setting.

\begin{df}
A \textbf{$\Gcal$-covariant classical field theory} is a 1-shifted (\emph{aka} $P_0$) algebra $\Pcal$ in factorization algebras $\FA(\Emb_{\Gcal},\Ch(\Nuc))$.
\end{df}

\begin{df}
A \textbf{$\Gcal$-covariant quantum field theory} is a Beilinson-Drinfeld (BD) algebra $\Acal$ in factorization algebras $\FA(\Emb_{\Gcal}, \Ch(\Nuch))$.
\end{df}

\section{Comparing the definitions}

Now that we have the key definitions in hand, 
we can restate the questions (\ref{questions}) more sharply.
\begin{enumerate}
\item In the CG formalism a model for a field theory defines a functor on the poset $\Open(M)$ of all open subsets. By contrast, the FR formalism a model defines a functor on the subcategory $\Caus(\Mcal)$. Why this restriction? How should one extend an FR model to a functor on the larger category of all opens? Is it a factorization algebra?
\item In the FR formalism, a model assigns a Poisson algebra (or $*$-algebra) to each open in $\Caus(\Mcal)$, whereas in the CG formalism, a model assigns a shifted Poisson algebra (or BD algebra) to every open.  Are these rather different kinds of algebraic structures related?
\end{enumerate}
We will address these questions in the specific example of free scalar field theory.
In the conclusion, we draw some lessons and hints about the case of interacting theories and non-scalar theories. 

\subsection{Free field theory models}

We now turn to stating our main result, which is a comparison of the FR and CG procedures.
First, we need to state what each formalism accomplishes with the free field.
In the following sections, we spell out in detail how to construct the models asserted and prove the propositions.

We remark that these statements are likely hard to understand at this point;
the point we emphasize here is just that we get models in both the FR and CG senses.

\def\Pfrak{\mathfrak{P}}

\begin{prop}
\label{FRconstruction}
Let $\Mcal=(M,g)$ be a $d$-dimensional, oriented, time-oriented, and globally hyperbolic spacetime with the metric $g$ of signature $(+,-,\dots,-)$. 
Given a vector bundle $\pi:E\rightarrow M$ and a Green hyperbolic operator $P$, there is a classical field theory model $\Pfrak$ such that 
\begin{itemize}
\item The space of fields $\F(\Ocal)$ is the space generated (as a commutative algebra) by continuous linear functionals on distributional solutions of $P\phi  = 0$ on $\Ocal$;
\item the commutative product $\cdot$ is the obvious pointwise  product of the space of functionals on the solution space of $\Ocal$;
\item the Poisson bracket is the Peierls bracket $\Poi{.}{.}$ (see \cite{Pei} and the remark below).
\end{itemize}
There is a quantum field theory model $\fA$ on $\Mcal$ such that
for each $\Ocal \in \Caus(\Mcal)$, the associative $\CC[[\hbar]]$-algebra $\fA(\Ocal)$ is generated topologically by continuous linear functionals on distributional solutions to $P\phi= 0$ and the product $\star$ satisfies the relation
\[
[ F,G]_\star = i\hbar \Poi{F}{G}
\]
for linear functionals $F,G$.
\end{prop}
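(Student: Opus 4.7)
The plan is to construct both models explicitly by assembling the standard pAQFT ingredients: a space of (micro)local functionals, the causal propagator, and a Hadamard two-point function. For each $\Ocal \in \Caus(\Mcal)$, I would first build the underlying commutative algebra as a completed symmetric algebra on continuous linear functionals on $\Ecal(\Ocal)^{\sst\CC}$ with suitable wavefront-set conditions on higher derivatives (the microcausal functionals $\Fcal(\Ocal)$), and then set $\Pfrak(\Ocal)$ to be the quotient by the ideal of functionals that vanish on distributional solutions of $P\phi = 0$. Equivalently, this is the algebra of polynomial functions on $\mathrm{Sol}(\Ocal)$ in the sense of the proposition. Functoriality in $\Caus(\Mcal)$ is automatic: an inclusion $\Ocal \subset \Ocal'$ restricts functionals and preserves the equations of motion ideal; injectivity uses Green-hyperbolicity together with the fact that members of $\Caus(\Mcal)$ are connected and causally convex, so that restriction of solutions does not collapse distinct functionals. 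Nuclearity is inherited from standard facts about completed symmetric powers of nuclear spaces.

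The Poisson structure is the Peierls bracket $\Poi{F}{G} = \langle F^{(1)}, \Delta\, G^{(1)}\rangle$, where $\Delta = \Delta^R - \Delta^A$ is the difference of the unique retarded and advanced Green operators supplied by Green-hyperbolicity. Three properties must be checked: (i) well-definedness on the on-shell quotient, which follows from $P\Delta = \Delta P = 0$; (ii) the Jacobi identity, which is immediate because $\Delta$ is field-independent and antisymmetric, so the Peierls bracket is the bracket associated with a constant symplectic structure on the affine space of solutions; and (iii) Einstein causality, which follows from $\mathrm{supp}\,\Delta \subset \{(x,y) : y \in J^+(x) \cup J^-(x)\}$, hence vanishes on a pair of derivatives supported in spacelike-separated $\Ocal_1$ and $\Ocal_2$. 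Combined with the $*$-involution given by complex conjugation of functionals, this produces the desired functor $\Pfrak : \Caus(\Mcal) \to \PAlg^*(\Nuc)^\inj$.

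For the quantum model, fix a Hadamard two-point function $W$ on $(M,g)$, that is a distribution on $M \times M$ with the prescribed wavefront set and satisfying $W - W^T = i\Delta$ (such a $W$ exists on every globally hyperbolic spacetime by Radzikowski's theorem). Define a deformed product on $\Fcal(\Ocal)[[\hbar]]$ by
\[
F \star_W G = \sum_{n \geq 0} \frac{\hbar^n}{n!} \langle F^{(n)}, W^{\otimes n} G^{(n)} \rangle,
\]
extended $\CC[[\hbar]]$-linearly. The microcausal restriction on functionals and the Hadamard wavefront condition on $W$ ensure each pairing is defined via H\"ormander's criterion for multiplying distributions, and associativity follows by a direct combinatorial check. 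Complex conjugation again provides a $*$-involution, compatible because $W^* = W^T$. For two linear functionals only the $n=1$ term contributes to the commutator, yielding $[F,G]_{\star_W} = \hbar\langle F^{(1)}, (W - W^T) G^{(1)}\rangle = i\hbar\,\Poi{F}{G}$ as claimed. Einstein causality for $\star_W$ is a consequence of the same support property of $\Delta$: for spacelike-separated $\Ocal_1, \Ocal_2$, the commutator of any two elements reduces to a polynomial in $\Delta$ evaluated between their supports, which vanishes.

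The main obstacle is the a priori dependence of $\fA$ on the auxiliary choice of $W$. I would handle this in the usual pAQFT fashion: if $W'$ is another Hadamard two-point function, then $w = W - W'$ is a smooth symmetric kernel, and the formal exponential $\alpha_w = \exp\bigl(\tfrac{\hbar}{2}\langle w, \delta^2/\delta\phi^2\rangle\bigr)$ is a well-defined $\CC[[\hbar]]$-linear isomorphism of $\star_W$ with $\star_{W'}$ that is the identity modulo $\hbar$. This exhibits $(\fA(\Ocal),\star)$ as canonically defined up to unique isomorphism; one may either fix a global $W$ on $\Mcal$ once and for all, or define $\fA(\Ocal)$ intrinsically as the colimit over the filtered system of Hadamard states. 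Either way, the assignment $\Ocal \mapsto \fA(\Ocal)$ is functorial in $\Caus(\Mcal)$, lands in $\Alg^*(\Nuch)^\inj$, and satisfies the required commutation relation on linear generators.
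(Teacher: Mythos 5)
Your construction is correct and would establish the proposition, but it follows the ``standard'' pAQFT route rather than the one this paper takes, and the differences matter for what comes later. You build the quantum algebra from a Hadamard two-point function $W$ on microcausal functionals, which forces you to invoke H\"ormander's wavefront-set criterion for the products and then to argue that the result is independent of the choice of $W$ via the intertwiners $\alpha_w$. The paper instead restricts from the outset to \emph{regular} functionals $\F_{\reg}$ (all derivatives have empty wavefront set), and defines the star product as the exponential product built directly from the causal propagator, $F_1 \star F_2 = m \circ e^{\hbar \widetilde{\partial}_{G^{\mathrm C}}}(F_1 \otimes F_2)$. On regular functionals the pairing with the distributional kernel $G^{\mathrm C}$ is automatically defined, so no microlocal analysis is needed and, more importantly, no auxiliary Hadamard state is chosen --- the independence argument you spend your last paragraph on simply does not arise. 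The other structural difference is the on-shell quotient: you quotient by functionals vanishing on solutions, whereas the paper quotients by the Poisson ($\star$-) ideal generated by $\iota_{dS}X$ for regular vector fields $X$, precisely so that this quotient appears as $H^0$ of the Koszul-type complex $(\mathfrak{PV}_{\reg},\delta_S)$; that dg presentation, together with the propagator identities relating $G^{\mathrm C}$, $G^{\mathrm D}$ and the time-ordered product, is what drives the comparison with the Costello--Gwilliam factorization algebra. Your approach buys generality (microcausal observables, the gateway to local/interacting functionals and to states), at the cost of extra machinery that the proposition does not require. Two small points to tighten: the structure maps push functionals forward along $F \mapsto F \circ \psi^*$ rather than ``restricting'' them, and their compatibility with the Peierls bracket needs the coherence $\chi_{\psi(\Ocal_1)}\, G^{\mathrm C}_{\Ocal_2}|_{\Dcal(\psi(\Ocal_1))} = G^{\mathrm C}_{\psi(\Ocal_1)}$, which follows from uniqueness of retarded/advanced Green's functions on causally convex opens (this is also why $\Caus(\Mcal)$ is the right index category); and the set of Hadamard two-point functions is not naturally filtered, so ``colimit over Hadamard states'' should be replaced by fixing one global $W$ or by the usual bundle-of-algebras construction.
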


\begin{rem}
	In Proposition~\ref{FRconstruction} we mention the Peierls bracket, which is a Poisson bracket introduced by Peierls in \cite{Pei}. It is defined using the Lagrangian formalism (in contrast to the usual canonical bracket introduced in the Hamiltonian framework), in a fully covariant way, as a bracket on the algebra of functions on the space of solutions to the equations of motion. A key feature is that it has a well-defined off-shell extension to a Poisson bracket on the space of all functionals on the configuration space (see \cite{DF03}). We come back to this structure in Section~\ref{subsec: shifting}.   
\end{rem}

\begin{rem}
	Note that allowing for distributional solutions enforces a restriction on the dual, so that $\F$ is generated by functionals of the form $\ph\mapsto \int \ph f$, where $f$ is a compactly supported test density on $M$, modulo the ideal generated by functionals of the form $\ph\mapsto \int P\ph f$.
\end{rem}

Analogously, the CG approach to free theories applies to Lorentzian manifolds, as we show below, and we obtain the following.

\def\dvol{{\mathrm dvol}}

\begin{prop}
\label{CGconstruction}
Let $\Mcal=(M,g)$ be a $d$-dimensional, oriented, time-oriented, and globally hyperbolic spacetime with the metric $g$ of signature $(+,-,\dots,-)$. 
Given a vector bundle $\pi:E\rightarrow M$ and a Green hyperbolic operator $P$, there is a classical field theory  model $\Pcal$,
i.e., a $P_0$ algebra in factorization algebras $\Pcal$ on $M$ where 
for each open $U \subset M$, the commutative dg algebra $\Pcal(U)$ is generated topologically by the cochain complex
\[
\Dcal(U)[1] \xrightarrow{P} \Dcal(U).
\]
It is equipped with the degree 1 Poisson bracket by using the Leibniz rule to extend the pairing on generators
\[
\Poi{f_{-1}}{f_0} = \int_U f_{-1} f_0 \,\dvol_g,
\]
with $f_{-1}$ in degree -1 and $f_0$ in degree 0.
There is a quantum field theory model $\Acal$ for the free theory with operator $P$,
i.e., a BD algebra in factorization algebras on $M$ where $\Acal(U)$ is the BD quantization of $\Pcal(U)$ whose differential is $\d_{\Pcal} + \hbar \triangle$.
This BV Laplacian $\triangle$ is determined by the conditions that it is compatible with the shifted Poisson bracket on quadratic terms and that it vanishes on constants and on linear generators.
\end{prop}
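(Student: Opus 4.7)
The plan is to build $\Pcal$ as a completed symmetric algebra on a Koszul-type complex resolving the space of classical solutions, and then to quantize by adding a BV Laplacian weighted by $\hbar$. Concretely, set
\[
\Pcal(U) = \widehat{\mathrm{Sym}}\bigl(\Dcal(U)[1] \xrightarrow{P} \Dcal(U)\bigr),
\]
where the symmetric algebra is formed using the completed projective tensor product in $\Nuc$; it inherits a differential $\d_{\Pcal}$ extending $P$ as a derivation of the product. For an open inclusion $U \hookrightarrow V$, extension by zero on $\Dcal$ commutes with $P$ and yields a cdga map, and for pairwise disjoint opens inside a common $V$ we use the commutative product in $\Pcal(V)$ to assemble the multiplication maps of a prefactorization structure. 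The $P_0$ bracket is defined on generators exactly as in the statement---zero on each copy of $\Dcal(U)$ separately and the integration pairing between the two copies---and extended by the graded Leibniz rule. Its compatibility with $\d_{\Pcal}$ follows because $\{Pf_{-1},f_0\}$ is a bracket of two degree-zero generators and so is zero by definition; graded antisymmetry and Jacobi are immediate on generators and propagate to all of $\Pcal(U)$ by the biderivation property.

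The main technical step is the Weiss cosheaf axiom, which I would verify level by level in symmetric degree. In degree $n$, the relevant summand identifies (up to cohomological shifts) with $S_n$-invariant compactly supported sections of $E^{\boxtimes n}$ on $U^n$, i.e.\ with an invariant piece of $\Dcal^{\widehat{\otimes} n}(U)$. Given a Weiss cover $\{U_i\}$ of $U$, the family $\{U_i^n\}$ is an ordinary open cover of $U^n$: any $n$-tuple of points in $U$ involves at most $n$ distinct points and therefore lies in some $U_i^n$, by the defining property of Weiss covers. Since compactly supported sections form a cosheaf on $U^n$ in the ordinary topology, the usual \v{C}ech argument delivers a coequalizer presentation, which descends to $S_n$-invariants (equivalently coinvariants, these coincide in characteristic zero). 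Summing over $n$ then yields the Weiss cosheaf condition for $\Pcal$. Some care is needed to commute cokernels with the completed tensor product, but nuclearity of the spaces ensures this.

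For the quantization, set $\Acal(U) = \Pcal(U)[[\hbar]]$ with differential $\d_{\Pcal} + \hbar\triangle$. The BV Laplacian $\triangle$ is determined by $\triangle(1) = 0$, $\triangle$ vanishing on the linear generators, and the second-order relation
\[
\triangle(a \cdot b) = \triangle(a) \cdot b + (-1)^{|a|}\, a \cdot \triangle(b) + \{a,b\}.
\]
A direct computation gives $\triangle^2 = 0$ and $[\d_{\Pcal},\triangle] = 0$, the latter because $\d_{\Pcal}$ is a derivation of the bracket; together these yield $(\d_{\Pcal} + \hbar\triangle)^2 = 0$. The BD identity $\d(ab) = \d(a)\, b + (-1)^{|a|}\, a\, \d(b) + \hbar\{a,b\}$ then holds on the nose. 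The classical prefactorization maps extend $\CC[[\hbar]]$-linearly, and the Weiss cosheaf condition transports through the $\hbar$-completion because it already holds in each $\hbar$-degree separately.

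The step I expect to be most delicate is the functional-analytic side of the Weiss cosheaf verification: one must interchange the colimits along the Weiss \v{C}ech diagram with the completion operations inherent in $\widehat{\mathrm{Sym}}$ and $\widehat{\otimes}$ and check that the resulting structure maps remain continuous morphisms in $\Nuc$. Once that is in place, the $P_0$ and BD structures are essentially formal consequences of the Koszul resolution together with the BV quantization recipe, so those checks are routine.
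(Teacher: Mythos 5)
Your construction is the same as the paper's: the completed symmetric algebra $PV(U)$ on the Koszul complex $\Dcal(U)[1]\xrightarrow{P}\Dcal(U)$, the bracket extended from the generators by the Leibniz rule, the prefactorization maps from extension by zero and the commutative product, and the BD quantization by the Laplacian determined on generators; the combinatorial heart of your Weiss-cosheaf argument (a Weiss cover $\{U_i\}$ of $U$ gives an ordinary cover $\{U_i^n\}$ of $U^n$, then use that $\Gamma_c$ is a cosheaf and pass to $S_n$-(co)invariants) is exactly the argument of \cite{CoGw} that the paper cites. Two caveats. First, your assertion that ``nuclearity of the spaces ensures'' the interchange of the \v{C}ech colimit with the completions in $\widehat{\Sym}$ and $\widehat{\otimes}$ is precisely the step the authors do \emph{not} claim in $\Nuc$: they obtain the factorization (as opposed to prefactorization) property only after changing the target to convenient vector spaces, and explicitly leave the nuclear version as a conjecture, since cokernels and colimits in topological vector spaces are badly behaved. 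So this step cannot be waved through on nuclearity alone. Second, your check that $\d_{\Pcal}$ is compatible with the bracket only treats $\{Pf_{-1},f_0\}$; the nontrivial case is $\d\{f_{-1},g_{-1}\}$ with both generators in degree $-1$, where the derivation property forces $\int (Pf)g = \int f(Pg)$, i.e.\ formal self-adjointness of $P$ with respect to the chosen fiber pairing. This holds for the variational operators in play but is a hypothesis you are silently using, not a formal consequence of the Leibniz extension.
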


To summarize, we have the following collection of models.
\begin{center}
	\begin{tabular}{|c|c|c|}
		\hline
		& \textbf{FR} & \textbf{CG}\\
		\hline
		classical &$\Pfrak$ &$\Pcal$\\
		\hline
		quantum & $\fA$ &$\Acal$\\
		\hline
	\end{tabular}
	\end{center}

We remark that these propositions might seem distinct on the surface, 
since the CG result involves cochain complexes while the FR result does not.
This distinction disappears when one examines the actual constructions:
both use a BV framework, and hence the FR construction actually builds a cochain-level functor as well.
We formalize a dg version of pAQFT in Section~\ref{sec:dgpaqft} below, which makes the comparison even more obvious.

\subsection{The comparison results}

With these models in hand, a clean comparison result can be stated.
Before making the formal statement, we first explain it loosely.

The basic idea is that we can restrict the factorization algebras to $\Caus(\Mcal)$, 
since every causally-convex open is manifestly an open subset and 
hence there is an inclusion functor $\Caus(\Mcal) \hookrightarrow \Open(M)$.
The restrictions $\Pcal|_{\Caus(\Mcal)}$ and $\Acal|_{\Caus(\Mcal)}$ can be further simplified by taking cohomology on each $\Ocal \in \Caus(\Mcal)$:
we define functors
\[
H^*(\Pcal)|_{\Caus(\Mcal)}(\Ocal) = H^*(\Pcal(\Ocal))
\]
and
\[
H^*(\Acal)|_{\Caus(\Mcal)}(\Ocal) = H^*(\Acal(\Ocal)).
\]
This cohomology is concentrated in degree zero, 
which we verify as we prove the comparison results.
(For gauge theories the cohomology is not necessarily concentrated in degree zero.)

We then want to compare the functors $H^0(\Pcal/\Acal)|_{\Caus(\Mcal)}$ 
to the corresponding FR functors.
The targets of these functors, however, are different.
For instance, $\Pcal$ takes values in 1-shifted Poisson algebras and hence so does $H^0 \Pcal$ (although the bracket must then be trivial for degree reason).
By contrast, $\Pfrak_\pol$ takes values in Poisson $*$-algebras.
(The subscript $\pol$ indicates that we will use polynomial algebras for the comparison.
See Remark~\ref{reg vs pol} for a discussion of natural variant constructions, notably with regular functions.)
Hence we apply forgetful functors to land in the same target category.
We now state our comparison result for the classical level.

\begin{thm}[Comparison of classical models]
\label{thm: classical cf}
There is
a natural transformation 
\[
\iota^{cl}: \fc \circ \Pcal|_{\Caus(\Mcal)} \Rightarrow \fc \circ \Pfrak_\pol
\]
of functors to commutative dg algebras $\CAlg(\Ch(\Nuc))$, and this natural transformation is an isomorphism of commutative dg algebras.
Thus, there is a natural isomorphism
\[
H^0(\iota^{cl}) : \fc \circ H^0({\Pcal})|_{\Caus(\Mcal)} \Rightarrow \fc\circ H^0(\Pfrak_\pol)
\]
of functors into commutative algebras~$\CAlg(\Nuc)$.
\end{thm}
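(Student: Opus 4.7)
The plan is to build $\iota^{cl}$ first at the level of generators and then extend via the universal property of the free commutative dg algebra. By Proposition~\ref{CGconstruction}, the cdga $\Pcal(\Ocal)$ is a topological completion of the symmetric algebra on the two-term complex $L(\Ocal) \doteq (\Dcal(\Ocal)[1] \xrightarrow{P} \Dcal(\Ocal))$, while the remark after Proposition~\ref{FRconstruction} exhibits $\fc \circ \Pfrak(\Ocal)$ as the commutative algebra on linear functionals $\phi \mapsto \int_\Ocal \phi f\,\dvol_g$ (for $f \in \Dcal(\Ocal)$), modulo the ideal generated by functionals of the form $\phi \mapsto \int_\Ocal (Pf)\phi\,\dvol_g$. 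I would define a cochain map $L(\Ocal) \to \fc \circ \Pfrak(\Ocal)$ (with target concentrated in degree $0$) by sending $f \in \Dcal(\Ocal)$ in degree $0$ to its associated linear functional and the degree $-1$ summand to zero; compatibility with the differential is precisely the statement that the image of $P$ lies in the ideal being quotiented out, which uses formal self-adjointness of $P$. Extending by the universal property yields $\iota^{cl}(\Ocal)$, and naturality in $\Ocal \in \Caus(\Mcal)$ is automatic since both functors are manufactured from the cosheaf $\Dcal$ by functorial operations.

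To establish the quasi-isomorphism claim, I would reduce to the case of linear generators, invoking the fact that (completed) symmetric algebras of quasi-isomorphic nuclear cochain complexes remain quasi-isomorphic. On generators, the $H^0$ statement amounts to identifying $\Dcal(\Ocal)/P\Dcal(\Ocal)$ with linear functionals on distributional solutions of $P\phi = 0$ on $\Ocal$. This is a standard consequence of $P$ being Green-hyperbolic: the causal propagator $E = E^R - E^A$ exhibits $\Dcal(\Ocal)/P\Dcal(\Ocal)$ as the space of smooth solutions with compactly supported Cauchy data, and dualization identifies its continuous linear dual with the space of distributional solutions. One also needs $H^{-1}(L(\Ocal)) = \ker(P|_{\Dcal(\Ocal)}) = 0$, which follows from the support properties of $E^R$ and $E^A$ applied to compactly supported test data on a causally convex subregion of a globally hyperbolic spacetime.

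The main obstacle is the functional-analytic bookkeeping hidden in the parenthetical \emph{up to topological completion}. The symmetric algebras appearing on the two sides use somewhat different completions: $\Pcal(\Ocal)$ is built via completed projective tensor powers of $\Dcal(\Ocal)$, whereas $\Pfrak(\Ocal)$ is built from symmetric distributions satisfying wavefront-set conditions adapted to the Peierls bracket. One must verify that $\iota^{cl}$ extends continuously to the relevant completions and that symmetric powers of a nuclear quasi-isomorphism remain quasi-isomorphisms after completion. Nuclearity of $\Dcal(\Ocal)$ is the linchpin here. Once this is settled, the degree-$0$ statement $H^0(\iota^{cl}): \fc \circ H^0(\overline{\Pcal})|_{\Caus(\Mcal)} \overset{\cong}{\Rightarrow} \fc \circ \Pfrak$ follows formally by applying $H^0$ to the quasi-isomorphism of commutative dg algebras established above.
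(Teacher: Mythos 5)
Your proposal is correct and arrives at the same natural transformation, but it is organized differently from the paper's argument. The paper does not build $\iota^{cl}$ on generators and extend by the universal property; instead it observes that the CG complex $PV(U)$ sits inside the FR space of regular polyvector fields as a dense (for the topology $\tau$) subcomplex,
\[
\widetilde{PV}(U) \subset PV(U) \subset \mathfrak{PV}_\reg(U),
\]
that the differential $\delta_S$ restricts to the CG differential $\d$ on this subspace, and then composes the resulting cochain map $\iota^\#$ with the canonical projection from the nonpositively graded complex $(\mathfrak{PV}_\reg,\delta_S)$ onto its zeroth cohomology, which is $\Pfrak$. Naturality and compatibility with the commutative products are immediate because one map is an inclusion of subalgebras and the other is a quotient. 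What your route buys is an explicit and self-contained proof of the quasi-isomorphism claim: reducing to the linear generators $\Dcal(\Ocal)[1]\xto{P}\Dcal(\Ocal)$ and invoking the exact sequence $0\to\Dcal\xto{P}\Dcal\xto{G^{\rm C}}\Ecal\xto{P}\Ecal$ of B\"ar's theorem (which the paper only deploys later, in the discussion of shifted versus unshifted Poisson structures) gives precisely the identifications $H^{-1}=\ker(P|_{\Dcal(\Ocal)})=0$ and $H^0=\Dcal(\Ocal)/P\Dcal(\Ocal)\cong Sol(\Ocal)^*$ that the paper leaves implicit in its earlier lemma on $H^*(PV(U),\d)$. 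What the paper's route buys is that the comparison map factors through the dg FR model $(\mathfrak{PV}_\reg,\delta_S)$ at the cochain level, which is exactly the setup reused for the quantum comparison, where the map must be twisted by $\alpha_{iG^{\rm D}}$ before the same projection is applied. Both arguments defer the same functional-analytic point (that the completed symmetric powers of a quasi-isomorphism of nuclear complexes remain a quasi-isomorphism), which is what the parenthetical ``up to a topological completion'' in the statement is absorbing; your flagging of this as the main remaining obstacle matches the paper's level of rigor here. One small correction: the FR side $\F_\reg$ consists of \emph{regular} functionals, whose derivatives have empty wavefront set, so its completion is of the same smooth type as the CG side rather than a space of distributions with wavefront-set constraints; the two sides differ only in which completion of $\bigoplus_n \Dcal(U^n)_{S_n}$ one takes, not in the singularity structure of the allowed kernels.
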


This identification is not surprising, as both approaches end up looking at (a class of) functions on solutions to the equations of motion.

We can extend to the quantum level, 
but here we need the forgetful functor $\fv: \Alg^*(\Nuch) \to \Nuch$,
since $H^0 \Acal$ is \emph{a priori} just a vector space.

\begin{thm}[Comparison of quantum models]
\label{thm: quantum cf}
There is a natural transformation 
\[
\iota^q: \Acal|_{\Caus(\Mcal)} \Rightarrow \fv \circ \fA_\pol
\]
of functors to $\Ch(\Nuch)$, and this natural transformation is an isomorphism of cochain complexes.
Thus, there is a natural isomorphism
\[
H^0(\iota^q) : H^0({\Acal})|_{\Caus(\Mcal)} \overset{\cong}{\Rightarrow} \fv \circ H^0(\fA_\pol).
\]
Modulo $\hbar$, this isomorphism agrees with the isomorphism of classical models.

In fact, on each $\Ocal \in \Caus(\Mcal)$, the map $\iota^q$ is an isomorphism of cochain complexes
\[
\alpha_{\partial_{G^{\rm D}}}: \fv (\Pcal(\Ocal)[[\hbar]]) \xto{\cong} \fv ( \Acal(\Ocal))
\]
determined by the analytic structure of the equations of motion.
Under this identification, the factorization structure of $\Acal$ agrees with the time-ordered version of the product $\star_{G^{\mathrm C}}$ on~$\fA_\pol$.
\end{thm}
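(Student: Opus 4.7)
The plan is to prove the theorem by constructing $\iota^q$ pointwise on each $\Ocal \in \Caus(\Mcal)$ as the intertwiner $\alpha_{\partial_{G^{\rm D}}}$ built from the Dirac propagator, and then verifying three separate compatibility conditions: with the cochain differentials, with the classical comparison (mod $\hbar$), and with the two kinds of ``product'' structures (factorization maps on the CG side versus the $\star$-product on the FR side). Because both sides are, after forgetting structure, completed symmetric algebras on the same underlying space of functionals, the map $\alpha_{\partial_{G^{\rm D}}}$ should be the formal exponential of the second-order bidifferential operator associated to $G^{\rm D}$, acting on polynomial functionals; this is the standard device that converts one choice of 2-point function into another in pAQFT, and is the natural candidate for the fiberwise isomorphism.

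First, on a fixed $\Ocal$ I would use Proposition \ref{CGconstruction} to write $\Acal(\Ocal)$ explicitly as the BD quantization of the $P_0$-algebra generated by $\Dcal(\Ocal)[1] \xto{P} \Dcal(\Ocal)$ with differential $\d_{\Pcal} + \hbar\triangle$, and use Proposition \ref{FRconstruction} together with the explicit construction (to be given in the body of the paper) of the $\star_{G^{\mathrm C}}$-product on $\fA(\Ocal)$ to identify the graded vector space underlying $\fA(\Ocal)$ with the same symmetric algebra on linear functionals, tensored with $\CC[[\hbar]]$. In this common presentation, $\alpha_{\partial_{G^{\rm D}}}$ is a grading-preserving, $\CC[[\hbar]]$-linear isomorphism of topological vector spaces. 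I would then check that it is a cochain map, by showing that conjugation of the BV Laplacian $\triangle$ by $\alpha_{\partial_{G^{\rm D}}}$ absorbs the $\hbar\triangle$ correction into the trivial differential on the FR side; this is where one uses that $\triangle$ is determined by the same pairing whose retarded/advanced resolution gives~$G^{\rm D}$.

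Next, I would verify naturality in two stages. For inclusions $\Ocal \subset \Ocal'$ in $\Caus(\Mcal)$, naturality follows because $\alpha_{\partial_{G^{\rm D}}}$ is defined by a bidifferential operator with an integral kernel supported on the diagonal of $\Ocal \times \Ocal$, so extension by zero along the inclusion commutes with its exponential. For compatibility with structure maps: on the FR side, morphisms in $\Caus(\Mcal)$ give $*$-algebra maps which, being $\hbar$-linear and built from the same kernel, intertwine with $\alpha_{\partial_{G^{\rm D}}}$; on the CG side, restriction of factorization-algebra structure maps along the embedding $\Caus(\Mcal) \hookrightarrow \Open(M)$ gives the corresponding cochain maps by functoriality. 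The mod-$\hbar$ reduction statement is then automatic: $\alpha_{\partial_{G^{\rm D}}} \bmod \hbar$ is the identity on polynomial functionals, so $H^0(\iota^q) \bmod \hbar$ reduces to $H^0(\iota^{cl})$ of Theorem \ref{thm: classical cf}.

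The main obstacle, and the step that truly needs the intermediate ``time-ordered product'' bridge advertised in the abstract, is the last claim: that under $\alpha_{\partial_{G^{\rm D}}}$ the factorization product of $\Acal$ on disjoint opens $\Ocal_1 \sqcup \Ocal_2 \subset \Ocal$ corresponds to the time-ordered version of $\star_{G^{\mathrm C}}$. The difficulty is that the CG factorization structure is defined for arbitrary disjoint inclusions, with no causal ordering required, whereas $\star_{G^{\mathrm C}}$ is genuinely noncommutative and knows about the causal structure through the commutator function~$G^{\mathrm C}$. The resolution, which I would spell out by introducing the time-ordered product $\cdot_{\Tcal}$ as an auxiliary commutative-up-to-time-ordering product on $\fA(\Ocal)$, is that $\alpha_{\partial_{G^{\rm D}}}$ is precisely the map which converts pointwise multiplication on $\Pcal(\Ocal)$ into $\cdot_{\Tcal}$ on $\fA(\Ocal)$, and that for disjoint supports the CG factorization product \emph{is} pointwise multiplication on generators. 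One then observes that $\cdot_{\Tcal}$ agrees with $\star_{G^{\mathrm C}}$ whenever the two factors are disjointly supported and causally orderable, which handles the local case; the extension to arbitrary disjoint pairs then follows from Weiss-cosheaf descent, using that any pair of disjoint relatively compact opens in $\Ocal$ can be refined by a Weiss cover whose elements lie in pairwise causally orderable configurations. This descent step is the technical heart of the argument, and will occupy most of the detailed proof.
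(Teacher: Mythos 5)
Your overall skeleton is the same as the paper's: $\iota^q$ is indeed built from the normal-ordering map $\alpha_{iG^{\mathrm D}}$, which conjugates the quantum differential $\hat{s}=\delta_S-i\hbar\Lap$ into the classical one $\delta_S$, reduces to the identity modulo $\hbar$, and is then composed with the dense inclusion $PV\hookrightarrow\mathfrak{PV}_\reg$ and the projection to $H^0$. However, your justification of naturality rests on a false premise. The kernel $G^{\rm D}=\tfrac12(G^{\rm R}+G^{\rm A})$ is a Green's function for $P$, not a local operator: its support is the set of causally related pairs of points, not the diagonal of $\Ocal\times\Ocal$. What actually makes $\alpha_{iG^{\mathrm D}}$ natural is the uniqueness of retarded and advanced Green's functions on causally convex subregions, which gives $\chi_{\psi(\Ocal_1)}\,G^{\rm D}_{\Ocal_2}\big|_{\Dcal(\psi(\Ocal_1))}=G^{\rm D}_{\psi(\Ocal_1)}$ for an inclusion $\psi:\Ocal_1\to\Ocal_2$ in $\Caus(\Mcal)$; since the derivatives $F^{(2n)}$ of an observable supported in $\Ocal_1$ only pair with this restriction, the two exponentials agree. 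This is precisely where causal convexity enters and is the reason the comparison is only stated on $\Caus(\Mcal)$; a ``diagonal support'' argument would erase that distinction entirely.

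The final step is where your proposal genuinely breaks down. The ``Weiss-cosheaf descent over causally orderable configurations'' is not workable as stated: two disjoint opens in a Lorentzian manifold need not be causally orderable (nor spacelike separated), and a Weiss cover of $U_1\sqcup U_2$ necessarily contains disconnected opens straddling both components, so the proposed refinement does not reduce the general case to the orderable one. More importantly, the identification of the factorization product with $\star_{G^{\mathrm C}}$ cannot hold on the nose --- it holds only up to $\delta_S$-exact terms, and your plan never invokes passing to cohomology at this stage. The paper's route is different: restrict to a tubular neighborhood of a Cauchy surface, use the time-slice machinery to build the map $\beta_+(\Ocal_f)=\Ocal_{f-P\chi G^{\rm R}f}$ that replaces an observable by a cohomologous one supported strictly to the future of the other, and then verify by direct computation on linear generators that $m_T(\beta_+(\Ocal_f),\Ocal_g)=\Ocal_f\star\Ocal_g+\delta_S(\cdots)$, so the transported factorization product agrees with $\star$ at the level of cohomology. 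Without the ``up to exact terms'' qualification and the $\beta_+$ device, your argument has no mechanism for handling pairs of observables whose supports are causally overlapping, which is the actual technical heart of the comparison.
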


The second part of the quantum comparison theorem is likely cryptic at the moment,
as it involves the notations $\alpha_{\partial_{G^{\rm D}}}$ and $\star_{G^{\mathrm C}}$ and the terminology ``time-ordered products''
that we have not yet introduced.
We will explain these in the next section, 
as they are the key to understanding how the two approaches to QFT relate.
We wish to clarify now, however, the main thrust of the theorem.

To paraphrase the theorem, the factorization algebra $\Acal$ knows information equivalent to the QFT model $\fA$. 
Conversely, one can recover from $\fA$, the precosheaf structure of $\Acal$ restricted to~$\Caus(\Mcal)$.
(This assertion is true when one uses the cochain-level refinement of $\fA$, 
as we will see below when reviewing the explicit FR construction.)

What is even more important is that there is a natural way to identify the algebra structures on either side.
We will show that one can read off the FR deformation quantization $\fA$ from the CG factorization algebra $\Acal$ and conversely.

\subsection{Key ingredients of the argument}

In this section we recall the relevant background about quantum field theories,
notably the notions appearing in the theorems above.
We explain, in particular, how the associative algebra structure appears in $\fA$,
why it is important to the physics, 
and how it relates to constructions in the CG formalism.

\subsubsection{Time-ordered products and why they are important}
\label{sec: time ordering}

A free quantum theory is fully characterized by its net $\fA$, 
but in order to deal with interactions, 
we need one more structure, namely the time-ordered product. 
Constructing time-ordered products of free fields is an intermediate step towards building interacting fields. 
The idea is analogous to using the interacting picture in quantum mechanics. 
Namely, we would like to apply the Dyson formula to define the time evolution operator as a time-ordered exponential:
\begin{align*}
U(t,s)
&=e^{itH_0}e^{-i(t-s)(H_0+H_I)}e^{-isH_0}\\
&=1+\sum_{n=1}^\infty\frac{i^n}{n!}\int_ {([s,t]\times\mathbb R^3)^n}T(\normOrd{H_I(x_1)}\dots \normOrd{H_I(x_n)})\, \d^{4n}x.
\end{align*}
Here $\normOrd{-}$ denotes the normal-ordering, 
$T$ denotes time-ordering, $H_0$ denotes the free Hamiltonian,
and $H_I$ denotes the the interacting Hamiltonian,
which is the operator-valued function of spacetime 
\[
H_I(x)=e^{iH_0x^0}\normOrd{H_I(0,\mathbf x)}e^{-iH_0x^0},
\]
where $x = (x^0, \mathbf x)$ denotes a point in spacetime.
Heuristically, one could use the unitary map defined above to obtain interacting fields as
\be\label{phiI}
\ph_I(x)=U(x^0,s)^{-1}\ph(x)U(x^0,s)=U(t,s)^{-1}U(t,x^0)\ph(x)U(x^0,s)\,,
\ee
for $s<x^0<t$.

To put this approach on a rigorous footing,
the framework of pAQFT replaces the Dyson series by the \textit{formal S-matrix}:
\[
\Scal(\lambda V)=1+\sum_{n=1}^{\infty}\frac{1}{n!}\left(\frac{i\lambda}{\hbar}\right)^{n}\Tcal_n(V^{\otimes n})\,,
\]
where $V\in\Ci(\Ecal,\CC)$ is the interaction functional, 
each $\TT_n$ is a linear map from appropriate domain in $\Ci(\Ecal,\CC)^{\otimes n}$ to $\Ci(\Ecal,\CC)[[\hbar]]$, 
and the above expression is to be understood as a power series in the coupling constant $\lambda$ 
with coefficients in Laurent series in $\hbar$. 
Constructing $\Scal$ is then reduced to construction of $\TT_n$'s, 
which in turn is done using the Epstein-Glaser renormalization~\cite{EG}. 

In \cite{FR3} it was shown that the maps $\TT_n$ arise from a commutative, associative product $\T$ defined on a certain domain of  $\Ci(\Ecal,\CC)[[\hbar]]$. 
Here, to avoid problems related to renormalization, 
we will consider $\T$ on the subset $\F_{\reg}[[\hbar]]$ of  $\Ci(\Ecal,\CC)[[\hbar]]$.
(See Definition \ref{df: regular} for its description.)

More abstractly, we introduce the following notion.

\begin{df}\label{timeorderedprod}
Given the classical free off-shell theory $\mathfrak{P}$ and its quantization $\fA$, 
the \textbf{time-ordered product} is realized as a quadruple $(\fP_0,\fA_T,\xi,\Tcal)$ 
consisting of:
\begin{itemize}
	\item $\fP_0\subset \fP$, a subfunctor of the classical theory functor that characterizes the domain of definition of the time-ordered product,
	\item  a functor 	
	\[
	\fA_T \colon \Caus(\Mcal) \to \CAlg^*(\Nuc_\hbar)\,,
	\]
	which gives the time-ordered product as a commutative product,
	\item a natural embedding 
	\[
	\xi:\fv\circ\fA_T \Rightarrow \fv\circ\fA\,,
	\] 
which identifies $\fA_T$ as a subspace of $\fA$, but only as a vector space,
	\item and a natural isomorphism of commutative algebras
	\[
	\Tcal: \mathfrak{c}\circ\mathfrak{P}_0[[\hbar]] \Rightarrow \fA_T,
	\] 
\end{itemize}
such that for any pair of inclusions $\psi_i:\Ocal_i\rightarrow\Ocal$ in $\Caus(\Mcal)$,
 if $\psi_1(\Ocal_1)\prec \psi_2(\Ocal_2)$, then 
\[\xi_\Ocal\circ m_{\TT}\circ(\fA_T\psi_2\otimes \fA_T\psi_1)=m_{\star}\circ(\fA\psi_2\circ\xi_{\Ocal_2}\otimes\fA\psi_1\circ\xi_{\Ocal_1} )\,,
\]
where $m_{\TT}$/$m_\star$ is the multiplication with respect to the time-ordered/star product
and the relation ``$\prec$'' means ``not later than,'' i.e., there exists a Cauchy surface in $\Ocal$ that separates $\psi_1(\Ocal_1)$ and $\psi_2(\Ocal_2)$. 
\end{df}

The natural transformation $\TT$ provides an equivalence between the time-ordered product $\T$ of $\fA_T$ and the classical product $\cdot$ of $\mathfrak{c}\circ\mathfrak{P}[[\hbar]]$.
In formulas, we have
\[
F\T G\doteq \Tcal_{\Ocal}(\Tcal_{\Ocal}^{-1}F\cdot \Tcal_{\Ocal}^{-1}G)\,,
\]
where $F,G\in\fA_T(\Ocal)$.

\begin{rem}
Note that the existence of the natural isomorphism $\Tcal$ and the fact that $\fA$ is a quantization of $\fP$ imply that there is a natural embedding $\fv\circ\fA_T \Rightarrow \fv\circ\fA$, but $\xi$ does not have to coincide with this embedding. However, one can choose $\xi$ to be the identity map and choose the quantization map in definition \eqref{df:quantization} as $\xi\circ\Tcal$. Such choice has been used in \cite{HR16} and it greatly simplifies the construction of the interacting star product.
\end{rem}

\begin{rem}
	The way in which we phrased definition~\ref{timeorderedprod} is general enough to cover also the situation where renormalization is needed. For the purpose of this paper (where we work only with regular functionals, so no renormalization  is needed), we can take $\fP_0$ to be just $\fP$ and $\xi$ to be a natural isomorphism.
\end{rem}
This definition intertwines the product on classical and quantum observables in a nontrivial way,
and as mentioned in Theorem \ref{thm: quantum cf},
it is the key to relating the algebraic structures on $\Acal$ and $\fA$.
Hence our goal is to construct this time-ordered product on free fields and show how it appears in the comparison map $\iota^q$.
We explain that in the next few subsections, which are thus somewhat technical.
The main ingredient is various propagators, or Green's functions, for the equation of motion.\footnote{Indeed, this project began when we realized we were using the same tricks with propagators.}

\subsubsection{Propagators}\label{Propagators}

We introduce the four key propagators, which are linearly related.
\begin{center}
	\begin{tabular}{|c|c|}
		\hline
		\textbf{Symbol} & \textbf{Meaning}\\
		\hline
		$G^{\mathrm{A}}$ &advanced propagator\\
		\hline
		$G^{\mathrm{R}}$ &retarded propagator\\
		\hline 
		$G^{\mathrm{C}}\doteq G^{\mathrm{R}}-G^{\mathrm{A}}$& causal propagator\\
		\hline
		$G^{\mathrm{D}}\doteq \frac{1}{2}\left(G^{\mathrm{R}}+G^{\mathrm{A}}\right)$& Dirac propagator\\
		\hline
	\end{tabular}
\end{center}
Note that the causal propagator is \emph{not} a Green's function but rather a bisolution (i.e. a distributional solution in both arguments), so
\[
P\circ G^{\mathrm C} = 0
\]
whereas for the others 
\[
P\circ G^{\mathrm A/ \mathrm R/ \mathrm D} = \delta_{\Delta},
\]
where $\delta_{\Delta}$ denotes the delta function of the diagonal $M \hookrightarrow M \times M$.
The advanced (respectively, retarded) propagator $G^A(x,y)$ has the property that it vanishes when the first point $x$ is in the ``past'' (respectively, ``future'') of $y$.

The causal propagator $G^{\mathrm C} $ is related to another important type of  bi-solution of $P$, namely the Hadamard function.

\begin{df}
	A \textbf{Hadamard function} $G^+$ for a normally hyperbolic operator $P$ is a distribution in ${\Dcal_2'}^{\sst \CC}(M)$ satisfying:
	\begin{enumerate}
		\item $G^+$ is a distributional bi-solution for $P$. 
		\item \label{im} $2\,\mathrm{Im}\,G^+=G^{\mathrm C}$
		\item $G^+$ fulfills the microlocal spectrum condition: 
		its wavefront set\footnote{
The \textbf{wavefront set} of a distribution $u\in \Dcal'(\RR^n)$ is a subset of $\dot{T}^*\RR^n$ (the co-tangent bundle minus the zero section) 
characterizing singular points and singular directions of $u$ (i.e.,directions in the cotangent space in which the Fourier transform does not decay rapidly). 
More precisely, the complement of $\WF(u)$ in $\dot{T}^*\RR^n$ is the set of points $(x,{k}) \in \dot{T}^*\RR^n$ for which there exists a ``bump function'' $f \in\Dcal(\RR^n)$ with $f(x)=1$ and
an open conic neighborhood $C$ of ${k}$, with
\[
\sup_{{k}\in C}(1+|{k}|)^N|\widehat{f \cdot u}({k})|<\infty\qquad\forall N \in \NN_0\,.
\]
This notion easily generalizes to open subsets of $\RR^n$ and to manifolds \cite{Hoer1}. 
Note that if a $WF(u)=\varnothing$, then $u$ is a smooth function. 
} is
		\[
		\WF(G^+)=\{(x,k;x',-k')\in \dot{T}M^2|(x,k)\sim(x',k'), k\in (\overline{V}_+)_x\}\,,
		\]
		where $(x,k)\sim(x',k')$ means that there exists a null geodesic connecting $x$ and $x'$ and $k'$ is the parallel transport of $k$ along this geodesic, $\dot{T}$ denotes the tangent bundle with the zero section removed and $(\overline{V}_+)_x$ is the closure of the cone of positive, future-pointing vectors in $T_x^*M$.
		\item  $G^+$ is of positive type, i.e. $\left<G^+,f\otimes\bar{f}\right>\geq 0$, for all non-zero $f\in\Dcal(M)\otimes\CC$. The bracket denotes the dual pairing between distributions and test functions.
	\end{enumerate}
\end{df}

Note that any $G^+$  can be written as
\[
G^+=\frac{i}{2}  G^{\mathrm C}+H\,,
\]
where $H$ is a real, symmetric distributional bi-solution for $P$. 
The \textbf{Feynman propagator} associated with this Hadamard function $G^+$ is then defined as
\[
G^{\rm F}=iG^{\rm D}+H\,.
\] 
For notational convenience, we refer to both bi-solutions and Green's functions as \textit{propagators}. 

We extend our table of propagators with 
\begin{center}
	\begin{tabular}{|c|c|}
		\hline
		\textbf{Symbol} & \textbf{Meaning}\\
		\hline
		$G^+\doteq \frac{i}{2}G^{\mathrm{C}}+H$&Hadamard function\\
		\hline
		$G^{\mathrm{F}}\doteq iG^{\mathrm{D}}+H$& Feynman propagator for $G^+$\\
		\hline
	\end{tabular}
\end{center}

\medskip
\noindent The propagators listed above can be used to define 
\begin{enumerate}
	\item new products on the observables and 
	\item automorphisms of the (underlying vector spaces of) observables.
\end{enumerate}
In the following sections we will explain these constructions in detail.
\subsubsection{Smooth maps between locally convex vector spaces}
In this work, we model observables as $\CC[[\hbar]]$-valued functions on the space of solutions to some linear differential equations (elliptic or hyperbolic). On various stages of the comparison between the CG and FR approaches, we also consider functions between arbitrary locally convex topological vector spaces.  For such functions one can introduce the notion of \textit{smoothness}, which we are going to use later.
We start by introducing smooth functions on $\Ecal(M)$. For future convenience, we state here the general definition of a functional derivative of a function between two Hausdorff locally convex spaces.

\begin{df}
	Let $U$ be an open subset of a Hausdorff locally convex space $X$
	and let $F$ be a map from $U$ to a Hausdorff locally convex space
	$Y$. Then $F$ \textbf{has a derivative at $x\in U$
		in the direction of $v \in X$} if the following limit
	\[
	\left<F^{(1)}(x),v\right> :=  \lim_{t\to 0}\frac{F(x+tv)-F(x)}{t},
	\]
	exists. The function $F$ is said to have a 
	\textbf{G\^ateaux differential}
	at $x$ if $\left<F^{(1)}(x),v\right> $ exists for every $v\in X$. $F$ is $C^1$ or \textbf{Bastiani differentiable} \cite{Bas64,Mich38} on $U$
	if $F$ has a 
	G\^ateaux differential at every $x\in U$
	and the map $F^{(1)}: U\times X\to Y$
	defined by $(x,v)\mapsto\left<F^{(1)}(x),v\right> $ is continuous on $U\times X$.
\end{df}

This definition applies in particular to functions from $\Ecal(M)$ to $\CC$. Iterating it $n$ times we define $C^n$-functionals of $\Ecal(M)$. If a functional is  $C^n$ for all $n\in\NN$, we call it (Bastiani) \textit{smooth} and write $F\in\Ci(\Ecal(M),\CC)$. Detailed properties of such functionals have been investigated in \cite{BDGR}.

 Localization properties of smooth functionals on 
 $\Ecal(M)$ are characterized by the notion of \textit{spacetime support}:
\be\label{spsupp}
\supp F\doteq \{x\in M \,|\, \forall\Ocal\ni x\textrm{ open},\exists \ph,\psi\in\Ecal \textrm{ s.t. }\suppressfloats\supp_M(\psi)\subset\Ocal\ \textrm{and}\ F(\ph+\psi)\neq F(\ph)\}
\ee
where $\supp_M(\psi)$ denotes the support of $\psi$ as a function on $M$.
This definition satisfies the equality 
\be\label{support}
\supp F\doteq \overline{\bigcup_{\ph\in\Ecal(M)} \supp_M( F^{(1)}(\ph)})\,.
\ee
(See \cite[Lemma~3.3]{BDGR}.)

Among all smooth functionals, a special role is played by the regular ones. 
Regularity properties of a smooth functional are  formulated in terms of the wavefront (WF) sets of its derivatives, since $F^{(n)}(\ph)\in{\Ecal_n'}^{\!\sst \CC}(M)$. (Recall from Section \ref{subsec:notation} that this notation means compactly supported distributional sections on $M^n$, and the superscript $\CC$ denotes the complexification.) See \cite[section 3.4]{BDGR} for a proof.

\begin{df}
	\label{df: regular}
	A functional $F$ is \textbf{regular} if $\WF(F^{(n)}(\phi))$ is empty for all $\phi\in\Ecal$ and $n\in\NN$. This condition is equivalent to having
	\[
	F^{(n)}(\phi)\in\Dcal_n^{\sst \CC}(M)\,,
	\]
	where we implicitly use the pairing on the fiber of $F$ to identify sections of $F$ with sections of $F^*$.
	We denote the space of regular functionals by~$\F_\reg(M)$.
\end{df}

\begin{df}\label{df: poly}
	A functional $F$ is called \textbf{polynomial} if it can be written as a linear combination of functionals of the form
	\[
	\phi\mapsto \left<\ph^{\otimes n},f\right>\,,
	\]
	where $f\in \Dcal_n^{\sst \CC}(M)$ and the dual pairing is the pairing between $\Ecal_n(M)$ and $\Dcal_n^{\sst \CC}(M)\subset {\Ecal'}^{\sst \CC}_n(M)$. We denote the space of polynomial functionals by~$\F_\pol(M)$. 
\end{df}
Clearly, $\F_\pol(M)\subset \F_\reg(M)$.

\subsubsection{Exponential products}\label{sec:exp:prod}

A propagator $G$ is an element of  $\Dcal_2'(M)$ and as such, can be viewed as a bi-vector field on $\Ecal(M)$. 
To make this precise, we first need to make sense of the tangent bundle $T\Ecal(M)$. To this end, we have to equip $\Ecal(M)$ with an infinite dimensional manifold structure. One obvious choice is to use the Fr{\'e}chet topology of $\Ecal(M)$\footnote{This is not the most optimal choice, as discussed e.g. in \cite{Michor}. Another possibility is discussed in Section~\ref{sec:polyvect}}. With this choice,
 since $\Ecal(M)$ is a vector space, we find that the tangent bundle $T\Ecal(M)$ can be identified with $\Ecal(M)\times \Ecal(M)$. Similarly, the total space of the bundle arising from tensoring the tangent bundle with itself (as a vector bundle over the base manifold $\Ecal(M)$) can be identified with $\Ecal(M)\times \Ecal(M)^{\otimes 2}$ and hence admits a natural completion to the vector bundle $\Ecal(M)\times \Dcal'_2(M) \to \Ecal(M)$. The propagator $G$ is a constant section of this completed bundle. 

Let $F$ be a smooth functional on $\Ecal(M)$ with smooth derivatives, i.e. $F\in\F_{\reg}(M)$. We use the suggestive notation $\partial_G$ to denote the differential operator constructed from $G$ as follows:
\[
\partial_G F\doteq \iota_G(F^{(2)})\,,
\]
where $\iota_G(F^{(2)})(\ph)=\left<G,F^{(2)}(\ph)\right>$ and the pairing is induced by the duality between $\Dcal'_2(M)$, where $G$ lives, and $\Dcal_2^{\sst \CC}(M)$, where $F^{(2)}(\ph)$ lives.

The propagator $G$ can also be viewed as a section of $\Ci(\Ecal(M)\times\Ecal(M), \Dcal'_2(M))$. 
We write 
\[
\widetilde\partial_G: \Ci(\Ecal(M)\times\Ecal(M),\CC) \to \Ci(\Ecal(M)\times\Ecal(M),\CC)
\]
to denote the differential operator defined on a tensor product by
\[
\widetilde{\partial}_G(F_1\otimes F_2)\doteq \iota_G (dF_1\boxtimes dF_2)\,,
\]
where $dF_1\boxtimes dF_2$ is an element of $\Ci(\Ecal(M)\times\Ecal(M),\Dcal_2^{\sst\CC}(M))$, so the insertion makes sense. 
This operator determines an operator denoted $e^{\hbar \widetilde{\partial}_G}$,
which we understand as a formal power series in $\hbar$ whose $\hbar^n$ term is the $n$-fold power of~$\widetilde{\partial}_G$.

\begin{df}
Given a constant-coefficient second-order differential operator $\widetilde{\partial}_G$,
we define an \textbf{exponential product} by
\[
F_1 \star_G F_2 = m \circ e^{\hbar \widetilde{\partial}_G} (F_1 \otimes F_2),
\]
where $m$ denotes the usual commutative multiplication, i.e. pullback by the diagonal map $\ph\mapsto \ph\otimes \ph$.
\end{df}

Direct computation shows that $\star_G$ is associative.
As we will see later, the product structure of $\fA(\Ocal)$ comes from~$\star_{G^{\rm C}}$. 

One can also define automorphisms (of underlying vector spaces) by
\begin{equation}\label{def of alpha}
\alpha_G(F) = e^{\frac{\hbar}{2} \partial_G} F.
\end{equation}
(When $G_2 - G_1$ is symmetric, $\alpha_{G_2 - G_1}$ determines an isomorphism of algebras from the $\star_{G_1}$ product to the $\star_{G_2}$ product.)

\subsection{The time-slice axiom and the algebra structures}
\label{subsec: time slice}

A dissatisfying aspect of the comparison results is that they involve forgetful functors:
it seems like we ignore the crucial Poisson, respectively associative, algebra structures,
although the constructions (e.g., with propagators) certainly involved them.
As discussed, these algebraic structures play a crucial role in physics and hence appear in the axioms of AQFT,
but they are not built into the CG construction.
It is natural to ask how to resolve this tension.

We provide two perspectives that we feel clarify substantially this issue,
one rooted in a key maneuver of the FR work and another using results in higher algebra in conjunction with the CG perspective.
Both depend on a prominent and useful feature of these examples:
they satisfy the time-slice axiom.
That is, if $\Sigma$ is a Cauchy hypersurface for nested opens $\Ocal \subset \Ocal'$ in $\Caus(\Mcal)$, 
then $\fA(\Ocal) \to \fA(\Ocal')$ is an isomorphism.
The factorization algebra satisfies a cochain-level analog of this axiom: 
the map $\Acal(\Ocal) \to \Acal(\Ocal')$ is a quasi-isomorphism.

The time-slice property suggests formulating a version of $\fA$ and $\Acal$ living just on a Cauchy hypersurface itself.
We will state a natural comparison result before explaining the idea why one should exist from the CG perspective.

\subsubsection{The result on comparison of algebraic structures}

We now turn to formulating a precise framework for describing how the algebraic structures intertwine.

Let $\Sigma$ be a Cauchy hypersurface of $\Mcal$, 
which inherits a canonical Riemannian metric.
Consider the collection of open balls in $\Sigma$ such that each ball $B$ contains a point $x \in B$ such that the closure of $B$ is contained inside the injectivity radius of $x$. 
This definition plays nicely with inclusion, so let $\cball(\Sigma)$ denote the full subcategory of $\Open(\Sigma)$ for these opens.\footnote{Each ball is diffeomorphic to an ordinary ball in Euclidean space via the exponential map at $x$, 
and thus no funny topological business appears. 
The spirit of \cite{spass1} is at work here. 
Their slightly larger category of ``Cauchy balls'' would work equally well for our purposes.}
Each ball has an associated {\em diamond} $D_\Mcal(B) \in \Caus(\Mcal)$, 
which is the union $D^+_\Mcal(B) \cup  D^-_\Mcal(B)$, 
where $D^{\pm}_\Mcal(B)$ consists of every point $p$ in the future/past of $B$ such that every inextendible timelike past/future curve through $p$ passes through $B$.
This construction determines a functor $D_\Mcal: \cball(\Sigma) \to \Caus(\Mcal)$, 
and hence we obtain the following construction.

\begin{df}
Let $\fA|_\Sigma: \cball (\Sigma) \to \Alg(\Nuc)$ denote the composite functor $\fA \circ D_\Mcal$.
\end{df}

Instead of using the diamonds, we could set 
\[
\fA|_\Sigma(B) = \lim_{\substack{\Ocal \supset B \\ \text{is Cauchy}}} \fA(\Ocal),
\]
taking the limit over opens $\Ocal \in \Caus(\Mcal)$ for which $B$ is a Cauchy hypersurface.
The time-slice axiom then ensures that this limit construction agrees with the diamond construction above. 
It was shown in \cite{Chilian} that this algebra is naturally isomorphic to the algebra obtained by quantizing the Cauchy data.

Likewise, we provide a version of $\Acal|_\Sigma$.
One could use a limit construction, but we prefer the concrete approach. 

\begin{df}
Let $\Acal|_\Sigma : \cball(\Sigma) \to \Ch(\Nuch)$ denote the functor that assigns to $B$, 
the BD algebra~$\Acal(D_\Mcal(B))$.
\end{df}

We thus obtain a nice comparison statement.

\begin{thm}\label{Comparison:alg}
Let $\Sigma$ be a Cauchy hypersurface of $\Mcal$.
The functor $H^0(\Acal|_\Sigma)$ can be lifted to an algebra object $H^0(\Acal|_\Sigma)^{Alg}$ in the category $\FA(\Sigma, \Nuch)$.
Moreover, the functor $H^0( \iota^q|_\Sigma)$ of Theorem \ref{thm: quantum cf} lifts to a natural isomorphism
\[
H^0 (\iota^q|_\Sigma)^{Alg}: H^0(\Acal|_\Sigma)^{Alg} \overset{\cong}{\Rightarrow} \fA|_\Sigma
\]
of functors to algebras.
\end{thm}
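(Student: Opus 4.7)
The plan is to extract an associative product on $H^0\Acal(\widetilde{U})$ by using the time-slice axiom to ``spread'' observables along the direction transverse to $\Sigma$, and then apply the factorization product to representatives supported in causally ordered regions. The key input is Theorem \ref{thm: quantum cf}, which identifies the factorization product on $\Acal$ with the time-ordered product $\T$ on $\fA$; combined with the identity $F_1 \T F_2 = F_1 \star_{G^{\mathrm C}} F_2$ whenever $F_1$ lies in the causal future of $F_2$, this turns the (symmetric) factorization product into the (asymmetric) $\star$-product of $\fA$.

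For $U \in \Open_r(\Sigma)$, I would choose Cauchy hypersurfaces $\Sigma_-, \Sigma_+$ of $\widetilde{U}$ strictly to the past and future of $\Sigma$, respectively, and let $\widetilde{U}^-$ be the portion of $\widetilde{U}$ in the causal past of $\Sigma_-$ and $\widetilde{U}^+$ the portion in the causal future of $\Sigma_+$. Each is causally convex, and $\widetilde{U}$ is a Cauchy neighborhood of either, so the time-slice axiom yields quasi-isomorphisms $\Acal(\widetilde{U}^\pm) \xrightarrow{\simeq} \Acal(\widetilde{U})$. Since $\widetilde{U}^-$ and $\widetilde{U}^+$ are disjoint subsets of $\widetilde{U}$, the factorization structure of $\Acal$ gives a map
\[
m^{\rm fact}\colon \Acal(\widetilde{U}^+) \otimes \Acal(\widetilde{U}^-) \to \Acal(\widetilde{U}),
\]
and on $H^0$ I would define $[a]\star^{\rm CG}[b]$ as the image of $\tilde{a}\otimes\tilde{b}$ under $H^0 m^{\rm fact}$, where $\tilde{a}$ and $\tilde{b}$ are lifts of $[a]$ and $[b]$ through the time-slice isomorphisms into $H^0\Acal(\widetilde{U}^+)$ and $H^0\Acal(\widetilde{U}^-)$. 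Independence of the choices of $\Sigma_\pm$ and of the lifts follows from functoriality of the time-slice isomorphisms, and associativity is obtained by inserting a third separating hypersurface and comparing the two ways of bracketing three factors.

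For disjoint $U_1, U_2 \subset \Sigma$ contained in a common $U$, sufficiently thin tubular neighborhoods $V_i \subset \widetilde{\Sigma}$ of the $U_i$ can be arranged to be spacelike-separated in $\Mcal$, since $\Sigma$ is spacelike and, by passing to an exhausting system of sub-opens, one can assume the closures of $U_i$ are disjoint. Time-slice identifies $H^0\Acal(V_i)\cong H^0\Acal(\widetilde{U}_i)$, and under this identification the factorization map into $H^0\Acal(\widetilde{U})$ agrees with $\star^{\rm CG}$, because time-ordering is inert for spacelike-separated supports and so the $\Sigma_\pm$-lifts may be replaced by spacelike ones. This shows the structure maps of $\Acal|_\Sigma$ are algebra homomorphisms, producing the lift $H^0(\Acal|_\Sigma^{\rm Alg})$. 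For the comparison, Theorem \ref{thm: quantum cf} provides on each $\widetilde{U}$ the isomorphism $\alpha_{\partial_{G^{\rm D}}}$ intertwining $m^{\rm fact}$ with $\T$, and by construction $\star^{\rm CG}$ uses causally ordered supports on which $\T$ agrees with $\star_{G^{\mathrm C}}$; hence $H^0\iota^q$ takes $\star^{\rm CG}$ to the $\star$-product of $\fA|_\Sigma$, and naturality in $U$ is inherited from that of $\iota^q$. The main obstacle I anticipate is the well-definedness of $\star^{\rm CG}$, which reduces to a careful chase using that the time-slice structure maps are quasi-isomorphisms (guaranteed by the Green-hyperbolicity of $P$ via Proposition \ref{CGconstruction}) together with the associativity and symmetry of the factorization product; once that is in hand, the identification with the $\star$-product is essentially bookkeeping with the propagator identity relating $\T$, $\star_{G^{\mathrm C}}$, and $\alpha_{\partial_{G^{\rm D}}}$.
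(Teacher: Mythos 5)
Your proposal is correct and follows essentially the same route as the paper: displace one (or both) of the observables into a causally ordered position using the time-slice axiom, apply the factorization product (which, under $\alpha_{iG^{\mathrm D}}$, is the time-ordered product) to the now-disjoint supports, and identify the result with $\star_{G^{\mathrm C}}$ up to exact terms. The only real difference is that the paper realizes the ``lift through the time-slice isomorphism'' concretely on generators via $\beta_+(\Ocal_f)=\Ocal_{f-P\chi G^{\rm R}f}$ and then verifies the identification with $\star$ by an explicit propagator computation showing $m_T({\Ocal_f}_+,\Ocal_g)-\Ocal_f\star\Ocal_g$ is $\delta_S$-exact, whereas you defer this to bookkeeping with the propagator identities.
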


\begin{rem}
Note that we use the superscript $Alg$ to indicate that we have a new functor that factors through algebras.
In particular, we have $\fv \circ H^0(\Acal|_\Sigma)^{Alg} = H^0(\Acal|_\Sigma)$.
\end{rem}

We prove this result in Section \ref{Proofs}, after we spell out the explicit constructions of our models.
The argument does something more refined: we show that the factorization product agrees with the star product up to exact terms.
In other words, we implicitly lift $\Acal|_\Sigma$ to a homotopy associative algebra object in $\FA(\Sigma, \Ch(\Nuch))$.
We refrain, however, from spelling out a full homotopy-coherent algebra structure (e.g., $A_\infty$ structure).

There is an obvious classical analogue to this result.
The associative algebra $H^0(\Acal|_\Sigma)^{Alg}$ is naturally filtered by powers of $\hbar$, 
and its associated graded algebra is isomorphic to the commutative algebra $H^0(\Pcal|_\Sigma)[[\hbar]]$.
Hence, the commutative algebra $\fc \circ H^0(\Pcal|_\Sigma)$ acquires an \emph{unshifted} Poisson bracket, 
by taking the $\hbar$-component of the commutator of the associative algebra.

\begin{cor}
There is a functor 
\[
H^0(\Pcal|_\Sigma)^{Pois}: \cball(\Sigma) \to \PAlg(\Nuc)
\]  
by using the Poisson bracket induced on $\fc \circ H^0(\Pcal|_\Sigma)$ since it is the associated graded of  $H^0(\Acal|_\Sigma)^{Alg}$.
The functor $H^0( \iota^{cl}|_\Sigma)$ of Theorem \ref{thm: classical cf} lifts to a natural isomorphism
\[
H^0 (\iota^{cl}|_\Sigma)^{Pois}: H^0({\Pcal}|_\Sigma)^{Pois} \overset{\cong}{\Rightarrow} \Pfrak|_\Sigma
\]
of functors to Poisson algebras.
\end{cor}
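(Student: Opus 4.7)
The plan is to extract the corollary as the classical limit of Theorem~\ref{Comparison:alg}. By that theorem, $H^0(\Acal|_\Sigma)$ lifts to an algebra-valued functor $H^0(\Acal|_\Sigma)^{Alg}: \Open_r(\Sigma) \to \Alg(\Nuch)$. Each value is flat over $\CC[[\hbar]]$ (this descends from the BD-algebra flatness of $\Acal(\widetilde U)$, since the relevant cohomology is concentrated in degree zero), and its reduction modulo $\hbar$ is the commutative algebra $\fc \circ H^0(\Pcal|_\Sigma)(U)$ by the dequantization property of BD algebras recalled earlier. First I would apply the standard associated-graded construction of a Poisson bracket on the classical limit of a flat formal deformation: set $\Poi{a}{b}$ to be the mod-$\hbar$ reduction of $\hbar^{-1}[\tilde a,\tilde b]$, for any lifts $\tilde a,\tilde b \in H^0(\Acal|_\Sigma)(U)$ of $a,b \in H^0(\Pcal|_\Sigma)(U)$. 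Independence of lift, skew-symmetry, the Jacobi identity, and the Leibniz rule are all formal consequences of associativity, commutativity modulo $\hbar$, and flatness, and the restriction maps preserve this bracket because they are associative-algebra maps. This produces the functor $H^0(\Pcal|_\Sigma)^{Pois}$, and the same construction applied after topological completion yields $H^0(\overline\Pcal|_\Sigma)^{Pois}$.

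Next I would transfer this Poisson structure across the isomorphism supplied by Theorem~\ref{Comparison:alg}. The map $H^0 \iota^q|_\Sigma^{Alg}$ is a natural isomorphism of associative $\CC[[\hbar]]$-algebras from $H^0(\overline\Acal|_\Sigma)^{Alg}$ to $\fA|_\Sigma$. Since the associated-graded Poisson bracket is a functorial invariant of a flat associative deformation, any such isomorphism automatically induces a Poisson isomorphism on the mod-$\hbar$ quotients. On the right-hand side, the quantization axiom for $\fA$ (which requires $-\tfrac{i}{\hbar}[.,.]_\Ocal$ to reduce modulo $\hbar$ to the Peierls bracket $\Poi{.}{.}_\Ocal$) identifies the induced bracket on $\fA|_\Sigma /\hbar \cong \fc \circ \Pfrak|_\Sigma$ as the Peierls bracket of $\Pfrak|_\Sigma$, up to the conventional constant $-i$. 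Absorbing that constant into the definition of the bracket on the CG side if necessary, we obtain a natural isomorphism of Poisson algebras.

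The last step is to match this Poisson isomorphism with $H^0 \iota^{cl}|_\Sigma$ from Theorem~\ref{thm: classical cf}, i.e.\ to check that the underlying commutative-algebra isomorphism is the same one constructed classically. For this I would invoke the compatibility clause in Theorem~\ref{thm: quantum cf}: modulo $\hbar$, the quantum comparison $\iota^q$ coincides with the classical comparison $\iota^{cl}$. Restricting to $\Caus(\Mcal)$ and to the tubular neighborhood $\widetilde\Sigma$, and passing to $H^0$, the mod-$\hbar$ reduction of $H^0 \iota^q|_\Sigma^{Alg}$ is precisely $H^0 \iota^{cl}|_\Sigma$ as a morphism of commutative algebras. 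The Poisson upgrade constructed in the previous paragraph is therefore $H^0 \iota^{cl}|_\Sigma^{Pois}$, as claimed.

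The main obstacle I anticipate is bookkeeping rather than substance: verifying flatness of the cohomology-level algebra over $\CC[[\hbar]]$, so that the commutator-bracket formula is well-defined, and pinning down conventions so that the factors of $i$ and $\hbar$ appearing in the BD differential $\d_\Pcal + \hbar\triangle$ on the CG side line up with the signs in the pAQFT quantization axiom on the FR side. No new analytic input is required beyond Theorem~\ref{Comparison:alg}; the argument is a careful application of the classical-limit formalism for flat associative deformations on top of the already-proved quantum comparison.
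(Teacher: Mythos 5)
Your proposal is correct and follows essentially the same route as the paper: the corollary is deduced from Theorem~\ref{Comparison:alg} by the standard associated-graded construction (the bracket is the mod-$\hbar$ reduction of $\hbar^{-1}$ times the commutator of lifts), the Poisson structure is transported across the associative isomorphism $H^0\iota^q|_\Sigma^{Alg}$, the quantization axiom identifies the resulting bracket on the FR side with the Peierls bracket, and the ``modulo $\hbar$'' clause of Theorem~\ref{thm: quantum cf} pins the underlying commutative isomorphism down as $H^0\iota^{cl}|_\Sigma$. Your extra care about flatness over $\CC[[\hbar]]$ and the factor of $-i$ is consistent with the paper's conventions and does not change the argument.
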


A version of this statement at the cochain-level, for $\Pcal$, would also be appealing.
We now turn to explaining a version that relies on homotopical algebra,
but in Section \ref{subsec: shifting} we use formulas to explain how the Peierls bracket follows from the BV bracket.

\subsubsection{The argument via higher abstract nonsense}

We wish to explain why $\Pcal$ and $\Acal$, when restricted to a Cauchy hypersurface, obtain Poisson and associative structures, respectively.
\emph{A priori} they have a shifted Poisson and BD structure.
How could this transmutation of algebraic structure occur?

The key is a pair of interesting results from higher algebra that will relate certain factorization algebras to associative and Poisson algebras. 
We state the results before extracting the consequence relevant to us.

Let $E_1$ denote the operad of little intervals.
Concretely, an $E_1$ algebra $A \in \Alg_{E_1}(\Ch)$ is a homotopy-associative algebra;
in particular, every $E_1$ algebra is weakly equivalent to some dg algebra.
The first result, due to Lurie \cite{LurieHA}, says that there is an equivalence of $\infty$-categories
\[
\Alg_{E_1}(\Ch) \simeq \FA^{lc}(\RR,\Ch),
\] 
where the superscript $lc$ means we restrict to \emph{locally constant} factorization algebras:
a factorization algebra $\Fcal$ on $\RR$ is locally constant if the map $\Fcal(I) \to \Fcal(I')$ is a quasi-isomorphism for every pair of nested intervals $I \hookrightarrow I'$.
Lurie's result says that a locally constant factorization algebra on $\RR$ encodes a homotopy-associative algebra and \emph{vice versa}.

The second result explains how to relate different kinds of shifted Poisson algebras.
Let $P_n$ denote the operad encoding $(1-n)$-shifted Poisson brackets,
so that $P_1$ algebras are the usual Poisson algebras (in a homotopy-coherent sense).

\begin{thm}[Poisson additivity, \cite{Saf}]
There is an equivalence of $\infty$-categories
\[
\Alg_{E_1}(\Alg_{Pois_n}(\Ch)) \simeq \Alg_{Pois_{n+1}}(\Ch).
\]
\end{thm}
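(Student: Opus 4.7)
The plan is to establish this equivalence at the level of operads: construct a weak equivalence $E_1 \otimes^h P_n \simeq P_{n+1}$, where $\otimes^h$ denotes the derived Boardman--Vogt tensor product of dg operads. Given such an equivalence, the theorem follows formally, since algebras over $E_1 \otimes^h P_n$ are the same as $E_1$-algebras in $P_n$-algebras with respect to the natural symmetric monoidal structure on $\Alg_{P_n}(\Ch)$ in which the product $A \otimes B$ carries the componentwise commutative multiplication and the Leibniz-extended bracket $\{a \otimes b, a' \otimes b'\} = \{a,a'\} \otimes b b' \pm a a' \otimes \{b,b'\}$.

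To construct the operad equivalence, I would argue by Eckmann--Hilton: an $E_1 \otimes^h P_n$-algebra $A$ carries an $E_1$-product $\ast$ and the commutative product $m$ of the $P_n$-structure, and these two binary operations must satisfy an interchange law, which coherently identifies $\ast$ with $m$ up to homotopy. The homotopy witnessing this identification is nontrivial data, and its class produces a binary operation of degree $-n$ which, together with $m$, provides the candidate $P_{n+1}$-structure; the original $P_n$-bracket $\{-,-\}_n$ of degree $1-n$ is absorbed into the higher coherences, appearing as a shift of the new bracket. Conversely, a $P_{n+1}$-algebra $(A, m, \{-,-\}_{n+1})$ yields an $E_1$-in-$P_n$-algebra by a Rees-type deformation of $m$ along $\{-,-\}_{n+1}$, viewed as a formal parameter of shifted degree; verifying Leibniz and shifted Jacobi on both sides then reduces to unwinding operadic coherence data.

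The main obstacle is making the Eckmann--Hilton deloop rigorous at the $\infty$-operadic level, namely extracting a bona fide $P_{n+1}$-algebra structure with full coherence from the higher homotopies in an $E_1$-algebra in $P_n$-algebras. Two routes make this precise. For $n \geq 2$, Kontsevich formality identifies $P_n$-algebras with $E_n$-algebras in characteristic zero, so Dunn--Lurie additivity $\Alg_{E_1}(\Alg_{E_n}) \simeq \Alg_{E_{n+1}}$ transports via formality of $E_{n+1}$ to yield the Poisson additivity statement. For $n = 1$, or uniformly in $n$, one instead exploits the Koszul self-duality of $P_n$ up to shift --- which interchanges the commutative and Lie parts --- to transport additivity for $E_n$ through bar--cobar constructions. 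Both routes require care with cofibrant resolutions of operads and with the specific class of weak equivalences used, to ensure the resulting equivalence is intrinsic and independent of auxiliary choices.
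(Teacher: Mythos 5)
First, note that the paper does not prove this statement at all: it is imported wholesale from \cite{Saf}, so there is no internal argument to compare yours against. Judged on its own terms, your sketch is a reasonable research outline but has gaps at exactly the points where the theorem is hard. The central one is the opening reduction: you assert that once one has an equivalence $E_1 \otimes^h P_n \simeq P_{n+1}$ of operads, ``the theorem follows formally'' because algebras over a Boardman--Vogt tensor product are algebras-in-algebras. That universal property holds $1$-categorically for the underived tensor product, but it is notoriously \emph{not} homotopy invariant: the derived Boardman--Vogt tensor product does not obviously compute the $\infty$-categorical tensor product of $\infty$-operads, and even Dunn--Lurie additivity $\Alg_{E_1}(\Alg_{E_n}) \simeq \Alg_{E_{n+1}}$ is proven in \cite{LurieHA} by entirely different ($\infty$-operadic) means precisely because the point-set tensor product is badly behaved. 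So the step you treat as formal is the crux, and the operad-level equivalence you propose to construct would not by itself yield the statement.

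The two concrete routes you offer also do not close the gap. The formality route ($P_n \simeq E_n$ in characteristic zero, then transport Dunn--Lurie additivity) requires $n \geq 2$, is non-canonical (it depends on a choice of associator, so one must check the resulting equivalence is independent of that choice), and --- most importantly for this paper --- excludes the case $n = 0$, which is the only case actually invoked here (passing from the $1$-shifted bracket on $q_*\Pcal$ to the unshifted Peierls bracket). Your Eckmann--Hilton discussion is morally on track but imprecise: the interchange homotopy identifying $\ast$ with $m$ has degree $-1$, not $-n$; the degree $-n$ bracket of $P_{n+1}$ must be assembled by combining that degree $-1$ datum with the original degree $1-n$ bracket (degree count $(1-n)+(-1) = -n$), rather than the original bracket being ``absorbed into higher coherences.'' Extracting a fully coherent $P_{n+1}$-structure from these ingredients is the entire content of the theorem, not a verification one can defer to ``unwinding operadic coherence data.'' Your final suggestion --- exploiting Koszul self-duality of $P_n$ up to shift --- is the closest to Safronov's actual strategy, which works uniformly in $n$ via Koszul duality for the $P_n$-operads and a careful point-set analysis of one-parameter (brace-type) structures; but as written it is a pointer to a method rather than an argument.
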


For $n = 0$, these results combine to say that a locally constant factorization algebra with a 1-shifted Poisson structure determines a homotopy-coherent version of an 0-shifted Poisson algebra.
Now consider the map $q: \Mcal \to \RR$ by taking the leaf space with respect to the foliation by Cauchy surfaces.
The pushforward factorization algebra $q_* \Pcal$ has a 1-shifted Poisson structure but it is also locally constant, 
since the solutions to the equation of motion is a locally constant sheaf in terms of the ``time'' parameter $\RR$.
Hence, by general principles, we know that $q_* \Pcal$ determines a 0-shifted Poisson algebra.

In this case, the homotopy-Poisson algebra must be strict at the level of cohomology, 
since the cohomology $H^*\Pcal$ is concentrated in degree 0.
This strict Poisson structure agrees with the Poisson structure on $\Pfrak$,
as we will see.

At the quantum level, things are analogous but simpler.
The pushforward factorization algebra $q_* \Acal$ is also locally constant
and hence determines a homotopy-associative algebra.
Since the cohomology $H^*\Acal$ is concentrated in degree 0,
it equips $H^0 \Acal$ with a strict associative structure.
One can see it agrees with canonical quantization
by a modest modification of arguments from Section 4.4 of \cite{CoGw}.
Thus, it agrees the associative structure on $\fA$.
Hence, by keeping track of the $\hbar$-filtration, 
we deduce that we obtain a correspondence between the Poisson algebra structures.

Our proofs of the comparison theorems take a different tack.
Following Section 4.6 of \cite{CoGw}, 
we exhibit natural Poisson and associative algebra structures by explicit formulas involving the propagators.
These match on the nose with the time-ordered product,
which gives us a direct relation with the star product of $\fA$.
Hence, in the quantum case, we see directly that these agree with the associative algebra structures coming from the abstract machinery described above.
By keeping track of the $\hbar$-filtration, we deduce that we obtain a correspondence between the Poisson algebra structures.

\begin{rem}
At the core of these identifications is a relationship between the standard deformation quantization of symplectic vector spaces and the standard BV quantization of free theories,
which we exhibited here via explicit formulas.
Work-in-progress of the first author with Rune Haugseng suggests a general explanation via higher abstract nonsense.
In \cite{GH}, they constructed a functor of linear BV quantization on dg vector spaces with a 1-shifted, linear Poisson bracket.
Loosely speaking, one finds that additivity intertwines this linear BV quantization with the usual Weyl quantization of ordinary Poisson vector spaces:
namely, taking $E_1$ algebras on the domain and codomain of linear BV quantizations yields the dg version of standard deformation quantization.
\end{rem}

\section{Constructing the CG model for the free scalar field}\label{section:CG:construction}

After all that formalism, we turn in a concrete direction 
and sketch the construction of free field theories in the CG formalism. 
We give a brief treatment here as this example is treated at length in Sections 4.2 and 4.3 of~\cite{CoGw} for the case of a Riemannian manifolds.
As we shall see, the constructions apply verbatim to Lorentzian manifolds.

Let $\Mcal = (M,g)$ denote a Lorentzian manifold.
Lazily, we write $\d x$ for the associated volume form on $M$.
We will consider the case $(\RR, dx)$ as a running example.

\subsection{The classical model}

To start, consider the classical theory.
The equation of motion is $P\phi = 0$.
The running example is the free scalar field, with $\Box \phi + m^2 \phi = 0$ and $\phi$ a smooth function on $M$.
The space of distributional solutions $V\subset \Dcal'(M)$ consists of ``waves'', 
and let $V^*$ denote the continuous linear dual.
The natural algebra of observables---of a purely algebraic flavor---is $\Sym_{alg}(V^*)$,
the polynomial functions on $V$.
(Such functions should be contained in more sophisticated choices of observables, and indeed are often a dense subalgebra.)
In the BV framework, one replaces this commutative algebra by a commutative \emph{dg} algebra that resolves it and that also remembers the larger space of fields.

\begin{exa}
For the free scalar field on the real line, the space of solutions is a two-dimensional vector space $V$ spanned by $\{e^{\pm imx}\}$.
Here $\Sym(V^*) \cong \CC[p,q]$, a polynomial algebra with two generators.
These generators can be identified with ``position'' and ``momentum'' at $x = 0$,
since the value of a function and its derivative at one point determine a solution of the equation.
\end{exa}

In constructing this resolution, 
one eventually has to make some choices about functional analysis.
We will begin by avoiding any analysis and construct a purely algebraic version, in order to exhibit the structure of the BV approach,
but then we will turn to a functional-analytic completion convenient for free theories.
(See Section 3.5 or Appendix B of \cite{CoGw} for a seemingly interminable discussion of such functional analysis issues.)

For free theories, it is sufficient and convenient to work with smeared or smoothed observables.
Thus, for instance, each {\em linear} observable $\Orm_f$ is specified by a compactly supported smooth function $f \in \Dcal(M)$, where 
\[
\Orm_f(\phi) = \int_M \phi(x) f(x)  \d x.
\]
In other words, we will let $\Dcal(M)$ provide the linear observables, 
rather than the larger space of compactly supported distributions.
These then generate a commutative algebra of ``polynomial functions on the scalar fields'':
\[
\Sym_{alg}(\Dcal(M)) = \bigoplus_{n \geq 0} (\Dcal(M)^{\otimes_{alg} n})_{S_n}.
\]
Note here that $\otimes_{alg}$ simply means the algebraic tensor product;
we will introduce a convenient completion soon.

There is manifestly a surjection $\Sym_{alg}(\Dcal(M)) \to \Sym_{alg}(V^*)$ 
by restricting a function on all fields to a function on fields that satisfy the equation of motion.
We now extend this surjection to a resolution $\widetilde{PV}$.
(It might help some readers to know that we are going to write down the Koszul resolution for a linear equation, which in this case are the equations of motion.)

Some notation is helpful here: we use $V[1]$ to denote an ungraded vector space $V$ placed in cohomological degree -1 (i.e., we shift down by one), and
when we write $\Sym^n(V[1])$, we use the Koszul rule of signs, so that this vector space is naturally isomorphic to $(\Lambda^n V)[n]$.
Thus, we can write succinctly
\[
\widetilde{PV} = \Sym_{alg}(\Dcal(M) \oplus \Dcal(M)[1]),
\]
so that for $-k \leq 0$,
\[
\widetilde{PV}^{-k} \cong \Sym_{alg}(\Dcal(M)) \otimes_{alg} \Lambda_{alg}^{k} (\Dcal(M)),
\]
and $\widetilde{PV}^k = 0$ for $k > 0$.
This graded vector space is a version of ``polynomial polyvector fields on the space of scalar fields.''
By construction, $\widetilde{PV}$ is a graded commutative algebra.

We now describe its differential $\d$,
which encodes the equations of motion.
Given an element $f_1\cdots f_n \otimes g_1 \wedge \cdots \wedge g_m$ of 
$\Sym_{alg}^n(\Dcal(M)) \otimes_{alg} \Lambda_{alg}^m (\Dcal(M))$,
which has cohomological degree $-m$, 
we define
\[
\d( f_1\cdots f_n \otimes g_1 \wedge \cdots \wedge g_m ) = \sum_{i = 1}^m (-1)^{i-1} (P g_i) f_1\cdots f_n \otimes g_1 \wedge \cdots \widehat{g_i} \cdots \wedge g_m,
\]
where $\widehat{g_i}$ indicates that this term is removed from the wedge product. 
One can check that this differential $\d$ is a derivation,
so that we have constructed a commutative dg algebra.

\begin{rem}\label{rem:koszul}
It is an illuminating exercise to show that $(\widetilde{PV},\d)$ provides a cochain complex resolving the polynomial functions $\Sym(V^*)$ on the space of solutions~$V$. 
(It helps to bear in mind that we have written down a Koszul resolution for a linear equation,
albeit on an infinite-dimensional vector space.)
This resolution has the special property that polynomial functions on all scalar fields is given by the truncation consisting of the degree 0 component.
Hence, the commutative dg algebra also remembers, in this way, the ambient space of scalar fields.
\end{rem}

\begin{exa}
For the free scalar field on $M = \RR$, we know that we have a quasi-isomorphism
\[
(\Dcal(\RR)[1] \xto{\partial_x^2 + m^2} \Dcal(\RR)) \xto{\simeq} V^* \cong \CC^2
\]
sending a linear observable of degree 0 to its value on solutions to the equation of motion.
(This map is dual to the inclusion of solutions into all fields.)
Hence, taking the symmetric algebra on either side of the quasi-isomorphism, we again have a quasi-isomorphism.
The left hand side is precisely~$(\widetilde{PV},\d)$.
\end{exa}

Observe that these definitions make sense for any open subset $U \subset M$.
Thus, there is a kind of ``model of the classical fields'' given by the functor
\[
\begin{tikzcd}
\Fcal_{fields}: & \Open(M) \arrow[r] & \Ch(\Vec) \\
& U \arrow[|->,r] & \Sym_{alg}(\Dcal(U))
\end{tikzcd}
\]
which simply assigns to $U$, the polynomial functions on scalar fields on $U$.
The functoriality with respect to the open $U$ is simple: compactly supported functions extend by zero, and we apply this map to the symmetric algebra as well.
Likewise, there is a kind of ``model for the classical free theory'' given by the functor
\[
\begin{tikzcd}
\Fcal_{theory}: & \Open(M) \arrow[r] & \Ch(\Vec) \\
& U \arrow[|->,r] & (\widetilde{PV}(U),\d)
\end{tikzcd}
\]
which assigns to $U$, a commutative dg algebra resolving the polynomial functions on solutions on $U$ of the equation of motion.

Here is one way to ``complete'' these algebras and make them better behaved in a topological sense.
The key idea is simple: any  compactly supported smooth function $f \in \Dcal_n(U)$ determines an observable that is homogeneous of degree $n$ by the formula
\[
\Orm_f(\phi) = \int_{M^n} \phi(x_1) \cdots \phi(x_n) f(x_1,\ldots,x_n) \d x_1 \ldots \d x_n.
\]
Indeed, there is a dense inclusion
\[
\Sym_{alg}^n(\Dcal(U)) = (\Dcal(U)^{\otimes_{alg} n})_{S_n} \hookrightarrow \Dcal_n(U)_{S_n}.
\]
Note that we quotient out the action of permuting the coordinates because a function $f(x_1,\ldots,x_n)$ and its permutation $f(x_{\sigma(1)},\ldots,x_{\sigma(n)})$ define the same observable.
Similarly, there is a dense inclusion
\[
\Lambda^n_{alg}(\Dcal(U)) = (\Dcal(U)^{\otimes_{alg} n})/\{\text{sign action of }S_n\} \hookrightarrow \Dcal_n(U)/\{\text{sign action of }S_n\}.
\]
Hence, we replace $\widetilde{PV}$ by its completion along these lines:
for $-k \leq 0$, set
\[
PV^{-k}(U) = \bigoplus_{n\geq 0} \Dcal_{n+k}(U)_{S_n \times S_k}
\]
where the symmetric group $S_n$ acts on the first $n$ coordinates as before, but $S_k$ acts on functions depending on the last $k$ coordinates via the \emph{sign} representation,
and $PV^k(U) = 0$ for $k > 0$.
The multiplication on $\widetilde{PV}(U)$ extends naturally to $PV(U)$.
Concretely, one notes that given $f \in \Dcal_m(U)$ and $g \in \Dcal_n(U)$, 
there is a function $f \boxtimes g \in  \Dcal_{m+n}(U)$ with
\[
f \boxtimes g(x_1,\ldots,x_m,y_1,\ldots,y_n) = f(x_1,\ldots,x_m) g(y_1,\ldots,y_n).
\]
This extension is manifestly continuous.

The differential $\d$ on $\widetilde{PV}(U)$ also extends naturally to this completion.
Alternatively, one can note that there is a continuous map
\[
P_{y_1}: \Dcal_{n+k}(U) \to \Dcal_{n+k}(U),
\]
where $(x_1,\ldots,x_n,y_1,\ldots,y_k)$ give coordinates for $U^n \times U^k$,
and this map descends to the quotient by the action of $S_n \times S_k$.

This completion encodes a flavor of polynomial functions on distributional solutions to the equation of motion.

\begin{thm}
\label{painintheass}
Let $U \in \Caus(\Mcal)$ be an open subset of $\Mcal$ that is globally hyperbolic with respect to the induced Lorentzian metric.
For $k \neq 0$, the cohomology vanishes: $H^k(PV(U),\d) = 0$.
For $k=0$, there is a dense inclusion of commutative algebras
\[
\Sym_{alg}(Sol(U)') \to H^0(PV(U),\d)
\]
where
\[
Sol(U) = \{ \phi \in \Dcal'(U) \, : \, P\phi = 0 \}.
\]
\end{thm}

The proof involves a long detour into functional analysis, 
so we banish it to the appendix,
where we introduce some arcane terminology that leads to a sharper version of the theorem as well as a proof.
(Notably we improve the dense inclusion to an isomorphism, but using a symmetric algebra built by completing the bornological tensor product.)

\begin{exa}
In particular, for the free scalar field on $M = \RR$, we have $H^*(PV(\RR),\d) \cong \CC[p,q]$, where $p,q$ are two variables.
\end{exa}

Finally, $PV(U)$ can be equipped with a 1-shifted Poisson bracket. This bracket is straightforward to define.
Consider the natural bilinear pairing of cohomological degree~1,
\[
\{-,-\}: (\Dcal(U) \oplus \Dcal(U)[1])^{\widehat{\otimes} 2} \to \CC,
\]
where
\[
\{f,g\} = \begin{cases} \int_U f(x)g(x) \d x & \text{if } |f| \neq |g| \\ 0 & \text{else} \end{cases}.
\]
It is skew-symmetric in the graded sense, 
and hence we obtain a shifted Poisson bracket on $\widetilde{PV}(U)$ by extending this pairing from linear functions to polynomials via Leibniz's rule.
This construction extends continuously to $PV(U)$.
(It amounts to integrating out along diagonals.)

\begin{rem}
	Note that this bilinear pairing is ill-defined if one replaces compactly supported smooth functions by distributions.
	This issue is a key problem in setting up the BV formalism and begets many of the divergences in perturbation theory.
\end{rem}

To summarize, we give the following definition.

\begin{df}
The \textbf{classical model for the free theory with Green-hyperbolic operator $P$} is the prefactorization algebra $\Pcal$ on $M$ taking values in $\Ch(\Nuc)$ assigning to the open set $U$, the commutative dg algebra
\[
\Pcal(U) = (PV(U),\d),
\] 
equipped with a 1-shifted Poisson bracket~$\{-.-\}$.
\end{df}

It is simply a completion of the functor $\Fcal_{theory}$ defined earlier.
We believe it is actually a factorization algebra, 
but verifying this belief requires understanding homotopy colimits in $\Ch(\Nuc)$,
in particular whether the usual formulas for homotopy colimits of cochain complexes hold in this setting
(see Appendix C.5 of~\cite{CoGw}).

\begin{rem}
Two variations on this approach are needed when dealing with interacting theories.
First, one replaces polynomial functions by formal power series, i.e., $\oplus_{n \geq 0}$ becomes $\prod_{n \geq 0}$.
Second, one cannot restrict to smoothed observables but should allow distributional observables, i.e., $\Dcal$ is replaced by $\Ecal'$ (the space of compactly supported distributions).
In the setting of elliptic differential operators (or elliptic complexes, more generally),
the commutative dg algebras with smoothed or distributional algebras are (continuously) quasi-isomorphic.
Moreover, the differential is still determined by the equations of motion but is more complicated as it has terms changing the homogeneity of observables.
In particular, the smoothed and distributional algebras cease to be quasi-isomorphic in the interacting case.
\end{rem}

\subsection{The quantum model}

We now turn to BV quantization, which modifies the differential by adding the BV Laplacian.
This extra term is related to a shifted Poisson structure on $PV(U)$.

We now define the BV Laplacian $\triangle$ similarly.
We require it to satisfy the equation
\[
\triangle(a \cdot b) = \triangle(a) \cdot b + (-1)^a a \cdot \triangle(b) + \{a,b\}
\]
for any $a,b \in \widetilde{PV}(U)$.
Hence, once we assert that $\triangle$ annihilates any constant or linear terms,
we determine $\triangle$ iteratively.
For instance,
for a quadratic term $fg \in \Sym^2(\Dcal(U) \oplus \Dcal(U)[1])$,
we see
\[
\triangle(fg) = \{f,g\}.
\]
We then extend $\triangle$ to $PV(U)$ in the natural, continuous way.
For instance, given a quadratic term in $PV^{-1}(U)$, namely some
\[
F \in \Dcal(U\times U),
\]
we see
\[
\triangle(F) = \int_{x \in U} F(x,x) \d x.
\]
As with the bracket, the BV Laplacian amounts to integrating along diagonals.

\begin{df}
The \textbf{quantum model for the free theory with Green-hyperbolic operator $P$} is the prefactorization algebra $\Acal$ on $\RR$ taking values in $\Ch(\Nuch)$ that assigns  the BD algebra
\[
\Acal(U) = (PV(U)[[\hbar]],\d-i\hbar \triangle,\{-,-\})
\]
to the open set~$U$.
\end{df}

We now examine the following useful result.

\begin{lemma}
For each open $U$ in $M$, there is a natural isomorphism of graded vector spaces
\begin{align*}
H^* \Acal(U) 
&= H^*(PV(U)[[\hbar]],\d - i\hbar \triangle) \\
&\cong H^*(PV(U))[[\hbar]] \\
&= H^* \Pcal(U)[[\hbar]],
\end{align*}
and all nonzero cohomology groups vanish.
\end{lemma}

In other words, the cohomology of the classical and quantum observables agree up to adjoining $\hbar$ to the classical ones.
Note that this isomorphism does \emph{not} respect the commutative algebra structure on the cohomology of the classical observables.
Indeed, the differential of a BD algebra is not a derivation with respect to the commutative product, and hence the commutative product does not descend to the cohomology.

\begin{proof}
The filtration by powers of $\hbar$ determines a spectral sequence that computes the cohomology of the quantum observables.
The first page is just the cohomology of the classical observables. 
Since that is totally concentrated in degree 0, the spectral sequence collapses.
\end{proof}

\section{Constructing the pAQFT model for a free field theory}

In this section we describe the pAQFT construction for the classical and quantum models for a free  field theory and prove Proposition~\ref{FRconstruction}.
It is a succinct review of a more extensive treatment available~\cite{LesHouches,Book}.

The construction itself explicitly produces dg algebras;
to recover algebras, one takes the cohomology, which happens to be concentrated in degree zero.
Thus, before going into the details, we proffer a dg version of AQFT, as defined in section~\ref{sec:dgpaqft}.

\subsection{Constructing the dg models}

In this section we spell out the construction of a semistrict dg model of a free field theory. This is mainly a review of \cite{FR3}, but with more detail and recast in notation compatible with the CG framework.
The goal is to provide a kind of Koszul resolution of the algebra of functions on the space of solutions to the equations of motion.
We need to pin down some functional analytic choices, along with the homological algebra,
before we articulate the central construction, given in Definition~\ref{df:regclassobs}.

\subsubsection{Functionals}

Regular functionals on the configuration space $\Ecal$ were defined in Definition~\ref{df: regular}. We will use these to model classical observables.

Functionals that are both regular and linear are given as pairings with smooth compactly supported densities, i.e., are of the form
\[
\Orm_f(\phi) = \int_M \phi(x) f(x)\,,
\]
where $f\in\Dcal(M)$.

\begin{df}\label{def:tau}
Let $\tau$ be the locally convex topology on $\F_{\reg}$ generated by the following family of seminorms:
\[
q_{\phi,n,p}(F)\doteq p(F^{(n)}(\phi))
\]
where $B\subset \Ecal$ is bounded and $p$ runs over all the seminorms generating the locally convex topology of $\Dcal_n^{\sst \CC}(M)$.
\end{df}

We will always consider $\F_{\reg}$ together with this topology. It was shown in \cite{FR}[Appendix A] that this topology is nuclear. The idea behind the proof (following \cite{BDF}) is to use the fact that all the spaces $\Dcal_n^{\sst \CC}(M)$, $n\in \NN$ are nuclear and $\tau$ is the initial topology with respect to the evaluation maps $F\mapsto F^{(n)}(\ph)\in \Dcal_n^{\sst \CC}(M)$, $\ph\in\Ecal(M)$, $n\in\NN$. As nuclearity is preserved under projective limits, the result follows.

\subsubsection{Polyvector fields}\label{sec:polyvect}

The basic input for our field theory is $dS$, a 1-form on $\E$ that gives the equations of motion
\be\label{class}
dS(\ph)=0\,.
\ee
For free fields we have $dS(\ph)=P\ph$, so that the equations of motion are linear. 
In particular, for the free scalar field:
\[
P=-(\Box+m^2)\,.
\]
The operator $P$ extends to $\Dcal'(M)\supset \Ecal(M)$ and, as in the CG framework, we are interested in the space $V\subset \Dcal'(M)$ of distributional solutions.

Our goal is to construct a cohomological resolution of $\Sym(V')$,
and we will produce a Koszul-type resolution.
Since $dS$ is a kind of 1-form, 
this resolution is built using the algebra of regular polyvector fields on~$\Ecal(M)$, 
with the differential determined by the equation of motion.
Our focus at the moment is on the algebra; 
we postpone discussion of the differential until the paragraphs around equation~\eqref{K0}.

For a finite-dimensional manifold $M$, such a Koszul resolution can be understood as arising from the shifted cotangent manifold~$T^*[-1]M$. 
When $M$ is a vector space $V$, $T^*[-1]M$ corresponds to the graded vector space $V \oplus V^*[-1]$.
Its ring of functions then looks like a graded-symmetric algebra $\Sym(V^* \oplus V[1])$,
whose degree $-n$ component is then $\Sym(V^*) \otimes \Lambda^n(V)$.
In our setting we thus work with the following.

\begin{df}
	Let $X_1$, $X_2$ be in $\Nuc$. The \textbf{space of functions on $X_1\oplus X_2[-1]$} is a graded vector space with degree $-n$ component given by
	$$\Oscr^{-n}(X_1\oplus X_2[-1])=\Ci\Big(X_1, (X_2^{\widehat{\otimes} n})_{S_n}\Big)\,,$$
	where $\Ci$ means ``Bastiani smooth'', as explained at the beginning of this subsection and the sign action of $S_n$ means, effectively anti-symmetrization of the tensor product. Denote
	\[
	\Oscr(X_1\oplus X_2[-1])\doteq\prod_n\Ci\Big(X_1, (X_2^{\widehat{\otimes} n})_{S_n}\Big)[n]\,.
	\]
\end{df}

This definition is not the only approach, because in infinite-dimensional differential geometry, the precise definition of $T^*[1]\Ecal(M)$ depends on what differentiable manifold structure we put on $\Ecal(M)$. 
(Recall the discussion at the beginning of Section~\ref{sec:exp:prod}.) 
One option is to use the locally convex Fr{\'e}chet topology. Another option is to define open neighborhoods in $\Ecal(M)$ as 
\[
U_{\phi,V}\doteq\{\phi+\psi,\psi\in V\}\,,
\]
where $V$ is an open neighborhood in $\Dcal$, equipped with its natural topology. With this choice of topology $T\Ecal=\Ecal\times \Dcal$. 
Physically, this choice means that we allow for variations of field configurations only in the direction of compactly supported configurations. 
Since we want to allow for distributional solution to the equations of motion, however, 
it is useful to enlarge the tangent bundle to the distributional completion $\Ecal\times \Ecal'$. 
Consequently, the cotangent bundle gets restricted to $\Ecal\times \Ecal$. Hence the corresponding restricted odd cotangent bundle is $\underline{T}^*[1]\Ecal=\Ecal \times \Ecal[1]$, which we can view as a cochain complex in concentrated in degrees zero and one.
Hence our focus on field theory guides our choice.

\begin{rem}
	Polyvector fields are elements of $\Oscr(\Ecal(M) \oplus\Ecal(M)[-1])$. In order to define regular polyvector fields, we need to analyze the WF sets of derivatives of $F\in\Oscr(\Ecal(M) \oplus\Ecal(M)[-1])$.
\end{rem}
	
\begin{rem}
	Consider the special case $X_1=\Gamma(M,E_1)$, $X_2=\Gamma(M,E_2)$, where $E_1$, $E_2$ are vector bundles over $M$. It was shown in \cite{Book,BDGR} that $X_2^{\widehat{\otimes} n}\cong \Gamma'(M^n,E_2^{\boxtimes n})_{S_n}$ and that the $k$th functional derivative of $F\in \Oscr^n(E_1\oplus E_2[1])$ at a given point in $X_1$, is an element of $\Gamma'(M^{k+n},E_1^{\boxtimes k}\boxtimes E_2^{\boxtimes n})_{S_k\times S_n}$, symmetric in the first $k$ and antisymmetric in the last $n$ entries.
\end{rem}

\begin{df}
	Let $F\in\Oscr^n(\Ecal(M)\oplus\Ecal(M)[-1])$ be a polyvector field. 
	We say $F$ is \textbf{regular} if $F^{(k)}(\phi)$ has empty WF set (i.e. is smooth). We use $\mathfrak{PV}_{\reg}(\Ocal)$ to denote the space of all regular polyvector fields on $\Ecal(\Ocal)$ where $\Ocal\subset\Mcal$. 
\end{df}
In particular, among all regular  polyvector fields we can distinguish the polynomial ones (analogous to Definition~\ref{df: poly}), which we denote by $\mathfrak{PV}_{\pol}(\Ocal)$.

This construction gives a functor $\mathfrak{PV}_{\reg}$ from $\Caus(\Mcal)$ to $\CAlg(\Ch(\Nuc))$, where the action on morphisms is induced by the pullback.
(We currently have the zero differential on the polyvector fields, but we will introduce a differential depending on $dS$ below.)

Clearly, $\mathfrak{PV}^0_{\reg}=\F_{\reg}$ and $\mathfrak{PV}_\reg$ is a graded commutative algebra by the usual product on functions and the wedge product of polyvector fields.

\begin{rem}
	The topology $\tau$ from Definition \ref{def:tau} has a natural generalization to the locally convex topology on $\Ocal(X_1\oplus X_2[-1])$. We use the following family of seminorms:
	\[
	q_{\phi,n,p,B}(F)\doteq \sup_{\phi \in B \subset X_1} (p(F^{(n)}(\phi)))\,,
	\]
	where we run over bounded subsets $B\subset X_1$ and we run over the seminorms $p$ that generate the locally convex topology of~$\prod_n (X_2^{\widehat{\otimes} n})_{S_n}$. Again, we refer to \cite{FR} for the proof of nuclearity.
\end{rem}

\subsubsection{Poisson structure}

It is crucial that $P$ is a normally hyperbolic operator, so on a globally hyperbolic spacetime it has retarded/advanced Green's functions $G^{\mathrm{R}}$/$G^{\mathrm{A}}$, respectively and other propagators introduced in section~\ref{Propagators}.

Using the ideas of Peierls \cite{Pei} we introduce a Poisson bracket $\Poi{.}{.}_\Ocal$ on $\F_{\reg}(\Ocal)$ by
\be\label{Peierls}
\Poi{F_1}{F_2}(\ph)\doteq \left<F_1^{(1)}(\ph),G_\Ocal^{\mathrm{C}} \left(F_2^{(1)}(\ph)\right)\right>\,,
\ee
where $G_\Ocal^{\mathrm{C}}$ is the causal propagator constructed on $\Ocal\subset \Mcal$, treated as globally hyperbolic spacetime in its own right. Note that from the uniqueness of retarded and advanced Green functions follows that for the morphism $\psi:\Ocal_1\rightarrow \Ocal_2$,
\be\label{natural}
\chi_{\psi(\Ocal_1)}\, G_{\Ocal_2}^{\mathrm{C}}\Big|_{\Dcal(\psi(\Ocal_1))}=G_{\psi(\Ocal_1)}^{\mathrm{C}}\,,
\ee
where $\chi_{\psi(\Ocal_1)}$ is the characteristic function of $\Ocal_1$ and $G_{\Ocal_2}^{\mathrm{C}}$ is treated as a map $\Dcal(\Ocal_2)\rightarrow \Ecal(\Ocal_2)$.

We now extend to the dg setting. 
Note that the space of on-shell regular functionals is the zeroth cohomology of the two-term complex
\be\label{K0}
\begin{array}{c@{\hspace{0,2cm}}c@{\hspace{0,2cm}}c@{\hspace{0,2cm}}c@{\hspace{0,2cm}}c@{\hspace{0,2cm}}c@{\hspace{0,2cm}}c@{\hspace{0,2cm}}c@{\hspace{0,2cm}}c}
	&0&\rightarrow&\mathfrak{PV}^1_\reg&\xrightarrow{\delta_{S}}&\F_{\reg}&\rightarrow& 0
\end{array}\ ,
\ee
where
\[
\delta_{S}X\doteq \iota_{dS}X\,.
\]
The linear map $\delta_{S}$ easily extends to a differential on $\mathfrak{PV}_\reg$ by imposing the (graded) Leibniz rule; 
we can also extend the bracket trivially. Note that $\delta_S$ is compatible with the bracket, since $G^{\rm C}$ is a bi-solution for $P$.

Hence we can lift our notion to the cochain level. We assign   dg Poisson algebras to $\Ocal\in\Caus(\Mcal)$ by keeping track of support by means of (\ref{support}). 

\begin{df}
	The \textbf{dg Poisson algebra of regular classical observables} is
	\[
	\Pfrak_\reg(\Ocal)=(\mathfrak{PV}_\reg(\Ocal),\Poi{.}{.}_\Ocal,{\delta_{S}}_\Ocal)\,.
	\]
	Restricting to polynomial vector fields, we define $$\Pfrak_\pol(\Ocal)=(\mathfrak{PV}_\pol(\Ocal),\Poi{.}{.}_\Ocal,{\delta_{S}}_\Ocal)$$ in completely analogous way.
\end{df}
The following proposition shows that this indeed gives us a functor from $\Caus(\Mcal)$ to $\PAlg^*(\Ch(\Nuc))$.
\begin{prop}
	There is a functor from $\Caus(\Mcal)$ to $\PAlg^*(\Ch(\Nuc))$ that assigns to each $\Ocal\subset\Mcal$, the dg Poisson algebra $\mathfrak{P}_{\reg}(\Ocal)\doteq(\PV_{\reg}(\Ocal),\Poi{.}{.}_\Ocal,{\delta_S}_\Ocal)$,  
	and that assigns to each morphism $\psi:\Ocal_1\rightarrow \Ocal_2$, the Poisson map
	\[
	(\mathfrak{P}_{\reg}\psi F)(\ph)\doteq F(\psi^*\ph)\,,
	\]
	where $\ph\in\Ecal(\Ocal_2)$ and $F\in \PV_{\reg}(\Ocal_1)$.
	This functor $\mathfrak{P}_{\reg}$ is a classical field theory model in the sense of Definition~\ref{dgClassFT}.
\end{prop}

\begin{proof}
	By construction  $(\PV_{\reg}(\Ocal),\Poi{.}{.}_\Ocal,{\delta_S}_\Ocal)$ is a dg Poisson algebra, so it remains to check that $\PV_\reg$ has the correct behavior on the morphisms. Let $\psi:\Ocal_1\rightarrow \Ocal_2$. We have	
	\begin{multline*}
	\Poi{\Pfrak_\reg\psi F_1}{\Pfrak_\reg\psi F_2}_{\Ocal_2}(\ph)=\left<\psi(F_1^{(1)}(\psi^*\ph)),\chi_{\psi(\Ocal_1)} G_{\Ocal_2}^{\mathrm{C}}\circ \psi \left(F_2^{(1)}(\psi^*\ph)\right)\right>\\=\left<\psi (F_1^{(1)}(\psi^*\ph)),G_{\psi(\Ocal_1)}^{\mathrm{C}} \psi\left(F_2^{(1)}(\psi^*\ph)\right)\right>\\
	=\left< F_1^{(1)},G_{\Ocal_1}^{\mathrm{C}} \left(F_2^{(1)}\right)\right>(\psi^*\ph)=(\Pfrak_\reg\psi\Poi{F}{G}_{\Ocal_1})(\ph)\,.
	\end{multline*}
Hence we see that each structure map $\Pfrak_\reg\psi$ is a map of dg Poisson algebras, as desired.
It remains to verify Einstein causality, but this property is immediate from formula (\ref{Peierls}):
the support properties of $G^{\mathrm C}$ ensure that spacelike-separated observables bracket to zero strictly, and not just up to exact terms.
\end{proof}

The analogous result for $\PV_\pol$ follows by the same arguments.

\begin{rem}
	Note that the statement about the existence and uniqueness of retarded and advanced Green functions (needed in the proof of the proposition) is true only on opens that are themselves globally hyperbolic spacetimes (when equipped with the induced metric). Therefore, it is crucial to restrict to $\Caus(\Mcal)$, rather than to consider arbitrary opens.
\end{rem}

If we forget the Poisson algebra structure, we obtain the cohomological (derived) description of the space of classical observables as the functor $\fv\circ\Pfrak_\reg\equiv \fv\Pfrak_\reg$ (or $\fv\Pfrak_\pol$ if we restrict to polynomials).
\begin{df}\label{df:regclassobs}
	The \textbf{space of regular classical observables} is
	\[
	\fv\Pfrak_\reg=(\fv\mathfrak{PV}_\reg,\delta_{S}).
	\]
\end{df}
Going on-shell corresponds to taking the $H^0$ of $\Pfrak_\reg$ (or $\fv\Pfrak_\pol$). We obtain the following:
\begin{df}
The on-shell \textbf{Poisson algebra of regular classical observables} is the quotient
\[
H^0(\mathfrak{P}_{\reg})(\Ocal)=\mathfrak{P}_\reg(\Ocal)/\mathfrak{P}_{\reg,0}(\Ocal)\,,
\]
where $\mathfrak{P}_{\reg,0}(\Ocal)$ is the Poisson ideal generated by the elements of the form
\[
\left<dS(\ph),X(\ph)\right>\equiv \iota_{dS}X\,,
\]
where $X\in\mathfrak{PV}^1_{\reg}$.  
\end{df}

\begin{prop}\label{Class:mod:proof}
The assignment $\Ocal\mapsto H^0(\mathfrak{P}_{\reg})(\Ocal)$ defines a classical on-shell model in the sense of Definition~\ref{OnShell}.
\end{prop}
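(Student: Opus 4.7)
The plan is to verify the four pieces of Definition \ref{OnShell} in turn: that $\mathfrak{P}^{\reg}_S$ lands in $\PAlg^*(\Nuc)^\inj$, that it is functorial on $\Caus(\Mcal)$, that it satisfies Einstein causality, and that it satisfies the time-slice axiom. Only the last point has real content; the others follow from what has already been set up.

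First I would check that $\mathfrak{P}^{\reg}_0(\Ocal)$ is a Poisson ideal inside $\mathfrak{P}^{\reg}(\Ocal)$, so that the quotient makes sense as a Poisson $*$-algebra. It is a commutative ideal by construction (it is generated as such), so the point is to show $\Poi{\iota_{dS}X}{F}_\Ocal \in \mathfrak{P}^{\reg}_0(\Ocal)$ for every regular $F$ and every $X \in \mathfrak{PV}^1_{\reg}$. Using the Peierls formula \eqref{Peierls}, this reduces to the identity $P \circ G^{\mathrm{C}}_\Ocal = 0$ (the causal propagator is a bi-solution), combined with the Leibniz rule for $\Poi{\cdot}{\cdot}$ acting on the product $\langle dS,X\rangle$. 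The $*$-structure descends because $dS$ and $X$ can be taken real.

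Next I would verify functoriality. Given an inclusion $\psi: \Ocal_1 \hookrightarrow \Ocal_2$ in $\Caus(\Mcal)$, the preceding proposition already gives a Poisson $*$-morphism $\Pfrak^{\reg}\psi$. I need only check that $\Pfrak^{\reg}\psi$ sends $\mathfrak{P}^{\reg}_0(\Ocal_1)$ into $\mathfrak{P}^{\reg}_0(\Ocal_2)$, which is immediate because $P$ is a differential operator compatible with pullback of fields along isometric embeddings, so $(\iota_{dS}X)\circ \psi^* = \iota_{dS}(\psi_* X)$. Einstein causality then descends from $\mathfrak{P}^{\reg}$ to the quotient for free: if two representatives Poisson-commute then so do their classes. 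For injectivity of $\mathfrak{P}^{\reg}_S\psi$, the key input is that if $F$ is regular on $\Ocal_2$ and $F\circ\psi^*$ lies in $\mathfrak{P}^{\reg}_0(\Ocal_1)$, then there is an off-shell representative of $F$ that already lies in $\mathfrak{P}^{\reg}_0(\Ocal_2)$; this is proved by an extension argument using a smooth cutoff supported in $\psi(\Ocal_1)$.

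The substantive step is the time-slice axiom: if $\Ncal \subset \Ocal$ contains a Cauchy surface $\Sigma$ of $\Ocal$, the map $\mathfrak{P}^{\reg}_S(\Ncal)\to\mathfrak{P}^{\reg}_S(\Ocal)$ must be an isomorphism. Injectivity was handled in the previous step. For surjectivity, it suffices by multiplicativity and the Leibniz rule to treat a linear observable $O_f$ with $f\in\Dcal(\Ocal)$ and show that modulo $\mathfrak{P}^{\reg}_0(\Ocal)$ it is equivalent to one supported in $\Ncal$. The plan is the standard Cauchy-data argument: pick a smooth time function whose level sets are Cauchy and choose $\chi_+,\chi_-$ with $\chi_++\chi_-=1$ such that $\chi_\pm$ vanish in the past/future of $\Ncal$. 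Setting $g = \chi_+ G^{\mathrm{R}}_\Ocal f + \chi_- G^{\mathrm{A}}_\Ocal f$ (using that $f$ has compact support so both terms are well-defined sections), a direct calculation gives $Pg = f - \tilde f$ with $\tilde f \in \Dcal(\Ncal)$, because commutators $[\chi_\pm, P]$ are supported where $d\chi_\pm \neq 0$, which can be arranged to lie in $\Ncal$. Then $O_f - O_{\tilde f} = \iota_{dS} X$ for $X$ the constant polyvector field with coefficient $g$, so $[O_f]=[O_{\tilde f}]$ in $\mathfrak{P}^{\reg}_S(\Ocal)$, with $O_{\tilde f}$ manifestly coming from $\Ncal$.

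The hard part is arranging the cutoff-propagator argument so that (i) the resulting $g$ genuinely lies in the space where $\iota_{dS}$ is defined (regularity of the resulting polyvector field), and (ii) it extends multiplicatively and continuously to higher-degree observables in $\F_{\reg}$; the latter uses that $\mathfrak{P}^{\reg}_0$ is already a Poisson, and in particular multiplicative, ideal, so the linear case suffices by generation. Once both are in hand, the reverse map $\mathfrak{P}^{\reg}_S(\Ocal)\to\mathfrak{P}^{\reg}_S(\Ncal)$ is unambiguously defined and inverts the natural map, completing the proof.
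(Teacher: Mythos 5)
Your argument is essentially the paper's: the paper also reduces everything to the time-slice axiom and proves it by modifying a linear generator $\Ocal_f$ by an element of the on-shell ideal of the form $\Ocal_{P(\chi G^{\rm R}(\chi f)+(1-\chi)G^{\rm A}((1-\chi)f))}$, i.e.\ by $\iota_{dS}$ of a constant vector field built from cutoff retarded/advanced propagators, exactly as you propose. The one detail that must be corrected is the orientation of your cutoffs: the factor multiplying $G^{\rm R}f$ has to vanish toward the \emph{future} of $\Ncal$ (the paper's $\chi$ equals $1$ on $J^-(\Sigma_-)$ and $0$ on $J^+(\Sigma_+)$), since $G^{\rm R}f$ is supported in $J^+(\supp f)$; with your stated convention ($\chi_+$ vanishing in the past of $\Ncal$), the section $g=\chi_+G^{\rm R}f+\chi_-G^{\rm A}f$ is not compactly supported, so $X\notin\mathfrak{PV}^1_{\reg}$ and the pairing $\langle P\ph,g\rangle$ defining $\iota_{dS}X$ is ill-defined. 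With the pairing of cutoffs and propagators swapped, $g\in\Dcal(M)$, $f-Pg\in\Dcal(\Ncal)$, and your computation via $[P,\chi]$ goes through as in the paper.
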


This result is part of Proposition~\ref{FRconstruction}, namely the classical piece of the model.

\begin{proof}
It remains only to verify the time-slice axiom, which was done in \cite{Dim} and also in \cite{Chilian,ChF08}. Here, for completeness, we provide an argument. 

Let $\Ncal$ be a causally convex neighborhood of a Cauchy surface $\Sigma$ of $\Ocal$. 
There is a natural inclusion map $i: \mathfrak{P}_{\reg}(\Ncal) \to \mathfrak{P}_{\reg}(\Ocal)$, as part of the net structure of $\mathfrak{P}_{\reg}$. Similarly, the net structure of $H^0(\mathfrak{P}_{\reg})$ includes a natural extension map $H^0(i): H^0(\mathfrak{P}_{\reg})(\Ncal) \to H^0(\mathfrak{P}_{\reg})(\Ocal)$.
By definition, there is a map of exact sequences
\[
\begin{tikzcd}
\mathfrak{P}_{\reg,0}(\Ncal) \arrow[r] \arrow[d, "i^0"] & \mathfrak{P}_\reg(\Ncal) \arrow[r] \arrow[d, "i"] & H^0(\mathfrak{P}_\reg)(\Ncal) \arrow[d,"H^0(i)"] \\
\mathfrak{P}_{\reg,0}(\Ocal) \arrow[r] & \mathfrak{P}_\reg(\Ocal) \arrow[r] & H^0(\mathfrak{P}_\reg)(\Ocal)
\end{tikzcd},
\]
where $i^0$ denotes the restriction of $i$ to the Poisson ideals.

To verify the time-slice axiom, we need to produce an inverse map $\widetilde{\beta}$ to $H^0(i)$ that explicitly demonstrates that $H^0(i)$ is an isomorphism.
We will do this by explicitly constructing a map 
\[
\beta: \mathfrak{P}_{\reg}(\Ocal) \to \mathfrak{P}_{\reg}(\Ncal)
\] 
compatible with the Poisson ideals,
and hence descending to a $\widetilde{\beta}$ that will produce the inverse.
Note that it is sufficient to produce this map $\beta$ just on generators, i.e., on elements induced by the linear functionals $\Orm_f$ for $f\in\Dcal(\Ocal)$.

In addition, pick two other Cauchy surfaces $\Sigma_\pm$ in $\Ncal$, such that $\Sigma_-$ is in the past $J^-(\Sigma)$ of $\Sigma$ and $\Sigma^+$ is in the future $J^+(\Sigma)$ of~$\Sigma$.
	
Finally, pick a smooth function $\chi$ that is equal to 1 on $ J^-(\Sigma_-)$, and vanishes on $ J^+(\Sigma_+)$. 
We use it to construct  a partition of unity subordinate to the cover by $J^+(\Sigma_-)$ and $J^-(\Sigma_+)$.
This partition leads us to decompose $f$ as the linear combination $\chi f+(1-\chi)f$. The first term is supported in the past $J^-(\Sigma_+)$ of $\Sigma_+$, and the second term is supported in the future $J^+(\Sigma_-)$ of~$\Sigma_-$. 

We define the map $\beta$ from $\Dcal(\Ocal)$ to $\Dcal(\Ncal)$~as
\[
\beta(\Orm_f)=\beta_+(\Orm_{\chi f})+\beta_-(\Orm_{(1-\chi)f})\,,
\]
where $\beta_+$ is defined on observables supported in the past of $\Sigma_+$ and $\beta_-$ is defined on observables supported in the future of $\Sigma_-$. We will study these two maps separately.

Assume first that $\supp f$ is in the past of $\Sigma_+$. We define
\be\label{beta:plus}
\beta_+(\Orm_f)=\Orm_{f-P\chi G^{\rm R} f}\,.
\ee
Note that the test function $f-P\chi G^{\rm R} f$ is supported within $\Ncal$, so $\beta_+$ maps $\Dcal(J^-(\Sigma_+))$ to $\Dcal(\Ncal)$. 
Moreover, by construction,
\[
\Orm_{f-P\chi G^{\rm R} f}=\Orm_{f}-\Orm_{P\chi G^{\rm R} f}\,,
\]
and $\Orm_{P\chi G^{\rm R} f}\in\Pfrak_{\reg,0}$.
Hence $\beta_+$ induces a map on the quotient algebras 
\[
\widetilde{\beta}_{+}: H^0(\Pfrak_{\reg})(J^-(\Sigma_+)) \to H^0(\Pfrak_{\reg})(\Ncal),
\]
and we have just shown that, postcomposing with the extension map $H^0(\Pfrak_{\reg})(\Ncal) \to H^0(\Pfrak_{\reg})(J^-(\Sigma_+))$, we obtain the identity map on $H^0(\Pfrak_{\reg})(J^-(\Sigma_+))$.
The construction of $\beta_+$ is illustrated on Figure~\ref{fig:BetaPlus}.

\begin{figure}[h!]
	\begin{center}
\begin{tikzpicture}[scale=0.1,x=1.00mm, y=1.00mm, inner xsep=0pt, inner ysep=0pt, outer xsep=0pt, outer ysep=0pt]
\path[line width=0mm] (-241.59,-206.40) rectangle +(596.05,549.60);
\definecolor{L}{rgb}{0,0,0}
\path[line width=0.30mm, draw=L] (-236.79,-204.40) [rotate around={360:(-236.79,-204.40)}] rectangle +(446.41,545.11);
\definecolor{F}{rgb}{0.753,0.753,0.753}
\path[line width=0.30mm, draw=L, fill=F] (-238.30,50.59) rectangle +(447.02,172.20);
\path[line width=0.30mm, draw=L] (-237.74,133.65) -- (208.21,132.75);
\path[line width=0.30mm, draw=L] (-239.07,203.60) -- (208.63,203.34);
\path[line width=0.30mm, draw=L] (-239.59,68.39) -- (210.02,69.24);
\definecolor{F}{rgb}{0.502,0.502,0.502}
\path[line width=0.30mm, draw=L, fill=F] (-100.51,-53.24) .. controls (-95.28,-16.30) and (-45.47,-28.27) .. (-14.31,-14.37) .. controls (6.34,-5.15) and (29.34,10.26) .. (39.78,-7.61) .. controls (45.65,-17.65) and (38.57,-29.14) .. (31.33,-38.88) .. controls (-24.63,-114.17) and (-107.16,-100.13) .. (-100.51,-53.24) -- cycle;
\draw(120,-163.63) node[anchor=base west]{\fontsize{12}{95.60}\selectfont $\Ocal$};
\path[line width=0.30mm, draw=L] (-98.82,-68.90) -- (-201.07,339.89);
\path[line width=0.30mm, draw=L] (40.29,-25.88) -- (130.09,340.58);
\draw(216.20,121.00) node[anchor=base west]{\fontsize{12}{95.60}\selectfont $\Sigma$};
\draw(215.32,189.33) node[anchor=base west]{\fontsize{12}{95.60}\selectfont $\Sigma_+$};
\draw(215.32,61.54) node[anchor=base west]{\fontsize{12}{95.60}\selectfont $\Sigma_-$};
\path[line width=0.3mm, draw=L] (208.47,68.79) .. controls (236.63,49.56) and (201.65,-47.03) .. (248.53,-61.28) .. controls (248.53,-61.28) and (248.53,-61.28) .. (248.53,-61.28) .. controls (202.09,-76.94) and (242.56,-188.90) .. (210.02,-203.97);
\draw(251.85,-65) node[anchor=base west]{\fontsize{12}{95.60}\selectfont $\chi=1$};
\draw(228.83,270) node[anchor=base west]{\fontsize{12}{95.60}\selectfont $\chi=0$};
\path[line width=0.3mm, draw=L] (210.52,341.20) .. controls (223.47,340.00) and (205.48,288.46) .. (224.06,272.13) .. controls (224.53,271.72) and (224.03,271.89) .. (223.60,271.45) .. controls (205.11,252.22) and (223.98,208.61) .. (209.56,203.06);
\draw(-92.17,-55.55) node[anchor=base west]{\fontsize{8}{85.36}\selectfont $\supp(f)$};
\path[line width=0.30mm, draw=L, fill=F] (-166.60,204.07) -- (-132.68,68.40) -- (64.03,68.40) -- (95.91,203.39) -- cycle;
\draw(-154,170) node[anchor=base west]{\fontsize{8}{81.94}\selectfont $\supp(f-P\chi G^{\rm R}f)$};
\draw(-300,125.54) node[anchor=base west]{\fontsize{12}{95.60}\selectfont $\Ncal$};
\path[line width=0.3mm, draw=L]  (-237.50,222.87) .. controls (-250.45,221.68) and (-233.19,151.22) .. (-251.93,133.34) .. controls (-252.38,132.91) and (-252.31,133.32) .. (-251.88,132.88) .. controls (-233.39,113.65) and (-252.43,56.82) .. (-238.00,51.27);
\end{tikzpicture}%
\caption{Supports of functions relevant in construction of $\beta_+$. 
\label{fig:BetaPlus}}
	\end{center}
\end{figure}
	
A similar argument works when $\supp f$ is in the future of $\Sigma_-$, but then we need to use a function $1-\chi$ in place of $\chi$ and the propagator $G^{\rm A}$ in place of $G^{\rm R}$. 
We define
\be\label{beta:minus}
\beta_-(\Orm_f)=\Orm_{f-P(1-\chi) G^{\rm A} f}
\ee
and then mimic the preceding argument.

We combine now $\beta_+$ with $\beta_-$ using the partition of unity given by $\chi$ and define
\[
\beta(\Orm_f)=\Orm_{f}-\Orm_{P(\chi G^{\rm R}(\chi f)+(1-\chi)G^{\rm A}((1-\chi)f))}\,.
\] 
By construction, $\beta$ again only modifies $\Orm_f$ by a term in the Poisson ideal.
Hence it descends to a map 
\[
\widetilde{\beta}: \mathfrak{P}_{\reg}(\Ocal) \to \mathfrak{P}_{\reg}(\Ncal)
\]
on the quotient algebras, which is equal to the identity after postcomposition with the extension map.
\end{proof}

\subsubsection{Star product}
Next we want to quantize the theory, i.e., we wish to deform the Poisson algebra $\mathfrak{P}_\reg$ to an associative algebra.
Here we use the Moyal formula:
\be\label{star1}
F_1 \star F_2 = m \circ e^{\hbar \widetilde{\partial}_{G^{\mathrm C}}} (F_1 \otimes F_2),
\ee
where $F_1,F_2\in \F_{\reg}[[\hbar]]$. 
Thus we can define the quantum situation parallel to the classical.

Equip the regular functionals with this star product: 
\[
(\F_{\reg}[[\hbar]],\star).
\] 
First, lift the product $\star$ to $\mathfrak{PV}_\reg$ by postulating that it acts trivially on the odd generators. 
Next, since $G^{\mathrm{C}}$ is a distributional bisolution for the operator $P$,
we have 
\[
\Orm_{Pf}\star F=\Orm_{Pf}\cdot F+\left<Pf, G^{\mathrm{C}} F^{(1)}\right>
=\Orm_{Pf}\cdot  F\,,
\]
where $f\in\Dcal$, $F\in\F_{\reg}$. 
It follows that $\delta_{S}$ is a derivation with respect to~$\star$:
\[
\delta_{S}(X\star Y)=(\delta_{S}X)\star  Y+(-1)^{|X|}X\star (\delta_{S} Y)\,,
\] 
where $X,Y\in\mathfrak{PV}_{\reg}$. 
\begin{df}
Define the \textbf{dg QFT model} as 
\[
\fA_{\reg}=(\fv\PV_{\reg}[[\hbar]],\star,\delta_S).
\]
\end{df}
It is straightforward to check that $\fA_{\reg}$ is a QFT model in the sense of definition~\ref{LCQFT}. Hence we obtain an on-shell model as follows. 
\begin{df}
The \textbf{on-shell algebra of regular quantum operators} is the quotient 
\[
H^0(\fA_\reg)=\fA_{\reg}/\fA_{\reg,0}
\]
by the $\star$-ideal $\fA_{\reg,0}$ generated by the elements of the form
\[
\left<dS(\ph),X(\ph)\right>\equiv \iota_{dS}X\,,
\]
where $X\in\mathfrak{PV}^1_\reg$.

Analogously
the \textbf{on-shell algebra of polynomial quantum operators} is
\[
H^0(\fA_{\pol})\doteq \fA_{\pol}/\fA_{\pol,0}\,.
\]
\end{df}

With this definition, the functor $\Ocal\mapsto H^0(\fA_\reg)(\Ocal)$ (as well as $\Ocal\mapsto H^0(\fA_\pol)(\Ocal)$) is an on-shell QFT model in the sense of Definitions~\ref{AQFT} and \ref{timeslice}. Since causality holds by construction, the only non-trivial step is to prove the time-slice axiom. This is done exactly as in Proposition~\ref{Class:mod:proof}.

The space of regular quantum operators (as a vector space) is characterized by the cohomology of the same differential as in the classical case. 
What has changed is the product. More precisely, we have
\[
\fv\fA_\reg\cong \fv\Pfrak_\reg[[\hbar]]\,.
\]





As discussed in Section \ref{sec: time ordering}, it is important to construct a time-ordered product and not just the star product,
especially as a stepping stone towards interacting theories.

On regular functionals, the time-ordered product $\T$ is introduced by means of formula
\be\label{timeordered}
F_1 \T F_2 = \alpha_{iG^{\rm D}}(\alpha_{iG^{\rm D}}^{-1}(F_1) \cdot \alpha_{iG^{\rm D}}^{-1}(F_2))\,,
\ee
i.e. by twisting the pointwise product with the map $\TT$.
This definition has the crucial property that 
\[
F_1 \T F_2 = T(F_1 \star_{G^{\rm C}} F_2)
\]
when the observables $F_1$ and $F_2$ have disjoint supports
and $T$ denotes the time-ordering
\[
T(F_1 \star_{G^{\rm C}} F_2)=
\left\{
\begin{array}{ccc}
F_1 \star_{G^{\rm C}} F_2&\mathrm{if}& F_2 \prec F_1 \\
F_2 \star_{G^{\rm C}} F_1&\mathrm{if}& F_1 \prec F_2\,, 
\end{array}
\right.
\]
Note here the connection with the Dyson formula:
$\T$ agrees with the usual time-order prescription for $\star_{G^{\mathrm C}}$ and extends it to regular functionals with overlapping supports.
\begin{rem}
Those familiar with the CG approach, notably Section 4.6 of \cite{CoGw}, will recognize that this definition is precisely the factorization product on~$\Acal$.
\end{rem}
Note that the product $\T$ is not compatible with $\delta_S$ (since $G^{\rm D}$ is a Green function, rather than a bi-solution), we introduce the off-shell models $\Pfrak^{\rm off}_\reg=(\PV_\reg,\Poi{.}{.})$ and $\fA_\reg^{\rm off}=(\fv\PV_\reg[[\hbar]],\star)$.
\begin{df}\label{timeorderedprod:reg}
	For the classical free off-shell theory $\mathfrak{P}^{\rm off}_\reg$ and its (off-shell) quantization $\fA_\reg^{\rm off}$, 
	the \textbf{time-ordered product} is realized as a quadruple $(\mathfrak{P}^{\rm off}_\reg,\fA_T,\xi,\Tcal)$ 
	with:
	\begin{itemize}
		\item  a functor 	
		\[
		\fA_T \colon \Caus(M) \to \CAlg^*(\Nuc_\hbar)\,,
		\]
		defined by $\fA_T=(\PV_{\reg}[[\hbar]],\T)$, where $\T$ is given by \eqref{timeorderedprod}.
		\item the obvious embedding 
		\[
		\xi:\fv\circ\fA_T \Rightarrow \fv\circ\fA_\reg^{\rm off}\,.
		\] 
		\item and a natural isomorphism (see lemma~\ref{lem:Tnat} ) of commutative algebras
		\[
		\Tcal: \mathfrak{c}\circ\mathfrak{P}_\reg^{\rm off}[[\hbar]] \Rightarrow \fA_T\,,
		\] 
		where $\TT \doteq \al_{iG^{\rm D}}$. (See equation~\eqref{def of alpha}.) 
	\end{itemize}
\end{df}
\begin{lemma}\label{lem:Tnat}
	The time-ordering map $\TT$ determines a natural transformation $\Tcal: \mathfrak{c}\circ\mathfrak{P}_\reg^{\rm off}[[\hbar]] \Rightarrow \fA_T$.
\end{lemma}
\begin{proof}
On each $\Ocal \in  \Caus(\Mcal)$, 
we define $\TT^\Ocal\doteq e^{\frac{i\hbar}{2}\partial_{G^{\mathrm D}_\Ocal}}$. 
This map is well-defined, since $\TT^\Ocal$ is support-preserving. 

It remains to check that $\TT$ intertwines the morphisms.
Let $\psi:\Ocal_1\rightarrow\Ocal_2$ be a morphism in $\Caus(\Mcal)$,  
let $F\in \mathfrak{PV}_\reg(\Ocal_1)[[\hbar]]$, and let $\ph$ be a scalar function on $\Ocal_2$.
Then
\begin{align*}
(\TT^{\Ocal_2} \psi(F))(\ph)
&= (e^{\frac{i\hbar}{2}\partial_{G^{\mathrm D}_{\Ocal_2}}} \psi (F))(\ph)\\
&= \sum_{n=0}^{\infty}\tfrac{1}{n!}\left(\tfrac{i\hbar}{2}\right)^n\left<(\chi_{\psi(\Ocal_1)} G^{\mathrm D}_{\Ocal_2}\circ \psi)^{\otimes n},  F^{(2n)}(\psi^*\ph)\right>\\
&=\sum_{n=0}^{\infty}\tfrac{1}{n!}\left(\tfrac{i\hbar}{2}\right)^n\left<(G^{\mathrm D}_{\Ocal_1})^{\otimes n},  F^{(2n)}(\psi^*\ph)\right>\\
&=(\psi( \TT^{\Ocal_1}F))(\ph).
\end{align*}
\end{proof}

Now it is natural to ask how the time-ordered product intertwines with the story of BV quantization.
In \cite{FR3} the  deformed BV differential has been defined as 
\be\label{intertw}
\hat{s}=\TT^{-1}\circ \delta_S \circ \TT\,,
\ee
This motivates the following.

\begin{df}\label{Obsq}
The  \textbf{BV complex of quantum observables} is
\[
\fv\fA_\reg^{q}=(\fv\mathfrak{PV}_\reg[[\hbar]],\hat{s})\,.
\]
\end{df}

Explicit computation using the properties of Green's function $G^{\mathrm{D}}$ gives
\[
\hat{s}=\delta_{S}-i\hbar \Lap\,,
\]
where $\Lap$ is the graded (or BV) Laplacian on the space of regular polyvector fields~$\mathfrak{PV}_{\reg}$.
\begin{rem}
	The name ``quantum observables'' used to describe $\fv\fA_\reg^{q}$ is justified, as this corresponds to what one would describe as such in the physics literature (e.g. \cite{HT}). Even though $\fv\fA_\reg^{q}$ is quasi-isomorphic to $\fv\Pfrak_\reg[[\hbar]]$ (by means of $\TT$), it is conventional to work with the former. One reason is that it forms a BD algebra when equipped with the usual graded pointwise product and the Schouten bracket, while $\fv\Pfrak_\reg[[\hbar]]$ forms a BD algebra with $\T$ and the bracket twisted by $\TT$ (this is explained in more detail in section~\ref{sec:main lesson})
\end{rem}

\section{Proof of comparison theorems}\label{Proofs}

Let us build up the natural transformations $\iota^{cl}$ and $\iota^q$ in stages.

\subsection{The classical case}
Momentarily ignoring the differentials, we observe that 
\[
\Sym(\Dcal(U)) = \bigoplus_{n \geq 0} \Dcal_n(U) \cong \F_\pol(U) \hookrightarrow \F_{\reg}(U).
\]
Moreover, the last inclusion is dense when we use the topology $\tau$ defined at~(\ref{def:tau}).

This relationship extends to the polyvector fields as well.
We have 
dense (with respect to $\tau$ ) inclusions as graded nuclear vector spaces:
\[
\widetilde{PV}(U) \subset PV(U)\cong \mathfrak{PV}_\pol(U) \subset \mathfrak{PV}_\reg(U) \subset \Oscr(\Ecal(U)\oplus\Ecal(U)[-1]).
\]
These inclusions are manifestly functorial with respect to opens $U$, and in particular
we see that we have a natural isomorphism
\[
\iota^{\#}: PV \Rightarrow \mathfrak{PV}_\pol
\]
between the CG and FR constructions at the level of graded vector spaces.
Let us now examine the differentials on both sides.

In this classical case, the situation is  straightforward because $\delta_S$ coincides with~$d$.
Let $\iota^{cl}$ denote the isomorphism of cochain complexes
\[
\fc\circ \Pcal\big|_{\Caus(\Mcal)}\xto{\iota^{cl}} (\mathfrak{PV}_\pol,\delta_S)=\fc\circ\Pfrak_\pol,
\]
which induces an isomorphism $H^0(\iota^{cl})$ on the zeroth cohomology.

\begin{rem}\label{reg vs pol}
In the FR framework one usually works with $\PV_{\reg}$, rather than $\PV_\pol$, since it contains also infinite sums of polynomials (e.g. Weyl generators $e^{\Ocal_f}$). Here we stress that the precise comparison between FR and CG frameworks is most naturally done for $\PV_\pol$ and one can pass to $\PV_\reg$ by appropriate completion (on both sides).
\end{rem}

\subsection{The quantum case}

The quantum case is a bit subtler. 
The FR approach assigns a dg algebra $\fA_\pol=(\mathfrak{PV}_\pol[[\hbar]], \delta_S, \star)$ 
whereas the CG approach assigns merely a cochain complex $(PV[[\hbar]], \d + \hbar \triangle)$.
On the face of it, these look rather different.
In particular, the differentials are different, 
so the embedding that works for the classical case does not extend.

The key is to use the time-ordering machinery.
Following \cite{FR3} (and by Definition \ref{Obsq}), 
the time-ordering operator provides a cochain isomorphism
\[
\fv\fA_\pol^{q}=(\mathfrak{PV}_\pol[[\hbar]],\hat{s})\xto{\TT}\fv\Pfrak_\pol[[\hbar]]=(\mathfrak{PV}_\pol[[\hbar]],\delta_{S}).
\]
To construct $\iota^q$, 
we first note that $\TT$ is also a cochain isomorphism in the CG framework, i.e.,
\be\label{alpha:GD}
\Acal\big|_{\Caus(\Mcal)}\xto{\TT} \Pcal[[\hbar]]\big|_{\Caus(\Mcal)}.
\ee
Composing with $\iota^\#$, we obtain a map
\[
\Acal\big|_{\Caus(\Mcal)}\xto{\iota^\#\circ\TT} (\mathfrak{PV}_\pol[[\hbar]],\delta_{S})=\fv\circ \fA_\pol.
\]
As in the classical case, we use the fact that there exists a canonical map from $(\mathfrak{PV}_\reg[[\hbar]],\delta_{S})$ to its $H^0$ and define $\iota^q$ as composition of
 $\iota^\#\circ\TT$ with this map. 
 Note that modulo $\hbar$, the map $\TT$ is the identity 
 and hence $\iota^q$ recovers $\iota^{cl}$ modulo~$\hbar$.
 
\begin{rem} 
\label{rem: not a fact map}
As explained in chapter 4.6 of \cite{CoGw}, the map $\TT$ in \eqref{alpha:GD} is \emph{not} a morphism of factorization algebras.
The issue arises when considering structure maps involving disjoint opens containing into a larger open;
such maps do not arise when restricted to~$\Caus(\Mcal)$.
\end{rem}

\subsection{The associative structures}\label{assocstr}

Finally, we come to the comparison of algebra structures, i.e. we prove Theorem~\ref{Comparison:alg}.

In comparing the FR and CG frameworks,
a crucial role is played by the time-ordered product. 
To understand this, observe that in trying to pass from a net to a factorization algebra, 
we need to construct a commutative product that gives rise to the factorization product structure. A natural commutative product in the pAQFT framework is $\T$. 
But going back to to non-commutative product  $\star$ given the commutative one is also easy, 
as long as we keep track of the supports of observables. 

\subsubsection{}

To communicate the key idea, we present this conversion process in the 1-dimensional case,
where the situation is simple.

In $\RR$, any interval is a causally convex neighborhood of a Cauchy surface, which in this case is given by a point in the interval. 
Let $I_0=(-a,a)\subset \RR$ be an interval with $a>0$. 
For $I_0$, we fix the point $0$ as the Cauchy surface. 
We can also consider $I_t=(t-a,t+a)$, which is a translation of $I_0$ by~$t$. 

There is natural way to identify the observables in $I_0$ with the observables in $I_t$, 
using the techniques we developed in the proof of Proposition~\ref{Class:mod:proof}.
On a linear functional $\Orm_f$ for $f \in \Dcal(I_0)$, 
let
\begin{equation}
\label{eqn:translate}
\beta^t_+(\Orm_f)=\Orm_{f-P\chi^t G^{\rm R} f}\,,
\end{equation}
where $\chi^t$ is a smooth function with the property $\chi^t(s)=1$ for $s< t-a$ and $\chi^t(s)=0$ for $s> t+a$. 
(If we fix a $\chi^0$, we can simply translate it to obtain a $\chi^t$.)
The element $\beta^t_+(\Orm_f)$ is then a linear functional with support in $I_t$.
The map $\beta^t_+$ extends in a canonical way to the whole algebra~$\F_{\reg}(I_0)$.

Now consider two arbitrary elements $A,B$ in $\F_{\reg}(I_0)$.
Set
\[
A_t\doteq \beta_+^t (A)
\]
and likewise for $B_t$.
The $\star$-commutator of $A$ and $B$ has the following relationship with the $\T$-commutator:
\[
[A,B]_\star=A\star B-B\star A=\lim_{t\to 0} (A_t\star B-B_t\star A)=\lim_{t\to 0} (A_t\T B-B_t\T A).
\]
This identification is helpful, because we know there is a nice relationship between $\T$ and the factorization product.
Namely, they agree so long the elements have disjoint support.

For $|t| > 2a$, the factorization product allows us to compute the $\T$-commutator  
As $t$ gets smaller, however, the two intervals $I_0$ and $I_t$ begin to overlap,
so that we cannot invoke the factorization product.
It is possible to resolve this issue---to describe the $\T$-commutator in terms of the factorization product---at the level of cohomology.
The key point is that for any smaller interval $I'_0 \subset I_0$, 
the inclusion $\fA(I'_0) \to \fA(I_0)$ is a quasi-isomorphism.
Any cocycle $A \in \F_{\reg}(I_0)$ can be replaced by a cohomologous element with support in the smaller interval $I'_0 \subset~I_0$.
Hence, at the level of cohomology, we can make the width $a$ of the interval arbitrarily small,
and so the $\T$-commutator can always be computed using the factorization product.
In short, at the level of cohomology, we can recover the $\star$-commutator from the factorization product.
%

\subsubsection{}

The general case is also easy to understand, as there is already a factorization algebra structure on the Cauchy surface (i.e. spacelike separated regions are taken care of) and the relation between $\star$ and the factorization product for time-like separated observables works exactly the same as in the one-dimensional case. 
We will show, in fact, something slightly more refined by working at the cochain level: we will show that the factorization product agrees with $\star$ up to exact terms.
Let us spell this out in detail now.

As discussed in Remark~\ref{rem: not a fact map}, 
the map $\TT$ does not respect the factorization product, 
but this map is an isomorphism when restricted to each open.
Hence one can use it to transfer the factorization product on the quantum observables to a new factorization product on the underlying cochain complex of the classical observables $\Pcal[[\hbar]]$.
That is, one forgets the original structure maps and borrows them from the quantum side.
Denote this new factorization algebra by~$\Acal_T$. 

As in section \ref{subsec: time slice} 
we fix a small tubular neighborhood $\widetilde{\Sigma}$ of a Cauchy surface $\Sigma$ and construct $\Acal_T\big|_{\Sigma}$. 
Now we show how to obtain a homotopy-associative product on this restricted factorization algebra.

Consider a time-slice $\Ncal_+$ in the future of $\widetilde{\Sigma}$ and disjoint from it, so  $\Ncal_+\cap \widetilde{\Sigma}=\emptyset$.
Let $\Ncal$ be a larger time-slice that contains both $\Ncal_+$ and $\widetilde{\Sigma}$. 
By the time-slice axiom, we can make all these slices arbitrarily ``thin'' in the time direction.

Take $U\subset \widetilde{\Sigma}$ a causally convex set. 
Let $J^+(U)$ denote its future.
We then have $U_+\doteq J^+(U)\cap \Ncal_+$, the ``image of $U$ in the future time-slice $\Ncal_+$.
We also have $U_\Ncal \doteq J^+(U)\cap \Ncal$, which contains both $U$ and $U_+$.
As $U_+$ does not intersect $U$, 
we have the factorization product
\[
{m_T}: \Acal_T(U_+)\otimes \Acal_T(U)\rightarrow \Acal_T(U_\Ncal)
\] 
We will recover the $\star$-product up to exact terms from this map.

As in the one-dimensional case, the formula~\eqref{eqn:translate} determines a map $\beta_+$ that transports observables to the future.
(One has to pick a partition of unity, following the proof of Proposition~\ref{Class:mod:proof}.)
Hence, given $F,G\in\Acal_T(U)$, we define
\[
F_+\doteq \beta_+(F)\in \Acal_T(U_+)
\]
and a product
\[
{m_T} \circ \beta_+ \otimes {\rm id}: \Acal_T(U)\otimes \Acal_T(U)\rightarrow \Acal_T(U_\Ncal),
\] 
which sends $F \otimes G$ to~$m_T(F_+, G)$.

We want to compare this map to $\star$.
It suffices to perform the explicit computation for $F=\Orm_f$ and $G=\Orm_g$. 
Since
\[
{\Orm_{f}}_+=\Orm_{f-P\chi G^{\rm R} f}\,,
\]
we see that
\begin{align*}
m_T({\Orm_{f}}_+,\Orm_{g})
&={\Orm_{f}}_+\cdot\Orm_{g}+\frac{i}{2}\int (f-P\chi G^{\rm R} f)(x)G^{\rm D}(x,y)g(y) \\
&= {\Orm_{f}}_+\cdot\Orm_{g}+\frac{i}{2}\int (f(x)G^{\rm D}(x,y) g(y)-
\chi(x) (G^{\rm R} f)(x)\delta(x-y)g(y))\\
&={\Orm_{f}}_+\cdot\Orm_{g}+\frac{i}{2}\int (f(x)G^{\rm D}(x,y) g(y)-
 g(x)G^{\rm R}(x,y) f(y))\\
&={\Orm_{f}}_+\cdot\Orm_{g}+\frac{i}{2}\int (f\tfrac{1}{2} (G^{\rm R}+ G^{\rm A}) g-
 fG^{\rm A} g)\\
&={\Orm_{f}}_+\cdot\Orm_{g}+\int f G^{\rm C} g\\
&=\Orm_f\star \Orm_g+\delta_S(\Orm^\dagger_{\chi G^{\rm R} f}\cdot \Orm_g)\,,
\end{align*}
where in the last step we made use of the fact that $\Orm_g$ is a cocycle.

This equation implies that $m_T({\Orm_{f}}_+,\Orm_{g})$ and $\Orm_f\star \Orm_g$ are cohomologous.
Thus, at the level of cohomology, the product $m_T \circ \beta_+ \otimes {\rm id}$ agrees with $\star$, the product on~$\fA_{\reg}(U)$.

\subsection{Shifted vs. unshifted Poisson structures}
\label{subsec: shifting}
\subsubsection{}
We discuss here a classical analogue of Theorem~\ref{Comparison:alg},
asking whether we can see a cochain-level version of the corollary.
We need a more subtle argument, 
since we need to relate the 1-shifted Poisson bracket $\{.,.\}$ with the 0-shifted bracket $\Poi{.}{.}$ on $PV(U)$, $U\in \Caus(\Mcal)$. 
As we are working with free theories,
we can exploit the fact that  $\{.,.\}$ is uniquely defined by its action on generators 
\be\label{pairing}
\{\Orm^{\ddagger}_g,\Orm_f \}\doteq \int gf \,,
\ee
where $\Orm_{g}^\ddagger\doteq \int g(x) \frac{\delta}{\delta \ph(x)}$ is a vector field in $PV^1$ and  $f,g$ live in~$\Dcal(U)$. 

The 0-shifted bracket $\Poi{.}{.}$ must live on $PV^0$, 
so in order to obtain it, we need a map from $PV^0$ to $PV^1$. 
Fortunately, the field theory naturally provides such a map, 
induced by the bisolution $G^{\rm C}$ treated as a constant bivector field on $\Ecal$. 
We denote it by $\sigma$ and write explicitly 
\[
\sigma(\Orm_f)=\iota_{d\Orm_f} G^{\rm C}=\int  G^{\rm C}(y,x) f(x) \frac{\delta}{\delta \ph(y)}=\Orm^\ddagger_{G^{\rm C}f} \,.
\]
This map does not land in $PV^1$, 
but rather in its completion (since $\int f(x) G^{\rm C}(x,y)dx$ is not compactly supported).
Nonetheless, the pairing \eqref{pairing} is still well defined on its image, 
so that we can write the new bracket on $PV^0$ by the formula
\[
\Poi{\Orm_g}{\Orm_f}=\{\sigma(\Orm_f),\Orm_g\}=\int g(y) G^{\rm C}(y,x) f(x)\,,
\] 
which is exactly the bracket of~$\Pfrak_{\rm pol}(U)$.

There is a nice interpretation of the map $\sigma$ in terms of the hyperbolic complex. 

\begin{thm}[\cite{Baer} Thm. 3.4.7]
Let $\Mcal=(M,g)$ be a connected time-oriented Lorentzian manifold with compact Cauchy surfaces. Let $P$ be a
normally hyperbolic operator acting on $\Ecal(\Mcal)$. Then the sequence of linear maps
\[0\rightarrow \Dcal(\Mcal) \xto{P}  \Dcal(\Mcal) \xto{G^{\rm C}}\Ecal(\Mcal) \xto{P} \Ecal(\Mcal) 
\]
is an exact sequence.
\end{thm}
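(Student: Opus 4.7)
The plan is to verify exactness at each of the four spots in turn, leaning throughout on the defining properties of the retarded and advanced Green's functions $G^{\rm R}, G^{\rm A}$ (namely $P G^{\rm R/A} = G^{\rm R/A} P = \mathrm{id}$ on $\Dcal(\Mcal)$ and the support conditions $\supp(G^{\rm R/A} f) \subset J^{\pm}(\supp f)$) and on the fact that in a globally hyperbolic spacetime with compact Cauchy surfaces, $J^+(K_1)\cap J^-(K_2)$ is compact whenever $K_1,K_2$ are compact. I will also use the standard extensions of $G^{\rm R}$ (resp.\ $G^{\rm A}$) to past-compact (resp.\ future-compact) smooth sections, which continue to invert $P$ on those enlarged domains; the compact Cauchy surface hypothesis is what makes sections cut off between two Cauchy surfaces automatically compactly supported.

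First I would handle injectivity of $P\colon\Dcal\to\Dcal$: if $Pf=0$ with $f\in\Dcal$, then $f=G^{\rm R}Pf=0$. Next, for exactness at the middle $\Dcal$, the inclusion $\mathrm{im}(P)\subset\ker(G^{\rm C})$ is immediate from $G^{\rm C}P=(G^{\rm R}-G^{\rm A})P=0$. For the converse, suppose $f\in\Dcal$ with $G^{\rm C}f=0$, i.e.\ $G^{\rm R}f=G^{\rm A}f=:u$. Then $\supp u\subset J^+(\supp f)\cap J^-(\supp f)$, which is compact by global hyperbolicity and compactness of $\supp f$, so $u\in\Dcal$ and $Pu=PG^{\rm R}f=f$, showing $f\in\mathrm{im}(P)$.

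For exactness at the first $\Ecal$, the inclusion $\mathrm{im}(G^{\rm C})\subset\ker(P)$ follows from $PG^{\rm C}=0$. Conversely, given $u\in\Ecal$ with $Pu=0$, pick Cauchy surfaces $\Sigma_-\prec\Sigma_+$ and a smooth partition of unity $\chi_++\chi_-=1$ with $\chi_+$ vanishing in $J^-(\Sigma_-)$ and $\chi_-$ vanishing in $J^+(\Sigma_+)$. Set $u_\pm:=\chi_\pm u$; since $P$ is a differential operator and $Pu=0$, the section $f:=Pu_+=-Pu_-$ is supported in the region between $\Sigma_-$ and $\Sigma_+$, which is compact by the compact Cauchy surface hypothesis, so $f\in\Dcal$. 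Because $u_+$ is past-compact and $u_-$ is future-compact, $G^{\rm R}Pu_+=u_+$ and $G^{\rm A}Pu_-=u_-$, hence $G^{\rm C}f=G^{\rm R}f-G^{\rm A}f=u_++u_-=u$. The last step is surjectivity of $P\colon\Ecal\to\Ecal$: for any $v\in\Ecal$, decompose $v=v_++v_-$ with $v_\pm:=\chi_\pm v$ past-/future-compact, and set $u:=G^{\rm R}v_++G^{\rm A}v_-$, which lies in $\Ecal$ and satisfies $Pu=v_++v_-=v$.

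The main obstacle is purely technical: rigorously justifying the extensions of $G^{\rm R}$ and $G^{\rm A}$ to past-compactly and future-compactly supported smooth sections, together with the support and inversion properties used in the last two steps. This is where the global hyperbolicity and compact Cauchy surface assumptions are genuinely needed; without them one must work with the weaker notion of spatially compact sections and the statement must be reformulated accordingly. Since these extensions are standard (see, e.g., \cite{Baer,GreenBear}), I would cite them rather than reprove them, keeping the argument centered on the four elementary exactness verifications above.
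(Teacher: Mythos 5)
The paper does not prove this statement---it is imported verbatim from the cited reference (\cite{Baer}, Thm.\ 3.4.7)---so there is no internal proof to compare against; your argument is the standard one from that reference and is correct, including the sign bookkeeping $G^{\rm C}f = G^{\rm R}Pu_+ - G^{\rm A}(-Pu_-) = u_+ + u_-$. The ingredients you defer to citation (the extensions of $G^{\rm R}$ and $G^{\rm A}$ to past-/future-compact sections, and compactness of $J^+(K_1)\cap J^-(K_2)$, which in fact needs only global hyperbolicity rather than compact Cauchy surfaces) are exactly the right lemmas to quote, and you have correctly isolated the one place where compactness of the Cauchy surfaces is genuinely used: it forces every smooth section, and in particular $Pu_\pm$ and the solution $u$, to be spacelike-compact, which is what allows $\Ecal(\Mcal)$ to replace the space $C^\infty_{\rm sc}$ appearing in the general globally hyperbolic version of the theorem.
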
	

Clearly, the map $\sigma$ is induced by the second to last mapping in this sequence, whose image is exactly the kernel of the equations of motion operator. 

\subsubsection{}

There is yet another perspective on the Peierls bracket, related to the one presented above, but placing more emphasis on the BD algebra structure.\footnote{We thank Kevin Costello for suggesting this formulation.} 
We now proceed to formulating a precise statement by introducing some assumptions and notation.

Suppose that $\Mcal$ is foliated with compact Cauchy surfaces. 
We will call any small open neighborhood of some Cauchy surface a {\em Cauchy slice}.
Now fix such a Cauchy slice $\Ncal\subset \Mcal$.
Since solutions to the equations of motion are locally constant in the time direction (with respect to the foliation),
we can translate observables forward or backward in time.
In particular,
for a linear observable $\Orm_f\in\Pcal(\Ncal)$ localized in this slice,
we can produce an observable $\beta_+(\Orm_f)$, which is $\Orm_f$ shifted to the future, and an observable $\beta_-(\Orm_f)$, which is $\Orm_f$ shifted to the past.
(We will give an explicit formula for $\beta_{\pm}$ in the proof below.)

\begin{lemma}
For any two linear observables $\Orm_f,\Orm_g\in\Pcal(\Ncal)$ localized in $\Ncal$ and of cohomological degree zero, 
the Peierls bracket of these observables satisfies
\[
i\hbar \Pei{\Orm_g}{\Orm_f}=
\Orm_g\cdot (\beta_+(\Orm_{f})-\beta_-(\Orm_f))\, \mod\im(\hat{s}), \hbar^2.
\]
That is, we recover the Peierls bracket from the factorization product by working up to homotopy and  modulo~$\hbar^2$.
\end{lemma}

From local constancy of $\Pcal$ in the time direction,
it follows that there exists an element $\Psi$ such that
\[
\beta_+(\Orm_{f})-\beta_-(\Orm_f)=\delta_{S}\Psi.
\]
(In other words, the translated observables are cohomologous.)
Hence
\[
\Orm_g\cdot (\beta_+(\Orm_{f})-\beta_-(\Orm_f)) 
= 
\Orm_g\cdot \delta_S \Psi
=\delta_S(\Orm_g\cdot  \Psi)
=\hat{s}(\Orm_g\cdot  \Psi)+i\hbar \{\Orm_g,\Psi\}
\,.
\]
In conjunction with the lemma, we thus see that the Peierls bracket $\Pei{\Orm_g}{\Orm_f}$ is identified with $\{\Orm_g,\Psi\}$, modulo the image of $\hat{s}$ and modulo~$\hbar^2$.
This result gives a direct relationship between the BV and Peierls bracket.

\begin{proof}
To verify the relation to the usual formula for the Peierls bracket, 
we use a particular form of maps $\beta_\pm$ and the element $\Psi$. 
Consider two Cauchy slices $\Ncal_{\pm}$, disjoint away from $\Ncal$ and with the property $\Ncal_{\pm}\subset J^\pm(\Ncal)$ respectively. The arrangement of the Cauchy slices is presented on figure~\ref{betaplusminus}. 

\begin{figure}[ht]
\begin{tikzpicture}[scale=0.12,x=1.00mm, y=1.00mm, inner xsep=0pt, inner ysep=0pt, outer xsep=0pt, outer ysep=0pt]
\path[line width=0mm] (-815.17,-244.12) rectangle +(1167.03,587.43);
\definecolor{L}{rgb}{0,0,0}
\definecolor{F}{rgb}{0.753,0.753,0.753}
\path[line width=0.30mm, draw=L, fill=F] (-499.60,-26.91) [rotate around={360:(-499.60,-26.91)}] rectangle +(709.31,140.19);
\path[line width=0.30mm, draw=L] (-499.56,-241.40) rectangle +(709.19,582.72);
\path[line width=0.30mm, draw=L, fill=F] (-499.67,167.65) [rotate around={360:(-499.67,167.65)}] rectangle +(709.42,48.14);
\definecolor{F}{rgb}{0.502,0.502,0.502}
\path[line width=0.30mm, draw=L, fill=F] (-59.58,45.92) .. controls (-54.34,82.86) and (-4.53,70.89) .. (26.63,84.80) .. controls (47.28,94.01) and (70.27,109.43) .. (80.72,91.56) .. controls (86.59,81.51) and (79.51,70.03) .. (72.27,60.29) .. controls (16.30,-15.01) and (-66.22,-0.97) .. (-59.58,45.92) -- cycle;
\path[line width=0.30mm, draw=L] (-58.62,35.10) -- (-153.74,338.71);
\path[line width=0.30mm, draw=L] (80.79,73.43) -- (177.41,339.40);
\draw(231.42,184.59) node[anchor=base west]{\fontsize{12}{95.60}\selectfont $\Ncal_+$};
\draw(230.43,35.03) node[anchor=base west]{\fontsize{12}{95.60}\selectfont $\Ncal$};
\path[line width=0.3mm, draw=L] (209.60,-100.07) .. controls (214.67,-103.53) and (208.38,-120.91) .. (216.81,-123.47) .. controls (216.81,-123.47) and (216.81,-123.47) .. (216.81,-123.47) .. controls (208.45,-126.28) and (215.73,-146.42) .. (209.88,-149.13);
\draw(-616.00,269.47) node[anchor=base west]{\fontsize{12}{95.60}\selectfont $\chi=0$};
\path[line width=0.3mm, draw=L] (209.40,113.31) .. controls (222.35,112.11) and (204.36,60.57) .. (222.94,44.24) .. controls (223.41,43.83) and (222.91,44.00) .. (222.49,43.56) .. controls (203.99,24.32) and (224.56,-21.22) .. (210.13,-26.77);
\draw(-51.24,43.61) node[anchor=base west]{\fontsize{10}{85.36}\selectfont $\supp(f)$};
\path[line width=0.30mm, draw=L, fill=F] (-114.82,215.36) -- (-99.93,167.81) -- (115.23,168.10) -- (132.79,215.36) -- cycle;
\definecolor{F}{rgb}{0.753,0.753,0.753}
\path[line width=0.30mm, draw=L, fill=F] (-499.81,-149.92) [rotate around={360:(-499.81,-149.92)}] rectangle +(709.01,49.77);
\path[line width=0.30mm, draw=L] (-59.69,43.18) -- (-163.32,-242.11);
\path[line width=0.30mm, draw=L] (81.71,88.80) -- (167.72,-242.12);
\definecolor{F}{rgb}{0.502,0.502,0.502}
\path[line width=0.30mm, draw=L, fill=F] (-111.89,-100.15) -- (-129.88,-149.92) -- (143.25,-149.50) -- (131.12,-99.73) -- cycle;
\path[line width=0.30mm, draw=L, fill=F] (-351.04,-13.38) .. controls (-377.40,-2.91) and (-369.34,36.90) .. (-338.03,62.33) .. controls (-229.65,150.37) and (-138.41,42.42) .. (-187.78,-1.55) .. controls (-213.23,-24.21) and (-246.95,4.16) .. (-278.88,5.55) .. controls (-289.56,6.01) and (-300.08,3.25) .. (-309.63,-1.55) .. controls (-323.23,-8.38) and (-336.90,-18.99) .. (-351.04,-13.38) -- cycle;
\draw(-311.50,41.25) node[anchor=base west]{\fontsize{10}{85.36}\selectfont $\supp(g)$};
\draw(227.63,-132.94) node[anchor=base west]{\fontsize{12}{95.60}\selectfont $\Ncal_-$};
\draw(-616.00,-207.47) node[anchor=base west]{\fontsize{12}{95.60}\selectfont $\chi=0$};
\path[line width=0.3mm, draw=L] (209.60,216.13) .. controls (214.67,212.68) and (208.38,195.30) .. (216.81,192.74) .. controls (216.81,192.74) and (216.81,192.74) .. (216.81,192.74) .. controls (208.45,189.92) and (215.73,169.79) .. (209.88,167.08);
\path[line width=0.3mm, draw=L] (-498.70,340.73) .. controls (-509.40,339.74) and (-494.77,292.38) .. (-510.13,278.88) .. controls (-510.51,278.54) and (-510.10,278.68) .. (-509.75,278.32) .. controls (-494.47,262.42) and (-509.92,219.40) .. (-498.00,214.81);
\path[line width=0.3mm, draw=L] (-499.44,-149.94) .. controls (-507.48,-150.68) and (-497.19,-186.51) .. (-508.72,-196.65) .. controls (-509.02,-196.91) and (-508.71,-196.80) .. (-508.44,-197.08) .. controls (-496.96,-209.02) and (-508.56,-237.52) .. (-499.60,-240.96);
\path[line width=0.3mm, draw=L] (-498.57,168.05) .. controls (-523.80,165.72) and (-488.75,65.28) .. (-524.96,33.46) .. controls (-525.87,32.66) and (-524.90,32.99) .. (-524.07,32.13) .. controls (-488.03,-5.35) and (-527.47,-88.75) .. (-499.36,-99.56);
\draw(-616.00,25) node[anchor=base west]{\fontsize{12}{95.60}\selectfont $\chi=1$};
\draw(-770,215.71) node[anchor=base west]{\fontsize{12}{95.60}\selectfont $\alpha_+=1$};
\draw(-770,-140) node[anchor=base west]{\fontsize{12}{95.60}\selectfont $\alpha_-=1$};
\path[line width=0.3mm, draw=L] (-627.28,-28.09) .. controls (-646.24,-29.85) and (-619.58,-106.85) .. (-646.78,-130.76) .. controls (-647.47,-131.37) and (-646.74,-131.12) .. (-646.12,-131.76) .. controls (-619.04,-159.92) and (-651.47,-233.22) .. (-630.35,-241.34);
\path[line width=0.3mm, draw=L] (-626.51,340.17) .. controls (-647.37,338.24) and (-616.58,251.83) .. (-646.51,225.53) .. controls (-647.26,224.87) and (-646.46,225.15) .. (-645.77,224.43) .. controls (-615.99,193.46) and (-650.51,121.28) .. (-627.28,112.34);
\definecolor{L}{rgb}{0.502,0.502,0.502}
\path[line width=0.30mm, draw=L] (209.34,341.01) -- (-628.25,341.01);
\path[line width=0.30mm, draw=L] (206.98,-241.05) -- (-630.61,-241.05);
\definecolor{L}{rgb}{0,0,0}
\path[line width=0.30mm, draw=L] (-499.56,-241.40) [rotate around={0:(-499.56,-241.40)}] rectangle +(709.19,582.72);
\definecolor{L}{rgb}{0.502,0.502,0.502}
\path[line width=0.30mm, draw=L] (176.22,113.86) -- (-630.35,113.06);
\path[line width=0.30mm, draw=L] (208.16,-25.73) -- (-630.35,-28.86);
\end{tikzpicture}%
\caption{Arrangement of Cauchy slices and function supports.\label{betaplusminus}}
\end{figure}

Now let $\alpha_++\alpha_-=1$ be a partition of unity with the property that $\alpha_\pm\equiv 1$ on to the future/past of $\Ncal$. Let $\chi$ be a test function with the properties that $\chi\equiv 1$ on $\Ncal$ and $\chi\equiv 0$ to the future of $\Ncal_+$ and to the past of $\Ncal_-$. Define 
\[
1-\chi_\pm\doteq (1-\chi)\alpha_\pm
\]
and choose the maps $\beta_\pm$ as in the proof of Proposition~\ref{Class:mod:proof}, i.e.
\[ 
\beta_\pm(\Orm_f)=\Orm_{f-P\chi_\pm G^{\rm R/A} f}\,,
\]
With these choices we have
\begin{align*}
\beta_+(\Orm_{f})-\beta_-(\Orm_f)&= \int \phi P(1-\chi_+) G^{\rm R} f-\int \phi P(1-\chi_-) G^{\rm A} f\\
&=\int \phi P(1-\chi_+) G^{\rm C} f+\int \phi P(1-\chi_-) G^{\rm C} f\\
&=-\int \phi P\chi G^{\rm C} f\\
&=-\int P\phi \chi G^{\rm C} f\,,
\end{align*}
where in the last step we used the fact the $\chi$ is compactly supported, so we could integrate by parts. We now define
\[
\Psi =-\int \phi^\ddagger \chi G^{\rm C} f\,,
\]
so that
\[
\{\Orm_g,\Psi\}=\int g \chi G^{\rm C} f= \int g G^{\rm C} f\,,
\]
where we used the fact that $\supp g\subset \Ncal$ and $\chi\equiv 1$ on $\Ncal$.
Equation \eqref{Peierls} agrees with this expression, after specializing to linear observables.
\end{proof}

%
\section{Interpretation of the results}

Now that we have precise statements and arguments in place,
it may be useful to step back and articulate what they mean.
Here we explain how our dialogue has modified our own perspective on these formalisms.

\subsection{The main lesson}\label{sec:main lesson}

The map $\TT$ used in the comparison theorems plays a double role: 
it is both a cochain isomorphism between classical and quantum observables and also an intertwiner between two products $\T$ and $\cdot$.
The take-home message is that
\medskip
\begin{quote}
	\textit{Quantum observables are described either by deforming the product (from $\cdot$ to $\T$) and keeping the differential as $\delta_S$ or, equivalently, by deforming the differential (from $s$ to $\hat{s}$) and keeping the product.}
\end{quote}
\medskip
We will now make this statement more precise. 

The approach to quantization taken in pAQFT relies on deformation of the product, while the observables are left unchanged. 
According to this philosophy, the free quantum theory is obtained by deforming $\cdot$ to the non-commutative star product $\star$.
Since $\delta_{S}$ is a derivation with respect to $\star$, 
the vector space of observables is just $\fv\Pfrak_\pol[[\hbar]]$. 
Now let's check if this is compatible with the time-ordered product $\T$. 
This structure does not form a differential graded commutative algebra,
since $\delta_S$  is not a derivation with respect to $\T$. 
In fact the following identity holds:
\[
\delta_{S}(X\T Y)=(-1)^{|X|}\delta_SX\T Y+X\T\delta_SY-i\hbar\{X,Y\}_\TT,
\]
where $X,Y\in\mathfrak{PV}_{\pol}[[\hbar]]$ and $\{.,.\}_\TT$ is the Schouten bracket on polyvector fields twisted by the $\T$ product, i.e.,
\[
\{X,Y\}_\TT\doteq \TT\{\TT^{-1}X,\TT^{-1}Y\}\,,
\]
with the usual Schouten bracket $\{.,.\}$. 

Since the Schouten bracket vanishes for arguments with disjoint supports, we have
\[
\delta_{S}(X\T Y)=(-1)^{|X|}\delta_{S}X\T Y+X\T\delta_{S}Y\,,
\]
for $X\in \fv\Pfrak_\pol(\Ocal_1)[[\hbar]]$,  $Y\in \fv\Pfrak_\pol(\Ocal_2)[[\hbar]]$ if $\Ocal_1,\Ocal_2\in\Caus(\Mcal) $ and $\Ocal_1\cap\Ocal_2=\emptyset$.


Equivalently to deforming the product, one can deform the differential instead. 
This point of view guides the CG approach \cite{Cos,CoGw,CG2}. 
In this way of looking at things, we leave the product to be $\cdot$,  
but we deform $\delta_{S}$ to $\hat{s}$ (see \eqref{intertw}). Again we have
\[
\hat{s}(X\wedge Y)=(-1)^{|X|} \hat{s}X\wedge Y+X\wedge \hat{s}Y-i\hbar\{X,Y\}\,,
\]
so $\hat{s}$ acts like a derivation for arguments with disjoint support. 

To summarize, we identify the space of quantum observables in the FR framework with the BD algebra
\[
(\mathfrak{PV}_\pol(\Ocal)[[\hbar]],\T,\delta_S,\{.,.\}_\TT)\,,
\]
which by means of $\TT$ is quasi-isomorphic to
\[
 (\mathfrak{PV}_\pol(\Ocal)[[\hbar]],\cdot,\hat{s},\{.,.\})\,.
\]
\subsection{Yet another perspective}

Another important fact about the time-ordered product is that it essentially encodes the same combinatorics as the path integral.
In section \ref{sec: time ordering}, for instance, we discussed the Dyson series,
which displays this encoding. 

Hence, as the BV formalism was originally formulated in the path-integral approach, 
it is no surprise that in pAQFT, the BV formalism naturally appears alongside the time-ordered product. 
Formally, we can identify $\TT^H=\TT\circ \alpha_{H} $ with the convolution with the oscillating Gaussian measure of covariance $i\hbar G^{\rm F}$ (recall from Section~\ref{Propagators} that $G^{\rm F}=iG^{\rm D}+H$), i.e.
\[
\mathcal{T}^HF(\varphi)\stackrel{\mathrm{formal}}{=}\int F(\varphi-\phi)d\mu_{i\hbar\Delta_S^F}(\phi)\,.\] 
Again, formally, we would like the quantum BV operator $\hat{s}$ to fulfill 
\[
\int \hat{s}F(\varphi-\phi)d\mu_{i\hbar\Delta_S^F}(\phi)=\int \delta_S(F(\varphi-\phi)d\mu_{i\hbar\Delta_S^F})(\phi), 
\]
so by analogy
\[
\mathcal{T}^H(\hat{s}F)=\delta_S(\mathcal{T}^HF)\,.
\]
This formula suggests 
\[
\hat{s}=(\mathcal{T}^H)^{-1}\circ \delta_S \circ \mathcal{T}^H=\mathcal{T}^{-1}\circ \delta_S \circ \mathcal{T}\,,
 \]
where the last step follows from the fact that $H$ is a bisolution for the equation of motion operator $P$, so $\alpha_{H} $ commutes with $\delta_S$. 
Here we have yet another way to heuristically motivate the pAQFT definition of the quantum BV operator and its relation to the traditional BV formalism.
 
\subsection{A summary by way of a dictionary}

The following dictionary (spelled out for the free scalar field) encodes the relationships we have unraveled, 
hopefully making it easier to transfer results obtained in one approach to results in the other. 
Note that here (but not elsewhere in the paper) we use the notation of \cite{CoGw} on the CG side.

\begin{narrow}{0in}{0pt}
	\begin{center}
		\begin{longtable}{|l|l|}
			\caption[Dictionary]{Dictionary between the FR and the CG approaches for the free scalar field.\label{dictionary}}\\
			
			\hline \multicolumn{1}{|c|}{\textsc{Fredenhagen-Rejzner}} & \multicolumn{1}{c|}{\textsc{Costello-Gwilliam}} \\ \hline 
			\endfirsthead
			\multicolumn{2}{c}%
			{\emph{\tablename\ \thetable{} -- continued from previous page}} \\
			\hline \multicolumn{1}{|c|}{\textsc{Fredenhagen-Rejzner}} &
			\multicolumn{1}{c|}{\textsc{Costello-Gwilliam}} \\ \hline 
			\endhead
			
			\hline \multicolumn{2}{|r|}{{Continued on next page}} \\ \hline
			\endfoot
			
			\hline \hline
			\endlastfoot	
			\hline
			$M=(\RR^4,\eta)$, $\eta=\mathrm{diag}(1,-1,-1,-1)$&$M=(\RR^4,\1)$\\
			\hline
			\multicolumn{2}{|c|}{\textbf{The space of field configurations}}\\
			\hline
			\multicolumn{2}{|c|}{$\Ecal=\Ci(M,\RR)$}\\
			\hline
			$T\Ecal=\Ecal\times\Ecal_c$, if $\Ecal$ is equipped&  $U\subset M$, $T_c\Ecal(U)=\Ecal(U)\times\Ecal_c(U)$\\
			with the Whitney topology;	here $\Ecal_c\doteq\Ci_c(M,\RR)$&\\
			\hline
			$\F_{\pol}$&smooth/smeared observables $\Sym(\Ecal_c^!)$\\
			\hline
			\multicolumn{2}{|c|}{\textbf{Solutions to field equations: zero locus of a 1-form $dS$ on $\Ecal$}}\\
			\hline
			$dS\in \Gamma(T^*\Ecal)$, where $T^*\Ecal=\Ecal\times \Ecal_c'$&	$dS\in \Gamma(T_c\Ecal)$\\
			\hline
			Free field equation:&Free field equation:\\ $dS(\ph)=(\Box+m^2)\ph=0$	& 	 $dS(\ph)=(\Delta+m^2)\ph=0$\\	
			\hline
			Multilocal polyvector fields: $\mathfrak{PV}_\reg(U)$& $PV(U)$ \\
			\hline
			\multicolumn{2}{|c|}{\textbf{Classical observables}}\\
			\hline
			$\fv\Pfrak_\pol(\Ocal)=(\PV_\pol(\Ocal),\delta_{S})$,& $\Pcal(U)=PV(U)$ as vector spaces,\\
			where  $\delta_{S}\doteq -\iota_{dS}$ (insertion of the 1-form $dS$)& the differential is insertion of $dS$\\
			\hline
			Feynman propagator satisfies: & $G$ is a Green's function for $\Delta+m^2$\\
			$-(\Box+m^2)\circ G^{\mathrm{F}}=-G^{\mathrm{F}}\circ (\Box+m^2)=i\delta$  & $(\Delta+m^2)\circ G=\delta$\\	
			\hline
			\multicolumn{2}{|c|}{\textbf{Wick (normal) ordering operator}}\\
			\hline
			$\TT=e^{\frac{i\hbar}{2}\Dcal_{\mathrm{F}}}$, where $\Dcal_{\mathrm{F}}=\left\langle G^{\mathrm{F}},\frac{\delta^2}{\delta\ph^2}\right\rangle$&$W=e^{\hbar \partial_G}$, where $\partial_G$ is contraction \\ &   with the Green's function $G$\\
			
			\hline
			\multicolumn{2}{|c|}{\textbf{Quantum observables}}\\
			\hline
			$\fv\fA_\pol^q(\Ocal)\doteq (\PV_\pol(\Ocal)[[\hbar]],\hat{s},\Lap)$ where&$\Acal(U) = (PV(U)[[\hbar]],\d-i\hbar \triangle,\{-,-\})$\\
			\quad $\hat{s}=\delta_{S}-i\hbar\Lap$&\\
			$\fv\fA_\pol^q$ can be equipped with a graded commutative& factorization product\\
			 product $\cdot$&\\
			\hline
			There is a map&There is a cochain isomorphism\\ \quad $\TT:\fv\fA_\pol^q(\Ocal)\rightarrow 	\fv\Pfrak_\pol(\Ocal)[[\hbar]]$&\quad$W_U:\Pcal(U)[[\hbar]]\rightarrow \Acal(U)$\\
			that intertwines the differentials,&that deforms the factorization product\\
			and induces a new product on $\fv\Pfrak_\pol[[\hbar]]$\footnote{The right-hand side lives properly in the quantum world, as $\T$ is the time-ordered version of $\star$. On the left-hand side we have quantum observables modeled by classical objects. We can therefore think about the quantization in two ways: either have a simple product, but ``complicated'' observables (LHS), or have simple observables and a complicated product (RHS).}:&as follows\footnote{The product here is denoted by $\circledast$ instead of $\star_\hbar$ of \cite{Cos,CoGw} in order to avoid the collision of notation with the non-commutatuive star product appearing in the Lorenzian case.}:\\
			$F\T G=\TT(\TT^{-1}F\cdot\TT^{-1}G)$&$\al\circledast\beta=e^{-\hbar \partial_G}\left(e^{\hbar \partial_G}\al\cdot e^{\hbar \partial_G}\beta\right)$\\
			\hline
			$\Tcal_n(\Phi(f_1),\dots,\Phi(f_n))(0)\equiv \left<G_n,f_1\otimes\dots\otimes f_n)\right>$&\\
			$G^{(n)}$ is the vev of the time-ordered product&Euclidean Green's functions\\
			of $n$ fields, i.e. the $n$-point Green's function.& (Schwinger functions)\\
			\hline
		\end{longtable}
	\end{center}
\end{narrow}

\section{Outlook and next steps}

In this paper we treated non-renormalized scalar field, so the obvious next steps are to perform renormalization and to generalize to gauge theories. 
We also discuss the possibility of incorporating the Wick rotation into our framework.

\subsection{Interacting field theories}

Renormalization becomes relevant when we introduce interactions by means of time-ordered products.
Take the free quantum theory model $\fA_\reg$. Let $V\in\F_{\reg}$ be an interaction term. 
We deform the star product $\star$ to obtain a new product on $\F_\reg[[\hbar,\lambda]]$ as
\[
F_1\star_{\la V}F_2\doteq R_V^{-1}(R_V(F_1)\star R_V(F_2))\,.
\]
This \textit{interacting} product defines a new quantum model 
\[
\fA^\reg_{\la V}(\Ocal)\doteq (\F_\reg(\Ocal)[[\hbar,\lambda]],\star_{\la V}). 
\]
We now turn to a cochain-level version of this quantum model.

The corresponding deformation of the differential is
$\hat{s}=R_V^{-1}\circ\delta_{S}\circ R_V$.
Now assume that the formal S-matrix is invariant under $\delta_{S}$, i.e.,
\[
\delta_{S}\left(e_\TT^{\frac{i}{\hbar}V}\right)=0,
\]
which is a condition equivalent to the quantum master equation (QME):
\[
\la\delta_{S}V+\frac{1}{2}\{\la V,\la V\}_{\TT}-i\hbar \Lap(V)=0\,. 
\]
When the QME holds, explicit computation shows that
\[
\hat{s}=\delta_{S}+\{.,\la V\}_{\TT}-i\hbar \Lap\,.
\]
We define the interacting quantum observables as the cochain complex
\[
(\fv\mathfrak{PV}_\reg[[\hbar]],\hat{s})\,.
\]
The renormalization problem is then to extend the analysis just outlined from regular observables to non-linear (but local) observables. 

\begin{df}\label{defloc}
A \textbf{local functional} on scalar fields is a smooth functional such that for every field $\ph\in\E$, 
there exists a positive integer $k\in\NN$ and an $f$, a compactly supported function on the jet bundle, such that
\[
F(\ph)=\int_M f(j^k(\ph))d\mu_g\,,
\]
where $j_x^k(\ph)$ is the $k$th jet of $\ph$ at point $x$ and $d\mu_g(x)\doteq \sqrt{-g}d^dx$. The space of local functionals is denoted by $\F_{\loc}$.
\end{df}
%

In Lorentzian signature, a mathematically rigorous framework for renormalization was provided by Epstein and Glaser \cite{EG}. In \cite{FR3} this framework was combined with the BV formalism, 
allowing one to construct physically useful dg quantum models. 

In light of the results of this paper, it is natural to ask whether one can produce a factorization algebra in Lorentzian setting.
Note that \emph{classical} observables form a factorization algebra even in the Lorentzian setting, with no extra work:
solutions to the equations of motion form a sheaf---of possibly singular and infinite-dimensional manifolds, but a sheaf nonetheless---and so functions on solutions forms a factorization algebra.
We hazard the following guess about the quantization of this situation.

\begin{conj}
Epstein-Glaser renormalization determines a factorization algebra deforming the classical observables. 
The restriction to $\Caus(\Mcal)$ determines the dg quantum model of~\cite{FR3}.
\end{conj}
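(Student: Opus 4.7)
The plan is to proceed in direct analogy with the free case treated in this paper, replacing everywhere the exponential time-ordered product on $\Fcal_\reg$ with the Epstein--Glaser renormalized time-ordered products on local functionals, and to arrange the axioms of Epstein--Glaser so that they produce precisely the structure maps of a prefactorization algebra. Concretely, I would first fix a renormalization scheme, i.e., a coherent family of symmetric maps $\Tcal_n$ extending the formal time-ordered powers to~$\F_{\loc}^{\otimes n}$, compatible with causal factorization in the sense that $\Tcal_{n+m}(F_1\otimes\cdots\otimes F_n\otimes G_1\otimes\cdots\otimes G_m)=\Tcal_n(F_1\otimes\cdots)\star\Tcal_m(G_1\otimes\cdots)$ whenever $\supp F_i$ does not lie in the past of any $\supp G_j$. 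For a given interaction $V$ satisfying the quantum master equation, this produces the Bogoliubov map $R_V$, the deformed product $\star_{\la V}$, and the interacting BV differential $\hat s$ on $\mathfrak{PV}_{\loc}[[\hbar,\la]]$, all compatible by the standard pAQFT arguments of \cite{FR3}.

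Next I would turn this data into a prefactorization algebra on $M$ valued in $\Ch(\Nuch)$. For pairwise disjoint opens $U_1,\ldots,U_n\subset V$ (now arbitrary opens, not necessarily in $\Caus(\Mcal)$), one has no global causal ordering available, but the disjointness of supports makes the renormalized time-ordered product of elements of $\mathrm{Obs}_{\loc}^{q,\la V}(U_i)$ well-defined and symmetric in its arguments, just as $\T$ is in the free regular case. One then defines the structure map
\[
\mathscr{F}^{\TT}_{\{U_i\};V}: \bigotimes_i \mathrm{Obs}^{q,\la V}_{\loc}(U_i)\;\longrightarrow\;\mathrm{Obs}^{q,\la V}_{\loc}(V),\qquad X_1\otimes\cdots\otimes X_n\;\longmapsto\; X_1\T\cdots\T X_n,
\]
and checks associativity and equivariance using the causal factorization axiom together with the fact that $\hat s$ is a derivation of $\T$ up to terms involving $\{-,-\}_\TT$, which vanish on disjoint supports by the support properties of the propagators. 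This is exactly the mechanism already exploited in the free case in Section~\ref{subsec: time slice}, lifted to local observables.

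The heart of the proof will be the Weiss cosheaf condition. Here I would imitate Chapter~6 of \cite{CoGw}: given a Weiss cover $\{U_i\}$ of $U$, one should show that the \v Cech complex of $\mathrm{Obs}^{q,\la V}_{\loc}$ computes $\mathrm{Obs}^{q,\la V}_{\loc}(U)$ up to quasi-isomorphism. For the classical theory and for regular observables this follows from the analogous statement for $\Sym$ of a cosheaf of sections; the claim is that Epstein--Glaser renormalization is a deformation of this cosheaf structure that preserves the cohomology of the \v Cech complex, because the renormalized $\Tcal_n$ differ from the regular time-ordered product only by terms supported on (partial) diagonals, which are annihilated after passing to Weiss covers by arbitrarily fine opens. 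The restriction to $\Caus(\Mcal)$ then automatically yields the dg quantum model of \cite{FR3}: on causally convex opens the generators of $\mathscr{F}^{\TT}$ recover the time-ordered product on which the whole FR construction is built, and the differential is $\hat s$ by definition.

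The main obstacle, and the reason the statement is phrased as a conjecture, is the interplay between Epstein--Glaser ambiguities and the Weiss cosheaf condition: different local choices of $\Tcal_n$ on overlapping opens must be shown to glue into a single prefactorization algebra up to quasi-isomorphism, and the renormalization group action must be realized as a gauge symmetry of the resulting factorization algebra. A second technical difficulty is that the extension from $\F_\reg$ to $\F_\loc$ forces one to leave the category of nuclear spaces we used in the free case and work with more delicate functional-analytic models (convenient vector spaces, or an appropriate completion of $\mathfrak{PV}_\loc$), so verifying that the structure maps and the cosheaf condition are continuous in the chosen topology is non-trivial; the solution is likely to parallel the treatment of interacting theories for elliptic complexes in \cite{CG2}, adapted to the Lorentzian, Green-hyperbolic setting using the propagators of Section~\ref{Propagators}.
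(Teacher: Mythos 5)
The statement you are addressing is not proved in the paper at all: it is stated as a conjecture in the ``Outlook and next steps'' section, immediately followed by a remark that the precise relation between Epstein--Glaser renormalization and Costello's renormalization is deferred to future work and may even be ``potentially divergent from the conjecture.'' So there is no proof in the paper to compare against, and your text is, appropriately, a research plan rather than a proof. Much of the plan is sensible and matches the ingredients the paper itself points to (causal factorization as the source of the structure maps, $\T$ as the factorization product on disjoint opens, $\hat s$ as the differential, restriction to $\Caus(\Mcal)$ recovering the FR model), and you correctly identify the cosheaf condition and the renormalization ambiguities as the open issues.

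That said, one step in your sketch is not merely unfinished but appears to be wrong as stated. You claim that the renormalized $\Tcal_n$ differ from the unrenormalized time-ordered product ``only by terms supported on (partial) diagonals, which are annihilated after passing to Weiss covers by arbitrarily fine opens.'' A Weiss cover is required to contain, for \emph{every} finite set of points, an open of the cover containing that set --- including configurations of coincident or nearly coincident points. Refining a Weiss cover therefore never separates you from the diagonals; the \v{C}ech complex of a Weiss cover still sees the diagonal contributions in full, and indeed this is precisely why the renormalization problem does not disappear in the factorization-algebra formulation (in \cite{CoGw,CG2} it is handled by the effective-action/counterterm machinery, not by refining covers). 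A second substantive gap you do not address: the interacting differential $\hat s=\delta_S+\{\cdot,\la V\}_\TT-i\hbar\Lap$ involves the interaction $V$, which is spread over spacetime, so it is not obvious that $\hat s$ preserves the support of observables; making the assignment $U\mapsto \mathrm{Obs}^{q,\la V}_{\loc}(U)$ into a functor at all requires the algebraic adiabatic limit construction of \cite{FR3}, which your sketch omits. Until these two points are resolved, the plan does not establish even the prefactorization structure, let alone the Weiss descent --- which is consistent with the authors' decision to leave the statement as a conjecture.
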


\begin{rem}
We hope to address the precise relation of that renormalization framework to Costello's  \cite{Cos}  in our future work.
This direction of research is potentially divergent from the conjecture above.
\end{rem}

\subsection{Lifting Wick rotation to the algebraic level}
 
The evaluation of time-ordered products of functionals at the zero field gives back the Green functions (of the Lorentzian framework). 
For instance, the $n$-point Green function is given by
$$\left<G_n,f_1\otimes\dots\otimes f_n)\right>=\Tcal_n(\Ocal_{f_1},\dots,\Ocal_{f_n})(0)\,,$$
where $f_i\in\Dcal$ and $i=1,\dots,n$. 

On the other hand, in \cite{Cos,CoGw} the factorization algebras of QFT allow one to reconstruct Schwinger $n$-point functions. In this paper we have seen that the CG approach can be also applied to the Lorentzian case directly. However, it is instructive to see how the two are connected on the level of $n$-point functions on flat spacetime.

The relation between the Euclidean and the Lorentzian framework is usually established via analytic continuation of Schwinger $n$-point functions using the Osterwalder-Schrader axioms \cite{OS73}. The relation of Schwinger functions to time-ordered products has been discussed in \cite{EE79}.
%
%
We expect that one should be able to formulate the Wick rotation on the level of factorisation algebras (or nets).  We want to address this issue in our future work. 

\subsection{Gauge theories}

It is in the context of gauge and gravity theories that the BV formalism demonstrates its full capacities and qualities,
and it would be natural to develop analogues of the results here in those contexts.

The case of abelian gauge theories---where are free theories, albeit cohomological in nature---can be treated by almost identical methods;
renormalization is not needed.
In \cite{CoGw} there is extensive discussion of the case of pure abelian Chern-Simons theory 
and of its factorization algebra. Its  AQFT counterpart has been constructed in \cite{DMS17}. 

The work of \cite{BSS14} is also quite relevant in this context.

More generally, the BV quantization of Yang-Mills theories and effective gravity has been performed in \cite{FR,FR3,BFRej13} (based on earlier results of \cite{H}), where the appropriate dg algebras are explicitly given and the need for dg models becomes truly manifest. In these cases, to tackle nonabelian gauge theories or to couple to matter fields requires renormalization,
and so
the methods, along the lines discussed above for interacting scalar theories, are necessary. 

We expect that comparison results, analogous to the ones obtained in the present work, will be easy to prove, provided that the renormalization schemes are shown to be equivalent.

\section*{Acknowledgments}

This project grew out of interactions at the Mathematisches Forschungsinstitut Oberwolfach in May 2016,
and we are grateful for the comfortable and stimulating atmosphere provided by the Workshop ``Factorization Algebras and Functorial Field Theories'' and our fellow participants.
Subsequently we have benefited from the chance to interact in person at the Max Planck Institute for Mathematics, Bonn, and the Perimeter Institute.
The authors would like to thank both institutions for their remarkable hospitality and financial support. This research is also supported by the EPSRC grant \verb|EP/P021204/1|.

On a more personal level, we have benefited from the insights, experience, suggestions, feedback, and questions of many people.
Clearly, Klaus Fredenhagen and Kevin Costello play a crucial role in our perspectives on QFT.
In our work on this particular paper, we have enjoyed discussions with Pavel Safronov, Theo Johnson-Freyd, Marco Benini, Alex Schenkel and participants at the conference ``Modern Mathematics of Quantum Theory'' in York.

\section*{Appendix}

\def\hotimes{\widehat{\otimes}}
\def\coker{{\rm coker}\/}
\def\colim{{\rm colim}\,}

In this appendix we prove Theorem~\ref{painintheass}.
Its statement only involves nuclear spaces, 
but our proof will detour through the setting of bornological and convenient vector spaces. 
We introduce this machinery to mimic a standard approach to proving this kind of result with finite-dimensional vector spaces.
As this result is the only place where we need this brand of functional analysis,
we isolate it here in the appendix and do not undertake any exposition of it.
We recall some key facts and notations here,
following Appendix B of~\cite{CoGw} or \cite{Book},
but a much more systematic overview of this setting can be found in~\cite{Her18}, 
which also provides useful applications to QFT.
The reader who wishes to obtain familiarity with this setting should also examine the standard reference~\cite{Michor}.

\subsection{Recollections}

Following~\cite{CoGw}, let 
\begin{itemize}
\item $LCTVS$ denote the category whose objects are locally convex Hausdorff topological vector spaces and whose morphisms are continuous linear maps,
\item $BVS$ denote the category in which each object is the underlying bornological vector space of an object in $LCTVS$ and whose morphisms are the bornological linear maps, and
\item $CVS$ denote the full subcategory of $BVS$ consisting of the $c^\infty$-complete bornological spaces.
\end{itemize}
There are functors relating these categories.
There is a functor $born: LCTVS \to BVS$ sending a topological vector space to the underlying bornological vector space,
since a continuous linear map automatically preserves bounded sets and hence is bornological.
This functor $born$ has a left adjoint $inc_\beta: BVS \to LCTVS$ that equips a bornological vector space with the finest locally convex topology with the same bounded sets.
Similarly, the inclusion $inc_c: CVS \to BVS$ has a left adjoint $c^\infty: BVS \to CVS$,
which can be viewed as a kind of completion.
(See Theorems 2.15 and 4.29 of~\cite{Michor}. A more extensive treatment is Chapter 2 of \cite{FroKri},
but note they use ``preconvenient'' for what we call bornological.)

\begin{lemma}
A locally convex vector space is convenient if and only if it is Mackey complete.
If $V \in LCTVS$ is complete, then its underlying bornological space $born(V)$ is $c^\infty$-complete.
\end{lemma}

\begin{proof}
The first statement is Theorem 2.14 of~\cite{Michor}.
Lemma 2.2 of~\cite{Michor} shows that  implies Mackey-completeness,
and a complete topological vector space is sequentially complete.
\end{proof}

This result ensures  that the topological vector spaces arising from differential geometry, such as 
\begin{itemize}
\item smooth sections of a vector bundle,
\item compactly supported smooth sections of a vector bundle,
\item distributional sections of a vector bundle, and
\item compactly supported distributional sections of a vector bundle,
\end{itemize}
are {\em convenient} vector spaces, when one considers their underlying bornological vector space.

In the category $BVS$, the algebraic tensor product $V \otimes_{alg} W$ can be equipped with a natural bornology that co-represents bornological bilinear maps $f: V \times W \to U$;
we denote the associated bornological space by $V \otimes_\beta W$.
The category $CVS$ inherits a natural tensor product by completing the bornological tensor product:
for $V,W \in CVS$, we have
\[
V \hotimes_\beta W = c^\infty( inc_c(V) \otimes_\beta inc_c(W)).
\]
This tensor product has an appealing property when applied to our favorite examples.

\begin{lemma}[\cite{Her18}, Proposition 2.3.13]
Let $E \to X$ and $F \to Y$ be finite-rank vector bundles on smooth manifolds (not necessarily compact).
Let $E \boxtimes F \to X \times Y$ denote the vector bundle obtained by tensoring the pullback to $X \times Y$ of the bundle $E$ with the pullback to $X \times Y$ of the bundle $F$.
Then
\[
\Gamma_c(X, E) \hotimes_\beta \Gamma_c(Y,F) \cong \Gamma(X \times Y, E \boxtimes F)
\]
in $CVS$. 
In particular, $\Dcal(X) \hotimes_\beta \Dcal(Y)\cong \Dcal(X \times Y)$.
\end{lemma}

This result is a consequence of more foundational results.

\begin{prop}[\cite{Michor}, Proposition 5.8]
If $V,W \in LCTVS$ are metrizable, then 
\[
V \otimes_\pi W = V \otimes_\beta W,
\]
where $\otimes_\pi$ denotes the projective tensor product.
\end{prop}

As a corollary, we have the following crucial result. 

\begin{cor}
\label{cor:pivsbeta}
For $V,W$ Fr\'echet spaces, $V \hotimes_\pi W = V \hotimes_\beta W$.
\end{cor}

Here it is important to note that a Fr\'echet space is convenient, by Theorem 4.11 of~\cite{Michor}.

\begin{proof}
By Result 52.23 of \cite{Michor}, we know that in a metrizable locally convex vector space, 
the convergent sequences coincide with the Mackey-convergent sequences. 
Hence, the usual metric space completion agrees with the Mackey-completion and thus with the $c^\infty$-completion. 

Taking the metric space completion of $V \otimes_\pi W$ is thus equivalent to the completion to a convenient vector space. Because $V \otimes_\pi W = V\otimes_\beta W$, we obtain the result.
\end{proof}

This result has important consequences for LF spaces, i.e., those topological vector spaces that arise as a sequential colimit of Fr\'echet spaces.
For example, the compactly supported smooth functions $\Dcal(M)$ is typically topologized as an $LF$ space, as follows. 
Pick a sequence of compact subsets $M_1 \subset M_2 \subset \cdots$ such that $M = \bigcup_i M_i$ (an ``exhaustive'' sequence). Then $\Dcal$ is the sequential colimit of the Fr\'echet spaces $C^\infty(M_i)$, the smooth functions on $M$ whose support is contained in~$M_i$.
Note that for $N$ another manifold with an exhaustive sequence of compact subsets $N_i$,
the products $M_i \times N_i$ provide an exhaustive sequence for $M \times N$.
Hence
\[
\Dcal(M \times N) = \colim \Dcal( M_i \times N_i)
\]
but we know 
\[
\Dcal( M_i \times N_i) =  \Dcal( M_i) \hotimes_\pi \Dcal(N_i) = \Dcal( M_i) \hotimes_\beta \Dcal(N_i).
\]
Thus 
\[
\Dcal(M \times N) = \colim \Dcal( M_i) \hotimes_\beta \Dcal(N_i) = \Dcal( M) \hotimes_\beta \Dcal(N)
\]
since completion is a left adjoint and hence preserves colimits.
This argument justifies the lemma we opened with, when revised to encompass sections of vector bundles.

\subsection{The theorem and the proof strategy}

We now state a slightly different and more informative version of the theorem.

\begin{thm}
The commutative dg algebra $(PV(M), \d)$ of classical observables on a globally hyperbolic space $\Mcal=(M,g)$ determines a cochain complex of convenient vector spaces,
since in each cohomological degree the underlying bornological vector space is already convenient.
The differential is unchanged, as well.

This cochain complex in $CVS$ has cohomology 
\[
H^*(PV(M), \d) = 
\begin{cases} \Sym_{\widehat{\otimes}_\beta}(V'), & * = 0 \\ 0, & * \neq 0 \end{cases}
\]
where $V' = \Dcal/P(\Dcal)$ and where
\[
 \Sym_{\widehat{\otimes}_\beta}(V') = \bigoplus_{n=0}^\infty ((V')^{\widehat{\otimes}_\beta n})_{S_n},
\]
the symmetric algebra of $V'$ as a convenient vector space.
\end{thm}

Implicit in the statement of the theorem is that $V'$ is already convenient when viewed as a bornological space. 
For this claim, see lemma~\ref{vprimeisLF} below.

We now explain why this theorem implies Theorem~\ref{painintheass}.
The inclusion functor $inc_\beta \circ inc_c: CVS \to LCTVS$ preserves colimits
and hence cokernels.
As the cochain complex is concentrated in nonnegative degrees and the cohomology is concentrated in degree zero,
the only nontrivial cohomology is a cokernel.
Hence this computation in $CVS$ determines the cohomology as a cochain complex in $LCTVS$,
which implies Theorem~\ref{painintheass}.

We now outline the steps for proving this theorem,
which we pursue in the rest of the appendix.

Because the differential preserves the symmetric powers, 
the cochain complex decomposes as a direct sum of cochain complexes
\[
(PV(M), \d) = \left( \bigoplus_{n=0}^\infty PV_n(M), \d_n \right),
\]
where $PV_n(M)$ denotes the graded vector space built from distributions on $M^n$,
which we view as the polynomial observables that are homogeneous of degree~$n$.
To prove the theorem,
it thus suffices to prove that for each summand, 
the cohomology is concentrated in degree $0$ and is isomorphic to~$\Sym_{\otimes_\beta}^n(V')$.

The argument is by induction on $n$.
The base case $n=1$ is a fact of analysis:
we need to know that 
\[
0 \to \Dcal \xto{P} \Dcal \to V' \to 0
\]
is a short exact sequence.
The induction step, however, boils down to homological algebra,
notably a version of the acyclic assembly lemma.
This classic result says that the tensor product of two acyclic complexes is again acyclic,
and we will develop a version in our situation.

\subsection{The base case and the convenient structure of~$V'$}

Recall the following result.

\begin{prop}[BGP, Theorem 3.4.7]
Let $M$ be a globally hyperbolic manifold.
Let $P$ be a Green-hyperbolic operator, and let $G = G_+ - G_-$ denote the difference of its advanced and retarded Green's operator.
There is an exact sequence 
\[
0 \to \Dcal \xto{P} \Dcal \xto{G} \Ecal \xto{P} \Ecal
\]
of nuclear spaces.
\end{prop}

From it, we now deduce the base case.

\begin{lemma}\label{lemma:BGP}
Let $M$ be a globally hyperbolic manifold, and let $V$ denote the kernel of a Green-hyperbolic operator $P: \Dcal' \to \Dcal'$ on $M$. Then
\[
0 \to \Dcal \xto{P} \Dcal \to V' \to 0
\]
is a short exact sequence of nuclear spaces.
\end{lemma}

\begin{proof}
The exactness on the left side follows immediately from the preceding proposition.
We now turn to the right hand side.

As a first step, we want to see that there is a natural map $\Dcal/P(\Dcal) \to V'$.
Note that the image of $P$ on $\Dcal$ is closed by the closed graph theorem of Section 17 of Tr\`eves,
so this cokernel $\Dcal/P(\Dcal)$ is nuclear, by Proposition 50.1 of Tr\`eves. 
Since $V \subset \Dcal'$, there is a canonical restriction map $r: \Dcal\to V'$.
By the Hahn-Banach theorem, the restriction map is surjective.
Moreover, since $P$ annihilates $V$, the kernel of the restriction map $r$ contains the image $P(\Dcal) \subset \Dcal$ and so this restriction map factors through the cokernel $\Dcal/P(\Dcal)$. 
Hence we have a canonical surjective continuous linear map 
\[
\bar{r}: \Dcal/P\Dcal \to V'.
\]
To finish the lemma, we need to show $\bar{r}$ is an isomorphism.

It remains to show $\bar{r}$ is injective.
To see this, we need to show that if $r(f)= 0$, there exists $g \in \Dcal$ such that $f = Pg$.
But by [BGP, Theorem 3.4.7],
$f \in \im(P)$ if and only if $f \in \ker(G: \Dcal \to \Ecal)$,
so it suffices to show that $Gf = 0$.
Consider $G^*: \Ecal' \to \Dcal'$, the dual to $G: \Dcal \to \Ecal$,
and note that under the continuous inclusion $i: \Dcal \hookrightarrow \Ecal'$,
we find
\[
G^*i(h) = -Gh
\]
for any $h \in \Dcal$.
By [BGP, Theorem 3.4.7], $P(Gh) = 0$, and so for any $h \in \Dcal$, we find $P(G^* i(h)) = 0$ and hence $G^*i(h) \in V$. 
Thus $G^*i(\Dcal)$ is contained in $V$.
Hence for any $f$ such that $r(f) = 0$ and $h \in \Dcal$, we have
\[
0 = \langle r(f), G^* i(h) \rangle = \langle Gf, h \rangle,
\] 
where the first dual pairing is between $V$ and $V'$ and the second is between $\Dcal'$ and $\Dcal$,
which implies $Gf = 0$, as needed.
\end{proof}

We note that this exact sequence in $LCTVS$ provides, without change, an exact sequence in $BVS$.
We know, moreover, that $\Dcal(U)$ is convenient, so that the beginning of the sequence sits in $CVS$.
It is not immediately obvious, however, that $V'$ is convenient.

\begin{lemma}
\label{vprimeisLF}
The cokernel $V'$ is a nuclear LF space. 
As such, it is complete as a topological vector space and hence convenient.
In other words, 
\[
0 \to \Dcal \xto{P} \Dcal \to V' \to 0
\]
is a short exact sequence in $CVS$.
\end{lemma}

\begin{proof}
Recall that $\Dcal$ is a nuclear LF space,
as shown in the corollary just before Theorem 51.6 of~\cite{Tre67}.
Let us describe the structure explicitly.
Pick a countable sequence of compact subsets $M_1 \subset M_2 \subset \cdots$ such that $M = \bigcup_i M_i$. Then $\Dcal$ is the sequential colimit of the nuclear Fr\'echet spaces $\Dcal(M_i)$, the smooth functions on $M$ whose support is contained in $M_i$. A differential operator, such as $P$, preserves support, so that $P$ sends $\Dcal(M_i)$ to itself. Hence the action of $P$ on $\Dcal$ is the colimit of $P$ acting on this sequence. The image of $P$ on $\Dcal(M_i)$ is closed and so the cokernel $\Dcal(M_i)/P(\Dcal(M_i))$ is Fr\'echet. As colimits commute, the cokernel of $P$ on $\Dcal$ is the colimit of these cokernels. Hence this cokernel is also $LF$.
As an $LF$ space is complete as a locally convex topological vector space (Theorem 13.1 of \cite{Tre67}),
it is convenient, since completeness as a topological vector space is stronger than $c^\infty$ completeness. 
\end{proof}

\subsection{The induction step}

The essential fact we need to show is the following.

\begin{prop}\label{prop:nfold}
For any positive integer $n$, 
there is a canonical quasi-isomorphism
\[
(0 \to \Dcal \xto{P} \Dcal \to 0)^{\hotimes_\beta n} \xto{\simeq} (V')^{\hotimes_\beta n}
\]
between the $n$-fold completed projective tensor product of the complex for the operator $P$ and the $n$-fold completed projective tensor product of $V'$, the continuous linear functionals on distributional solutions $V$ to~$P$.
\end{prop}

This result is the crucial step toward the result we really want.

\begin{thm}
There is a canonical quasi-isomorphism
\[
\Sym^n_\beta(\Dcal[1] \xto{P} \Dcal) \xto{\simeq} \Sym^n_{\beta}(V')
\]
where $\Sym^n_\beta$ denotes the $S_n$-invariant subspace of the $n$-fold completed bornological tensor product, where $S_n$ acts by permuting the tensor factors.
\end{thm}

Note that the (continuous) action of $S_n$ decomposes the (completed) $n$-fold tensor product into a direct sum of factors, each of which is a cochain complex. 
By considering the invariant component, we obtain the theorem from the proposition.

The linchpin for proving the proposition is the following result, whose proof is modeled on the acyclic assembly lemma (see Lemma 2.7.3 of \cite{Weibel}) and is deferred to the end of the appendix.
Recall that a complex is {\em acyclic} if its cohomology is zero.

\begin{lemma}
\label{acyclic}
Let  $A^\bullet$ be an acyclic bounded complex in $CVS$ such that each space $A^k$ is a nuclear LF space.
Let $B^\bullet$ be another bounded complex in $CVS$, where each component $B^k$ is a nuclear LF space.
Then the total complex of the tensor product $A^\bullet \hotimes_\beta B^\bullet$ is an acyclic bounded complex in~$CVS$.
\end{lemma}

As a corollary of this lemma, we find a useful statement by induction on $n$, with Lemma~\ref{lemma:BGP} as the base case.

\begin{cor}\label{lem:induction}
For any positive integer $n$, the $n$-fold completed projective tensor product
\[
(0 \to \Dcal \xto{P} \Dcal \to V' \to 0)^{\hotimes_\beta n}
\]
is acyclic.
\end{cor}

The proof is by induction on $n$, with Lemma \ref{lemma:BGP} as the base case.

To show proposition~\ref{prop:nfold}, however,
we need one further observation about acyclic complexes.

For any acyclic complex $A^\bullet$ that is bounded above, such as
\[
\cdots \to A^{k-2} \xto{\d} A^{k-1} \xto{\d} A^k \to 0 \to 0 \to \cdots,
\]
consider the truncation that keeps only the components in degrees below~$k$:
\[
\tau^{< k} A^\bullet = \cdots \to A^{k-2} \xto{\d} A^{k-1} \to 0,
\]
and also consider the truncation that only keeps components in degrees above~$k-1$:
\[
\tau^{\geq k} A^\bullet = \cdots \to 0 \to A^k \to 0 \to \cdots
\]
consisting of $A^k$ in degree $k$, since $A^k$ is the last nonzero component.
There is a natural cochain map
\[
\tau^{< k} A^\bullet \xto{\phi_\d} (\tau^{\geq k} A^\bullet)[1]
\]
that in degree $k-1$ consists of $A^{k-1} \xto{d} A^k$ and that is the zero map in all other degrees.
This cochain map is a quasi-isomorphism by inspection.

In our situation, Lemma~\ref{lemma:BGP} gives us the bounded acyclic complex,
and so we see
\begin{equation}\label{step1}
(0 \to \Dcal \xto{P} \Dcal \to 0) \xto{\simeq} V'
\end{equation}
the desired quasi-isomorphism when $n=1$.

Moreover, by lemma~\ref{acyclic} and taking $B = V'$, 
we see that
\[
(0 \to \Dcal \xto{P} \Dcal \to V' \to 0) \hotimes_\beta V'
\]
is acyclic and hence that there is a quasi-isomorphism
\begin{equation}\label{step2}
(0 \to \Dcal \hotimes_\beta V' \xto{P \hotimes_\beta \id} \Dcal \hotimes_\beta V'\to 0)   \xto{\simeq} V' \hotimes_\beta V'.
\end{equation}
We have almost obtained proposition~\ref{prop:nfold} for $n=2$.

Consider now the double complex produced by the tensor product
\[
(0 \to \Dcal \xto{P} \Dcal \to V' \to 0) \hotimes_\beta (0 \to \Dcal \xto{P} \Dcal \to 0).
\]
Each ``row'' $(0 \to \Dcal \xto{P} \Dcal \to V' \to 0) \hotimes_\beta \Dcal$ is acyclic by lemma~\ref{acyclic}, 
so we can use our truncation trick to produce a quasi-isomorphism
\begin{equation}\label{step3}
(0 \to \Dcal \xto{P} \Dcal \to 0)^{\hotimes_\beta 2} \xto{\simeq} (0 \to \Dcal \xto{P} \Dcal \to 0) \hotimes_\beta V'.
\end{equation}
(This amount to truncating at the rightmost ``column,'' which is the one produced by tensoring with $V'$.)
Putting equations \eqref{step1}, \eqref{step2}, and \eqref{step3} together, we have the proposition for~$n=2$.

We now record this argument in general.

\begin{proof}[Proof of proposition~\ref{prop:nfold}]
We use induction on~$n$.

Lemma \ref{lemma:BGP} implies 
\[
(0 \to \Dcal[1] \xto{P} \Dcal \to 0) \xto{\simeq} V'
\]
is a quasi-isomorphism,
which is the base case~$n=1$.

\def\Cone{{\rm Cone}}

Denote the cochain complex $(0 \to \Dcal \xto{P} \Dcal \to 0)^{\hotimes_\beta n}$ by $C_n^\bullet$.
As the induction step, suppose we have a quasi-isomorphism
\begin{equation}
\label{indhyp}
\Phi_n: C_n^\bullet \xto{\simeq} (V')^{\hotimes_\beta n}.
\end{equation}
where the map is zero except in degree zero, 
where it goes from $\Dcal^{\hotimes_\beta n} \to (V')^{\hotimes_\beta n}$.
We wish to produce a quasi-isomorphism
\[
\Phi_{n+1}: C_{n+1}^\bullet \xto{\simeq} (V')^{\hotimes_\beta n+1}.
\]
Consider the cone of the map~\eqref{indhyp},
\[
\Cone(\Phi_n) = (C_n^\bullet[1] \xto{\Phi_n} (V')^{\hotimes_\beta n}),
\]
which is acyclic because $\Phi_n$ is a quasi-isomorphism.
This acyclicity assures us that $C_n^\bullet \hotimes_\beta V'$ is also acyclic,
by the arguments used for lemma~\ref{acyclic}.
Hence we can extract a quasi-isomorphism
\[
C^\bullet_n \hotimes_\beta V' \xto{\simeq} (V')^{\hotimes_\beta n+1}.
\]
It remains to exhibit a quasi-isomorphism
\[
C^\bullet_n \hotimes_\beta V' \xto{\simeq} C^\bullet_{n+1}.
\]
To see this,  note that $(0 \to \Dcal[1] \xto{P} \Dcal \to V' \to 0)$ is acyclic and so the tensor product
\[
(0 \to \Dcal[1] \xto{P} \Dcal \to V' \to 0) \hotimes_\beta (0 \to \Dcal[1] \xto{P} \Dcal \to 0)^{\hotimes_\beta n}
\]
is acyclic,
from which one can read off the desired quasi-isomorphism.
\end{proof}

\subsection{Proof of lemma~\ref{acyclic}}

We now prove the key fact about acyclic complexes in a series of small steps.
First, we examine how kernels and cokernels behave in this setting,
since cohomology is a quotient of a kernel.

\begin{lemma}
The completion functor $c^\infty: BVS \to CVS$ is a left adjoint and hence preserves colimits.
In particular, if $A: V \to W$ is a bornological linear map, then $c^\infty(W/A(V)) \cong c^\infty(W)/c^\infty(A(V))$.
That is, cokernels are preserved.
\end{lemma}

The case of kernels is more subtle, and so we prove the relevant fact only in the setting we need.

\begin{lemma}
Let $Q: A \to A'$ be a continuous linear operator between nuclear LF spaces. 
Let $B$ be another nuclear LF space.
Then there is a canonical isomorphism 
\[
(\ker Q) \hotimes_\beta B \cong \ker(Q \hotimes_\beta \id_B)
\]
of nuclear LF spaces, and hence of convenient vector spaces.
\end{lemma}

\begin{proof}
Fix a sequence of Fr\'echet spaces $A_1 \to A_2 \to \cdots $ such that $A = \colim A_k$,
and similarly sequences of Fr\'echet spaces $A'_1 \to A'_2 \to \cdots $ and $B_1 \to B_2 \to \cdots $
such that $A' = \colim A'_k$ and such that $B = \colim B_k$.
We note that $A \hotimes_\beta B$ is $\colim A_k \hotimes_\beta B_k$, since we can combine the colimits and then take the ``diagonal'' sequence therein.
Likewise,  $A' \hotimes_\beta B = \colim A'_k \hotimes_\beta B_k$

Let $K$ denote the null space of $Q$ on $A$, 
and let $K_k$ denote the null space of $Q$ restricted to $A_k$.
Let $L$ denote the null space of $Q \hotimes_\beta \id_B$ on $A \hotimes_\beta B$, 
and let $L_k$ denote the null space of $Q \hotimes_\beta \id_B$ restricted to $A_k \hotimes_\beta B$.
We wish to show that $L$ is isomorphic to $K \hotimes_\beta B$.

Note that $K \hotimes_\beta B$  is the colimit of $K_k \hotimes_\beta B_k$, since we can combine the colimits and then take the ``diagonal'' sequence therein. It is useful to work with this sequence.

For instance, consider the canonical map of short exact sequences
\[
\begin{array}{ccccccc}
0 & \to & K_k \otimes_\beta B_k & \to & A_k \otimes_\beta B_k & \xto{Q \otimes \id} & A'_k \otimes_\beta B_k\\
& & \downarrow &  &  \downarrow &  &  \downarrow\\
0 & \to & L_k & \to & A_k \hotimes_\beta B_k & \xto{Q \hotimes_\beta \id} & A'_k \hotimes_\beta B_k
\end{array}
\]
that arises from including the uncompleted tensor products into the completed tensor products. 
The two vertical maps on the right are dense inclusions, 
which implies the leftmost vertical map $f_k: K_k \otimes_\beta \Dcal(N_k) \to L_k$ is dense too.
Now the completion-inclusion adjunction $c^\infty \dashv inc_c$ tells us that we obtain a canonical map
\[
\widehat{f}_k: K_k \hotimes_\beta \Dcal(N_k) \to L_k,
\]
through which the map $f_k$ factors.
By the argument for corollary~\ref{cor:pivsbeta}, the convenient completion agrees with the completion as a topological vector space,
so this map $\widehat{f}_k$ is an isomorphism.

As we have exhibited an isomorphism for every index $k$, we have an isomorphism between the sequential colimits, as desired.
\end{proof}

We can now prove the simplest version of lemma~\ref{acyclic}.

\begin{lemma}\label{cor:acyclic}
Let  $A^\bullet$ be an acyclic bounded complex in $CVS$ such that each space $A^k$ is a nuclear LF space.
Let $B$ be  a nuclear LF space.
Then the complex $A^\bullet \hotimes_\beta B$ is an acyclic bounded complex in~$CVS$.
\end{lemma}

\begin{proof}
Let $d^n: A^n \to A^{n+1}$ denote the differential in $A^\bullet$.
For each $n$, we have an isomorphism ${\rm im}(d^n) \xto{\cong} \ker(d^{n+1})$ because $A^\bullet$ is acyclic.
By the preceding lemma, we know
\[
\ker(d^{n+1}) \hotimes_\beta B = \ker(d^{n+1} \hotimes_\beta \id_B),
\]
so we just need to show that 
\[
d^n \hotimes_\beta \id_B(A^n \hotimes_\beta B) = d^n(A^n) \hotimes_\beta B
\]
to verify that $A^\bullet \hotimes_\beta B$ is acyclic.
But this follows from that fact that images are cokernels and $-\hotimes_\beta B$ preserves cokernels.
\end{proof}

With this result in hand, we can prove our acyclic assembly lemma. 

\begin{proof}[Proof of lemma~\ref{acyclic}]
We manifestly obtain a bounded cochain complex of nuclear LF spaces (and hence in $CVS$),
since the total complex of a tensor product of bounded complexes is bounded.
Hence it remains only to prove the acyclicity.

Let $P$ denote the differential on $A$ and $Q$ the differential on $B$. 
Consider the $n$th degree component of the tensor product:
it is a direct sum
\[
\bigoplus_{i+j = n} A^i \hotimes_\beta B^j,
\]
and its differential is
\[
\d^n = \sum_{i+j = n} P^i \otimes \id_{B^j} + \id_{A^i} \otimes Q^j.
\]
Any degree $n$ element $x$ has the form
\[
\sum_{i+j = n} x_{ij}
\]
where $x_{ij} \in A^i \hotimes_\beta B^j$.

Note that for each degree $k$, the complex $A^\bullet \hotimes_\beta B^k$ is acyclic,
by lemma~\ref{cor:acyclic}.
Here the differential is just $P \hotimes_\beta \id_{B^k}$.

Choose an element $x$ in the kernel of the total differential.
We wish to show it equals $\d^{n-1}(y)$ for some element $y$ of total degree~$n-1$.
The key is to use a kind of staircase or zigzag argument.
(In other words, we now recapitulate the argument of the usual acyclic assembly lemma, 
but in this $CVS$ setting.)

Due the boundedness, there is some biggest index $K$ such that $A^k = 0$ for all $k > K$ but $A^K \neq 0$.
Hence, we know that the component
\[
x_{K(n-K)} \in A^{K} \hotimes_\beta B^{n-K},
\]
is annihilated,
\[
P^{K}\hotimes_\beta \id_{B^{n-K}} (x_{K(n-K)}) = 0,
\]
since $A^{K+1} = 0$.
By the assumption of acyclicity, there is some element $y_{(K-1)(n-K)} \in A^{K-1} \hotimes_\beta B^{n-K}$ such that
\[
P^{K-1}\hotimes_\beta \id_{B^{n-K}} (y_{K-1,n-K})=  x_{K(n-K)}.
\]
(From now on we will drop indices from the differentials $P$ and $Q$ but keep them on the elements.)
Because $x$ is in the kernel of the total differential,
we know 
\[
\id_A \hotimes_\beta Q(x_{K(n-K)}) = P\hotimes_\beta \id_{B}(x_{(K-1)(n-K+1)}),
\]
which implies that 
\begin{equation}\label{eq:xy}
P\hotimes_\beta \id_{B}(x_{(K-1)(n-K+1)}) = P \hotimes_\beta Q (y_{(K-1)(n-K)} ).
\end{equation}
This equation \eqref{eq:xy} implies that the difference $x_{(K-1)(n-K+1)} -  \id_A \hotimes_\beta Q (y_{K-1,n-K})$ is in the kernel of $P\hotimes_\beta \id_{B}$.
By acyclicity with respect to $P$, one can again produce a preimage $y_{(K-2)(n-K+1)}$.
By hypothesis, the sum $y_{(K-1)(n-K)} + y_{(K-2)(n-K+1)}$ has the property that its image under the total differential is
\[
x_{K(n-K)} + x_{(K-1)(n-K+1)} + \id \otimes Q(y_{(K-2)(n-K+1)}).
\]
Hence we have produced a partial trivialization of~$x$. 

By repeating the process just shown, we can work further up the staircase, 
introducing terms $y_{(j-1)k}$ until we produce $y$ such that $x = \d^{n-1}(y)$.
This process terminates because the double complex is bounded.
This shows the completed tensor product is acyclic as well.
\end{proof}

\end{document}